\newcommand{\eval}[2][\right]{\relax
	\ifx#1\right\relax \left.\fi#2#1\rvert}
\newcommand{\R}{{\mathbb R}}
\newcommand{\Dcal}{{\mathcal D}}
\newcommand{\Ical}{{\mathcal I}}
\newcommand{\Rcal}{{\mathcal R}}
\newcommand{\Wcal}{{\mathcal W}}
\newenvironment{keywords}{
  \noindent\textbf{Keywords:}
  \itshape
}{\par}
\newtheoremstyle{break}
  {\topsep}{\topsep}%
  {\itshape}{}%
  {\bfseries}{}%
  {\newline}{}%
\theoremstyle{break}
\theoremstyle{plain}
\newtheorem{theorem}{Theorem}
\newtheorem{prop}{Proposition}
\theoremstyle{definition}
\newtheorem{defn}{Definition}
\theoremstyle{remark}
\newtheorem{rem}{Remark}
\title{Sparse Portfolio Selection via Topological Data Analysis-Based Clustering}
\author[1]{Anubha Goel}
\author[2]{Damir Filipovi\'c}
\author[1,3]{Puneet Pasricha\thanks{Corresponding author. Email: puneet.pasricha@epfl.ch}}
\affil[1]{\'Ecole Polytechnique F\'ed\'erale de Lausanne}
\affil[2]{\'Ecole Polytechnique F\'ed\'erale de Lausanne and Swiss Finance Institute}
\affil[3]{Department of Mathematics, Indian Institute of Technology Ropar, India}
\begin{document}
\maketitle

\onehalfspacing



\begin{abstract}
This paper uses topological data analysis (TDA) tools and introduces a data-driven clustering-based stock selection strategy tailored for sparse portfolio construction. Our asset selection strategy exploits the topological features of stock price movements to select a subset of topologically similar (different) assets for a sparse index tracking (Markowitz) portfolio. We introduce new distance measures, which serve as an input to the clustering algorithm, on the space of persistence diagrams and landscapes that consider the time component of a time series. We conduct an empirical analysis on the S\&P index from 2009 to 2022, including a study on the COVID-19 data to validate the robustness of our methodology. Our strategy to integrate TDA with the clustering algorithm significantly enhanced the performance of sparse portfolios across various performance measures in diverse market scenarios.
\end{abstract}
\begin{keywords}
Portfolio optimization, topological data analysis, clustering techniques, index tracking, Markowitz model, sparse portfolio construction, investment strategies
\end{keywords}


\section{Introduction}

Asset selection and portfolio optimization are critical tasks for financial institutions such as banks, fund management firms, and individual investors as they seek to manage risk and maximize returns in an increasingly competitive business environment and complex market conditions. Over the years, academics and practitioners have actively developed quantitative techniques and models to assess risk and construct efficient portfolios that perform well out-of-sample. In this pursuit, sparse portfolio selection has become a well-known and extensively researched problem in the fields of financial economics and optimization, where sparseness is a desirable property not only from the point of view of transaction costs but also because it allows investors to concentrate their capital on assets that are most likely to generate returns while minimizing exposure to unforeseen risks or market fluctuations. Moreover, it also offers a potential solution to the limitations of traditional mean-variance (\cite{dai2018some,gunjan2023brief}) or index-tracking portfolios, as it focuses on a smaller subset of assets to improve portfolio performance. Recently, data-driven approaches (\cite{benidis2017sparse,baes2021low,li2022sparse}) have been successfully developed for active and adaptive strategies on both index-tracking and mean-variance problems with sparse portfolios.\footnote{Despite its popularity, mean-variance analysis has received criticism for its susceptibility to parameter estimates and assumptions. It is observed that small changes in assumed asset returns, volatilities, or correlations lead to substantial effects on the optimization results, ultimately leading to sub-optimal performance when tested against real-world data. In essence, the classic Markowitz portfolio optimization represents an ill-posed inverse problem. Several techniques have been proposed to address this challenge, including regularization techniques imposing sparseness (\cite{ban2018machine,kremer2020sparse}) and adopting shrinkage estimators of expected returns and covariance matrix (\cite{lee2020sparse,pun2019linear,chen2023distributionally}).}$^{,}$\footnote{In index tracking, the goal is to construct a portfolio of securities that replicates the performance of a market index. The simplest method to achieve this goal is through full replication, which entails holding all the assets in the target market index in the same proportions. While full replication provides perfect tracking performance in a frictionless market, it is expensive, requiring frequent re-balancing of many stocks. It is sometimes not possible due to restrictions on trading illiquid securities or managerial constraints. An alternative approach is partial replication, which aims to mimic the index performance using a small subset of securities.} 

The problem of sparse portfolio selection can be reformulated as a two-step problem, with the first step of selecting a subset of securities based on their risk and return characteristics and the second step of obtaining the optimal weights. In this direction, clustering methods have emerged as one of the most innovative strategies in portfolio selection in recent years. This technique for classifying data objects into homogeneous groups based on their similarities proved to be effective in constructing sparse portfolios. For instance, \cite{panton1976comovement} studied the comovement of international equity markets using cluster analysis, although their goal was to study the dynamics of this structural comovement and not portfolio optimization. \cite{hui2005portfolio} used factor analysis to screen out certain stock markets before creating a Markowitz portfolio. More recently, \cite{leon2017clustering}
tested several clustering algorithms, including hierarchical clustering, using correlation coefficients as similarity measure to group stocks and create optimal portfolios. In this paper, we adopt clustering to select a smaller set of securities that provide diversification benefits to the portfolio.

Other methods in the literature for sparse portfolio selection includes regularization techniques (imposing a constraint on the $\ell_q$-norm of portfolio weights) (refer \cite{brodie2009sparse, demiguel2009generalized, wu2024sparse}) or including a cardinality constraint on the number of securities to be included in the portfolio to promote sparsity and stable allocations (\cite{anagnostopoulos2011mean,kalayci2020efficient,shi2022cardinality,xu2024efficient}). \cite{brodie2009sparse} was the first to integrate an $\ell_1$-norm (LASSO) penalty on the vector of portfolio weights in optimization models to regularize the optimization problem and encourage sparse portfolios. \cite{demiguel2009generalized} demonstrate that norm-constrained global minimum variance portfolio yields improved out-of-sample performance. \cite{giamouridis2010regular} and \cite{giuzio2018tracking} utilize $\ell_1$ penalty in the context of replicating hedge funds. \cite{ho2015weighted} considered the mean-variance optimization with an advanced regularization technique called weighted Elastic-Net. The $\ell_1$ penalty and its variants have also been studied in \cite{kremer2020sparse,corsaro2019adaptive} to obtain sparse portfolios. Recently, \cite{wu2024sparse} considered fraction regularization $\frac{\ell_1}{\ell_2}$ on the mean–variance model to promote sparse portfolio selection.

We highlight that the clustering based two-step approach has several advantages over the above-mentioned alternative methods. For instance, adding a cardinality constraint significantly increases the combinatorial complexity of the optimization problem, resulting in a mixed-integer quadratic programming problem that is generally NP-hard and computationally intractable. Heuristic methods, such as genetic algorithms, are typically required to find approximate solutions, which may not guarantee global optimality (\cite{ruiz2009hybrid,woodside2011heuristic,fastrich2014cardinality}). Similarly, the regularization technique involves determining the regularization coefficients through cross-validation, which becomes particularly challenging when each security has its regularization parameter; the approach  that has been shown to outperform the approach with a common regularization parameter (\cite{ho2015weighted}). Further, these methods tend to result in risk concentration as the resulting portfolio may not be diversified. This can lead to over-exposure to a single industry sector, thus making the portfolio riskier due to vulnerability to a downturn in that sector (\cite{kremer2020sparse,wu2024sparse}). Furthermore, due to local optima and discontinuities in the search space, both the cardinality constraint problem and the $\ell_q$-norm constraint problem are difficult to solve from an optimization standpoint. Finally, the regularization technique LASSO suffers from several known shortcomings, such as selecting at random from equally correlated assets (\cite{bondell2008simultaneous}), being biased for large coefficient values (\cite{fan2001variable,giuzio2018tracking}), and being ineffective in the no-short-sale area (\cite{demiguel2009generalized}). In contrast, clustering can result in a more diversified and sparse portfolio without increasing the computational burden. Moreover, clustering allows for the combination of judgment and quantitative optimization as it is an intuitive approach.

The key to any clustering technique is the distance function that enables it to identify similarities and dissimilarities among the inputs in the data set (time series of stock returns in our case) and allocate them to different clusters. Traditionally, correlation-based clustering has been employed to locate clusters in the portfolio of financial assets, for instance, \cite{mantegna1999hierarchical,leon2017clustering}. However, it is well known that Pearson’s correlation is not a suitable measure of dependence to be used outside the Gaussian (elliptical) distributions, and, in particular, it does not give an accurate indication and understanding of the actual dependence between risk exposures (\cite{embrechts2003using}). Furthermore, even if two assets are perfectly correlated, an investment in one can lead to positive and negative returns in the other. In this paper, we employ a recently introduced approach known as Topological Data Analysis (TDA) to define the similarity between two time series and propose an efficient data-driven approach to sparse portfolio selection.

TDA offers a novel tool to study the shape and qualitative properties of complex datasets which often have non-linear relationships and highly complex dependence structure. The core idea behind TDA is that data has a shape that conveys valuable information.\ TDA describes the ``shape" of the data by extracting topological features, characterized by $k$-dimensional holes, where $k=0,1,2$ represents connected components, holes, and voids, respectively. TDA, by focusing on the topological features of the data, captures information that traditional methods such as correlation may miss. Thus, studying the latent geometric behavior of assets could reveal clusters of assets with similar dynamics. What's more, a key advantage of TDA is its capability to extract topological features with minimum statistical assumptions, and the only requirement being the computation of distances between various points in the point cloud.

Persistence homology is one of the powerful tools in TDA that provides a comprehensive understanding of the structure of complex data by analyzing its shape across multiple resolutions while simultaneously tracking the changes in its topological features.  The information on all topological features with their birth and death is then summarized in a persistence diagram. Notably, persistence diagram, obtained using a point-cloud corresponding to a time-series, can capture various features of the data. For instance, higher fluctuations in time-series often result in more complex topological structures, while structural breaks or different regimes can alter the shape of the data. Further, the higher the concentration (scattering) of the point cloud, and
thus the more (less) stable the returns are over the sub-intervals from a topological point of view. By this way, the topological properties of the return data are learnt to capture the variability  of returns from the topological point of view.

A comparison among persistence diagrams corresponding to different assets could reveal interesting insights that correlation may miss. While correlation measures linear dependence time-series of two assets, TDA focuses on their qualitative topological properties. TDA based clusters may align with those obtained using correlation, but the former can reveal additional insights particularly in scenarios with non-linear dependencies or structural breaks, a typical behavior observed the financial assets. Thus, clustering persistence diagrams could help identify clusters of assets with similar risk-return properties, leading to more informed portfolios.


Recent developments in TDA have shown significant potential in extracting topological features of data, leading to an improved performance when combined with other models, such as traditional statistical models or machine learning models. The foundation of TDA was laid down by the pioneer contributions of \cite{edelsbrunner2000topological} and \cite{zomorodian2005computing}, while an overview paper by \cite{carlsson2009topology} turned out to be cornerstone. Thereafter, TDA has been widely applied in various fields, such as time-series clustering and classification (\cite{majumdar2020clustering,karan2021time}), computer science 
(\cite{pereira2015persistent}), biology (\cite{lum2013extracting}), semi-conductors manufacturing (\cite{ko2023novel}). TDA only started seeing application in finance very recently. For example, \cite{gidea2017topological} and \cite{gidea2018topological} used TDA to detect crash early warnings, \cite{goel2020topological} applied persistence landscapes to create portfolios. \cite{wu2022topological}, \cite{ferri2018topology} and \cite{moroni2021learning} integrated TDA and machine learning with persistence diagrams and studied their applications in various fields including finance. \cite{qiu2020refining} studied corporate failure by mapping Altman's z-score through TDA, while \cite{ofori2021nonparametric} used TDA to detect anomalies in time-series of graphs. Recently, \cite{tdapairtrading} proposed TDA-based distance to measure dependence between stochastic processes and used those distance measures in pair trading. Motivated by the success of TDA, this paper argues that topological features can help identify exciting time series patterns and spatial data clustering. In particular, insights can be gained by examining the essential qualitative characteristics of data sets that are often overlooked by traditional distance-based clustering techniques.  To our knowledge, this paper is the first to present a pioneering application of TDA to address the sparse portfolio selection using clustering.

Our paper makes several methodological and empirical contributions that address the aforementioned issues. First, despite the considerable amount of research that has been conducted on applying TDA to study financial time-series, there still needs to be practical implementation and utilization of this method in the field of quantitative finance.\footnote{Recently, \cite{goel2023sparse} proposed TDA based sparse index tracking model where the authors adopt the regularization approach to achieve sparseness.} We leverage tools from TDA, namely persistence diagrams and landscapes, and design strategies for sparse portfolio selection. By doing so, we establish a link between the growing research on TDA and the most frequently studied portfolio selection problem in the financial economics literature. Thus, we contribute to the emerging literature on applications of TDA to the problem of selecting sparse portfolios.

Second, our research draws attention to a critical limitation of distance metrics defined on persistence diagrams when applied to financial time-series data. Specifically, we show that the Wasserstein distance between the PDs corresponding to the return series of two hypothetical assets is zero, even when the two return series are not identical. We identify the reason for this behavior as the critical time component in financial data. To address this issue, we introduce two novel distance measures. The first, called the differenced Wasserstein distance (DWD), defines the distance between the persistence diagrams of two time series, ${x_t}$ and ${y_t}$, as the Wasserstein distance between the persistence diagram of the differenced time series ${x_t-y_t}$ and that of a constant series. The second called the average Wasserstein distance (AWD), partitions the two time series into overlapping sub-series and calculates the average Wasserstein distance between the corresponding sub-series. We also introduce similar measures, including the average landscape distance (ALD) and the differenced landscape distance (DLD) on the space of persistence landscapes. Moreover, we also demonstrate the stability of the persistence diagrams concerning the newly introduced distance measures, thus contributing to the literature on TDA.

Third, we utilize the introduced distance measures to propose two strategies for sparse portfolio selection. Our first strategy focuses on constructing a sparse index tracking (IT) portfolio, while our second strategy aims to select sparse mean-variance (MV) and global minimum variance (GMV) portfolios. To accomplish this, we create a distance matrix corresponding to the index and its $n$ constituents and define the corresponding similarity matrix. We then use this matrix as input for affinity propagation clustering (APC). We select the assets that share the cluster with the index for the sparse IT portfolio, which are topologically similar to the index by construction. We solve an index tracking problem on the selected assets. We select representative assets from each cluster for the sparse MV and GMV portfolios and solve the portfolio optimization problem on the filtered assets, thus obtaining a sparse portfolio. Our TDA-based similarity matrices and the APC offer an entirely data-driven approach, unlike the methodologies that employ $k$-mean clustering for sparse portfolio selection, which requires investors to pre-specify the number of assets in the portfolio.

We conduct an extensive empirical analysis, investigating the performance of the sparse index tracking, MV, and GMV portfolios obtained using the daily returns of the S\&P 500 index and its constituents spanning more than a decade from Dec 2009 to July 2019. We further assess the performance of our strategies during the recent COVID-19 Pandemic period from August 2019 to August 2020 to check the robustness. We further utilize the period, September 2020 to August 2022, for comparison
purposes, in particular, to other clustering approaches such as $K$-medoids and hierarchical clustering. We also compare the performance of our sparse index tracking model on the benchmark data set for index tracking problems available publicly on the Beasley OR library. Our analysis shows that our TDA-based sparse portfolios outperform the traditional correlation-based sparse portfolios as well as other leading benchmark models in the literature.

The rest of this paper is structured as follows. Section \ref{background} briefly reviews TDA. Section \ref{newdistances} introduces the novel distance measures and demonstrates their stability. Section \ref{our_methodology} describes the methodology, starting with the problem of sparse portfolio selection, followed by a brief review of affinity propagation clustering, and finally, our portfolio selection strategies. Section \ref{empirical_results} presents the data and the empirical findings. Section \ref{conclusion} concludes with potential future research directions.

\section{Background of TDA}\label{background}

Inspecting the 2-dimensional point clouds in Figure \ref{example_pointcloud}, it is evident that each point cloud has a distinct shape.{\footnote{A point cloud is a finite set of points in some Euclidean space $\mathbb{R}^d$ (or a more general metric space).}} Specifically, the points in the first point cloud are scattered around one loop, while those in the second point cloud are dispersed around two loops. Given our inability to visualize beyond two or three dimensions, the natural question is how to formalize these qualitative observations, particularly the shapes of point clouds in higher dimensions. Fortunately, TDA provides a rigorous tool to systematically define and study the ``shape" of data in high dimensional spaces.

\begin{figure*}[t!]
    \centering
    \includegraphics[height=4.5cm,width=9cm]{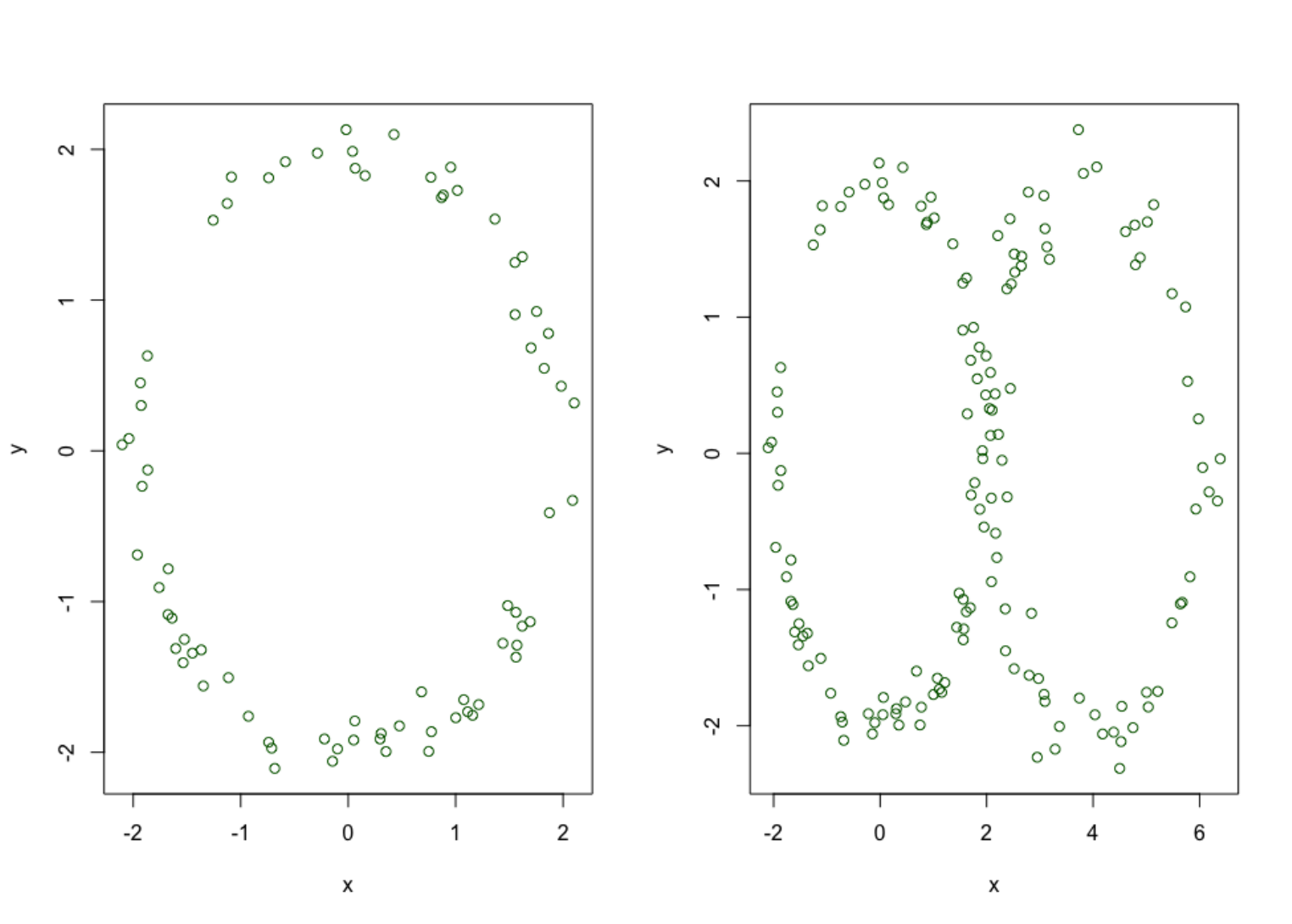}
    \caption{Examples of two-dimensional noisy point clouds.}
    \label{example_pointcloud}
\end{figure*}

The first step involves building a simplicial complex given the point cloud $X=\{x_i\}^N_{i=0}$ in $\mathbb{R}^q$. A simplicial complex is a space comprising points, edges, triangles, tetrahedra, and higher-dimensional polytopes. In the abstract sense, a simplicial complex is a finite set $S$ of non-empty subsets of a set $S_0$ such that $\{v\} \in S$ for every $v \in S_0$, the intersection of any two simplices in $S$ is either empty or shares faces, and $\tau \subset \sigma$ and $\sigma \in S$ ensure that $\tau \in S$.{\footnote{A $k$-simplex is a convex hull $[a_0, a_1, \ldots , a_k]=\{\sum_{i=0}^k\eta_ia_i: \sum_{i=0}^k\eta_i=1\}$ of $k+1$ geometrically independent points $\{a_0, a_1, \ldots , a_k\}$. Furthermore, the faces of a $k$-simplex $[a_0,\ldots,a_k]$ are the $(k - 1)$-simplices spanned by subsets of $\{ a_0,\ldots,a_k\}$. Note that 0-dimensional simplices represent data points (vertices), 1-dimensional simplices are connected pairs of vertices (edges), and a 2-dimensional simplex is a filled triangle determined by its three vertices.}} There are several ways to construct a simplicial complex for a given point cloud, including the Cech complex and the Vietoris-Rips complex. We adopt the Vietoris-Rips complex in this paper. The underlying idea of the Rips complex is to connect the pairs that are sufficiently close to points (vertices) by edges. Mathematically, 
\begin{defn}
Given a resolution parameter $\epsilon >0$, the Vietoris--Rips complex  of $X$ is defined to be  the simplicial complex $\Rcal_{\epsilon}(X)$ satisfying $[x_{i1},\ldots,x_{il}]\in \Rcal_{\epsilon}(X)$ if and only if $diam(x_{i1},\ldots,x_{il})< \epsilon$. 
\end{defn}

The next question is how to extract the ``shape" of the point cloud from the simplicial complex. TDA defines this shape by examining the topological features present in this complex. More specifically, TDA extracts the topological features characterized by $k$-dimensional holes, where $k=0,1,2$ represents connected components, holes, and voids, respectively, and uses these features as a surrogate to describe the shape of the point cloud.

Another important question revolves around the choice of $\epsilon$, i.e., how do we define ``sufficiently close" points to construct a simplicial complex? The choice of $\epsilon$ is vital as a very large value can lead to just one connected component as all the points get connected to each other, hence a loss of any meaningful topological features. On the other hand, we might not see any interesting topological features when $\epsilon$ is too small due to no edge formation. TDA offers a systematic solution to address this question. Instead of making random choices of $\epsilon$ that could impact our conclusions, TDA involves analyzing the shape of the point cloud across various resolutions by studying the evolution of topological features as a function of $\epsilon$. To accomplish this, we construct a sequence of Vietoris–Rips simplicial complexes for $0=\epsilon_0<\epsilon_1<\ldots<\epsilon_n$ with  
$\{\epsilon_n\} \in \mathbb{R}^+\cup\{0\}$. We call this sequence of complexes, a Vietoris--Rips filtration and denote it by $\{\Rcal_{\epsilon_n}(X)\}_{n\in\mathbb{N}}$. 
Within this filtration, due to several topological transformations, such as the closure of existing holes or the emergence of new edges, several topological features would appear (birth) and disappear (death) in the corresponding simplicial complex. We finally assign each topological feature a `birth' and `death' value. The persistence (age) of these features provides a  meaningful characterization of the overall shape of the data; the features with longer persistence are considered more robust, significant, and representative of the shape of the point cloud. Conversely, features with shorter persistence are considered less significant or noise and can be filtered out if necessary.

Figures \ref{rips} depict this step-by-step procedure, also known as persistence homology, for obtaining the filtration from the point cloud. The idea is to encircle each point in the point cloud with $\epsilon$-sized balls, gradually expanding $\epsilon$ until all the balls merge, resulting in one connected component and keeping track of birth and death times of various topological features, such as connected components, loops, and voids.

\begin{figure}[t!]
		\begin{center}
			\subfigure[ $\epsilon_1$=0.2, loops=0]{%
				\includegraphics[height=4cm,width=6cm]{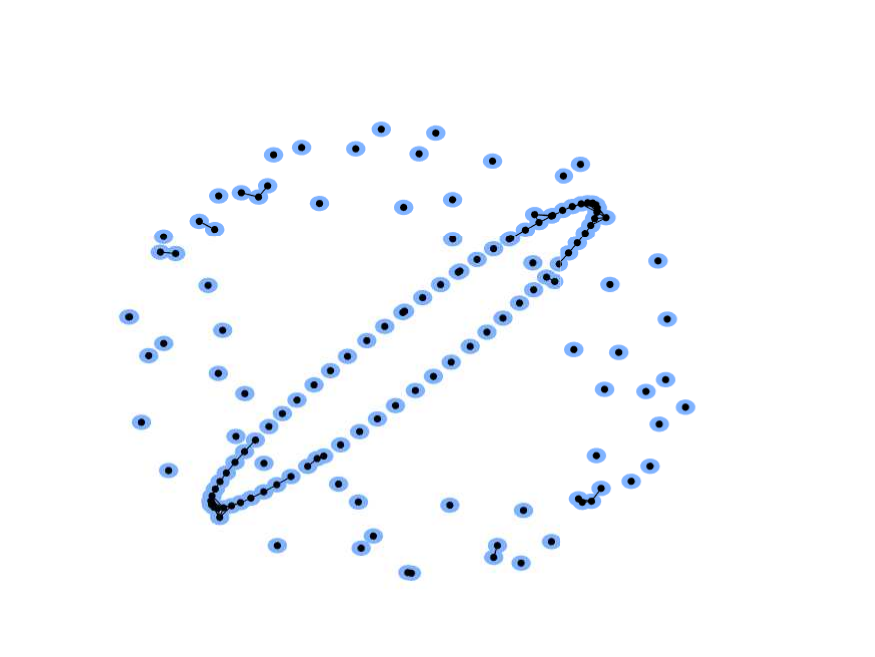}
				\label{rips1}}
			\quad
			\subfigure[ $\epsilon_3$=0.6, loops=1]{%
				\includegraphics[height=4cm,width=6cm]{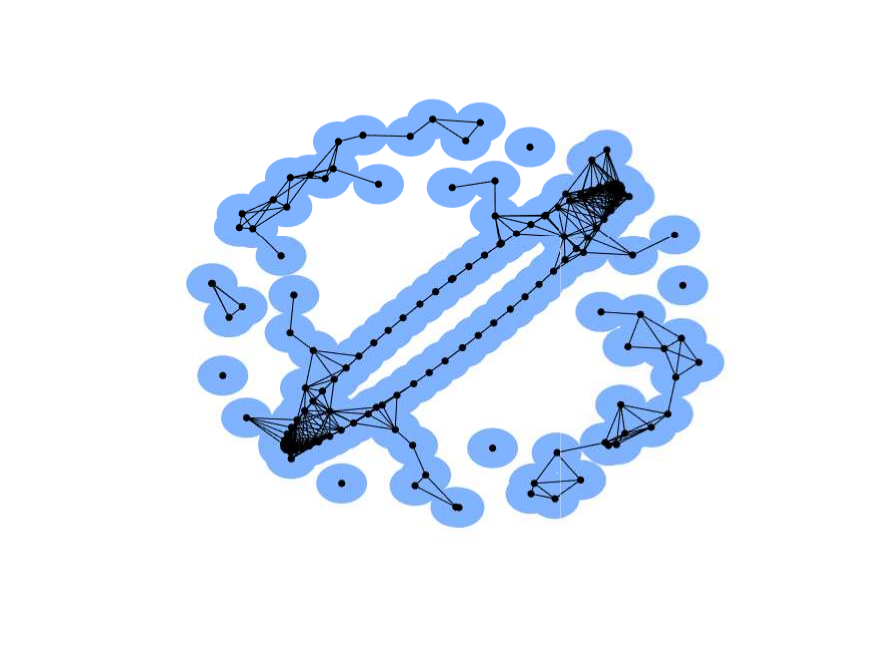}
				\label{rips2}}
			\quad
			\subfigure[ $\epsilon_5$=1.5, loops=2]{%
				\includegraphics[height=4cm,width=6cm]{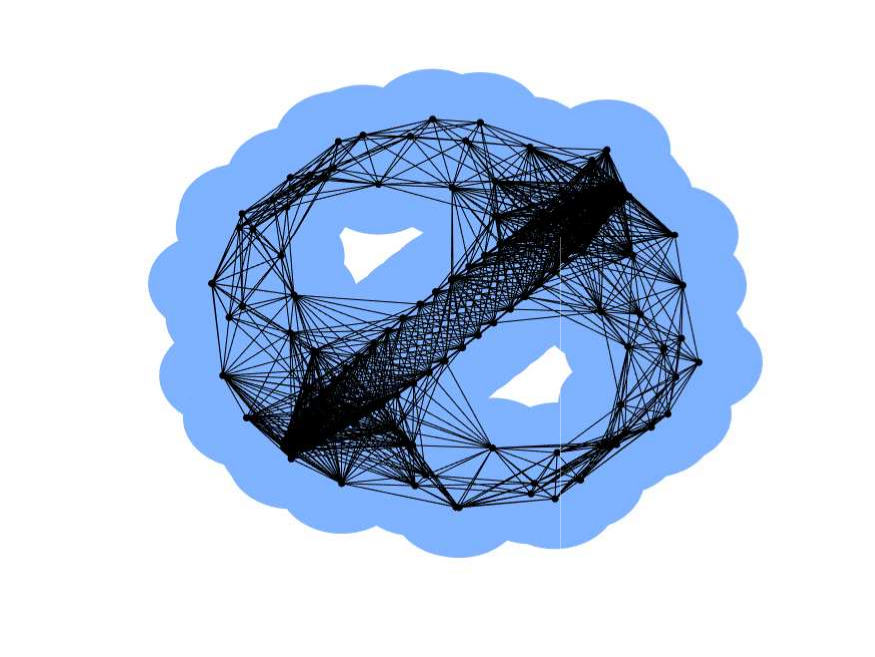}
				\label{rips33}}
			\quad
			\subfigure[ $\epsilon_6$=2.5, loops=0]{%
				\includegraphics[height=4cm,width=6cm]{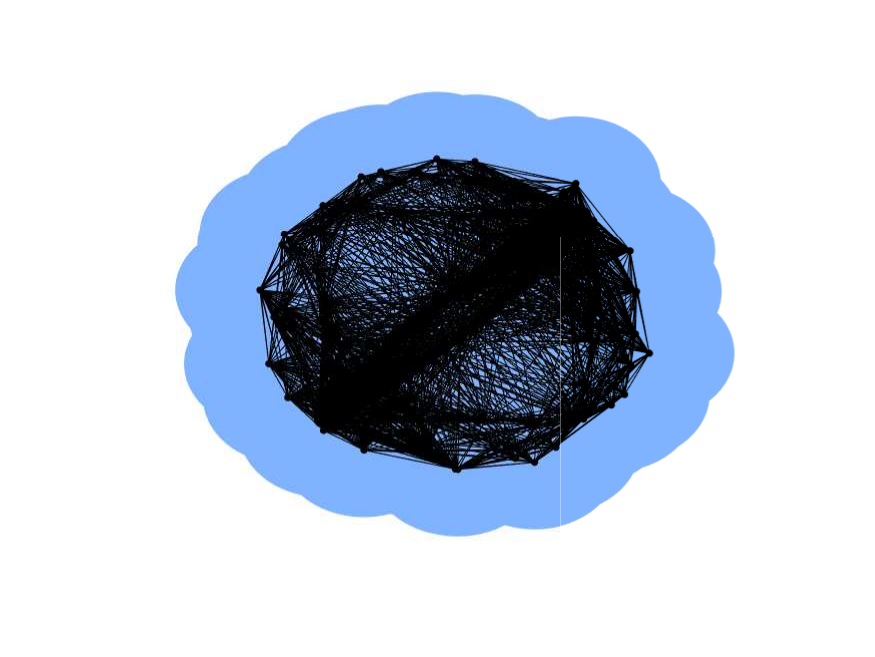}
				\label{rips44}}
		\end{center}
		\caption{ A diagram depicting the Rips filtration process. Each point in the point cloud is equipped with an equal-sized ball, with the radius of the ball serving as the filtering parameter. By increasing the value of the radius, a succession of nested simplicial complexes is formed. It causes characteristics like connected components and holes to emerge and vanish. The initial loop appeared in \ref{rips2} and died in \ref{rips33}. In \ref{rips33}, two additional substantial large holes can be observed, which die in \ref{rips44}.}\label{rips}
	\end{figure}
Having obtained the Rips filtration, we summarize the birth and death times of topological features in a persistence diagram (PD) $\Dcal$, which is a multi-set of points in $\mathbb{W} \times \{ 0,1,\ldots,q - 1\}$,  where $\mathbb{W} :=\{(b,d)\in \mathbb{R}^2 : d\geq b\geq 0\}$ and each element $(b,d,f)$ represents a topological feature of dimension $f$ that appeared at $\epsilon=b$ and disappeared at scale $\epsilon=d$ during the construction of filtration. Figure \ref{2a} gives the persistence diagram corresponding to the filtration in Figure \ref{rips}. When we study persistence diagrams for data analysis, one must be able to compare persistence diagrams. Therefore, it becomes necessary to introduce a distance metric between two persistence diagrams to describe their relation. The similarity between two persistence diagrams is commonly measured by $p$-Wasserstein distance, defined as

\begin{defn}
The $p$-Wasserstein distance \cite{cohen2010lipschitz} between two persistence diagrams $\Dcal_{ {X}}$ and $\Dcal_{ {Y}}$ corresponding to data sets $X$ and $Y$ is defined as 
\begin{equation*}
WD_p(\Dcal_{ {X}},\Dcal_{ {Y}})=\left( \inf_{\gamma} \sum_{a \in {\Dcal_{ {X}}}} ||a-\gamma(a) ||_{\infty}^p \right)^{\frac{1}{p}}. \end{equation*}
where $p\geq 1$ and $\gamma$ ranges over all bijections from the points in $\Dcal_{ {X}}$ to the points in $\Dcal_{ {Y}}$. For $p=\infty$, $p$-Wasserstein distance is called the bottleneck distance. 
\end{defn}
Though persistence diagrams contain potentially valuable information about the underlying data set, it is a multi-set which, when equipped with Wasserstein distance, forms an incomplete metric space and hence is not appropriate to apply tools from statistics and machine learning for processing topological features (\cite{bubenik2015statistical}). A persistence landscape is an alternative tool to convey the information in the persistence diagram. It comprises a sequence of continuous and piece-wise linear functions defined on a re-scaled birth-death coordinate, with the peaks of a persistence landscape representing the significant topological features. Given the birth-death pairs $p_i(a_i, b_i),~i=1,\ldots,m$, a persistence landscape, illustrated in Figure \ref{2b}, is defined as a sequence of functions $\lambda: \mathbb{N}\times\mathbb{R}\rightarrow [0,\infty)$, denoted as $\lambda(k, t) = \lambda_k(t)$. Here, $\lambda_k(t)$ represents the $k$-th largest value among the set $\{\Lambda _{p_i}(t) \mid i=1,\ldots,m\}$ and  
we set $\lambda_k(x) = 0$, ensuring that $\lambda_k(t) = 0$ for $k > m$ if the $k$-th largest value does not exist. Here, for a fixed feature dimension $f$, we associate a piece-wise linear function $\Lambda _p(t) : \mathbb{R}\rightarrow [0,\infty)$ with each birth-death pair $p = (a, b) \in \Dcal$, where $\Dcal$ is the persistence diagram, as follows:
\begin{eqnarray}
\Lambda _p(t) = \left\{\begin{array}{ll} t-a &\quad t \in [a,\frac{a+b}{2}] \\[0.5em]
b-t &\quad t \in [\frac{a+b}{2},b]\\[0.5em]
0 &\quad \mbox{otherwise}.
\end{array}\right. \label{pl}
\end{eqnarray}
Simply put, a persistence landscape is obtained by rotating the persistence diagram 45 degrees clockwise and then drawing right-angle isosceles triangles, treating the birth-death pair in the persistence diagram as a right-angle vertex. Finally, individual functions are traced out from this collection of right triangles, such that the first ($k$th) landscape function is the point-wise ($k$th) maximum of all the triangles drawn. 

One of the notable advantages of the persistence landscape representation, as compared to persistence diagrams (refer to \cite{bubenik2015statistical, bubenik2018persistence}), is that it forms a subset of the Banach space $L^p(\mathbb{N} \times \mathbb{R})$, which comprises sequences $\lambda= (\lambda_k)_{k \in \mathbb{N}}$. This set possesses an inherent vector space structure and becomes a Banach space when equipped with a norm that measures the degree of ``spread out" or ``concentration" exhibited by the landscape functions. The norm of a persistence landscape characterizes a point cloud by a single numerical value, and a higher norm value indicates the presence of more significant topological features in the underlying point cloud and is given by 
\begin{align}
	||\lambda||_p=\left( \displaystyle \sum_{k=1}^\infty ||\lambda_k||^p_p \right)^{\frac{1}{p}},
	\end{align}
where $||\cdot||_p$ is the $L^p$-norm.
 \begin{figure}[t!]
		\begin{center}
			\subfigure[Persistence diagram]{%
				\includegraphics[height=5cm,width=5cm]{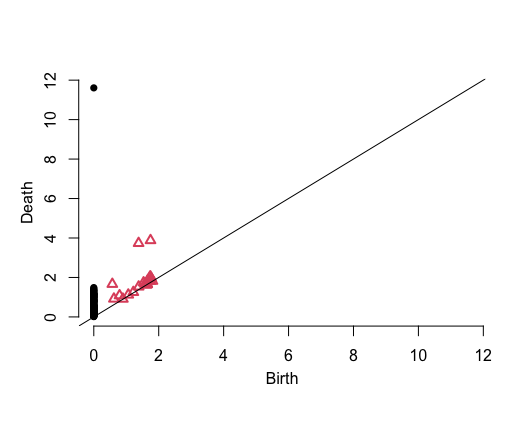}
				\label{2a}}
			\quad
			\subfigure[ Persistence Landscape ]{%
				\includegraphics[height=5cm,width=5cm]{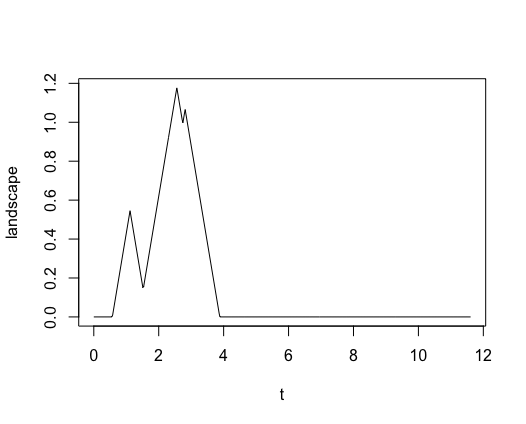}
				\label{2b}}
			\end{center}
		\caption{~ (a) The persistence diagram associated to Figure \ref{rips}; the
dots in it represent the birth and death of a feature; for instance, three off-diagonal red dots represent the birth and death of three significant holes in Figure \ref{rips}; ~(b) the corresponding persistence landscape.}\label{persistence}
	\end{figure}

Given that TDA operates on point clouds in multi-dimensional spaces, and our focus in this paper is on time series of stock returns, which lack natural point cloud representations, we must devise a method to obtain a point cloud from a time series. To achieve this, we adopt the standard procedure, called Taken's embedding (\cite{takens1981detecting}) that embeds a time series $x=\{x_t,t=1,\ldots, T\}$ to a multi-dimensional representation as follows:
\begin{figure}[ht!]
		\begin{center}
			\subfigure[ Time series]{%
\includegraphics[height=4cm,width=6cm]{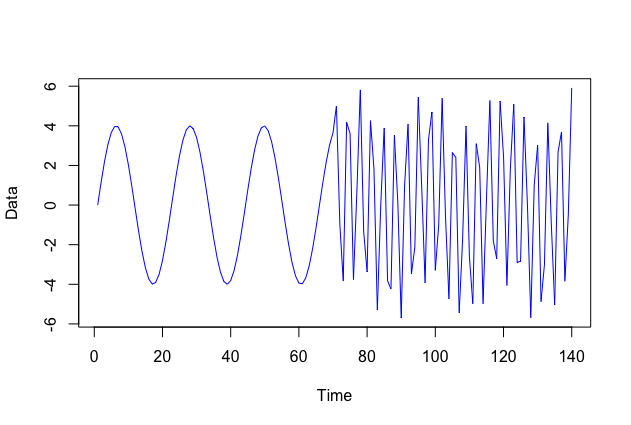}
				\label{1a}}
			\quad
			\subfigure[ Point cloud ]{%
\includegraphics[height=4cm,width=6cm]{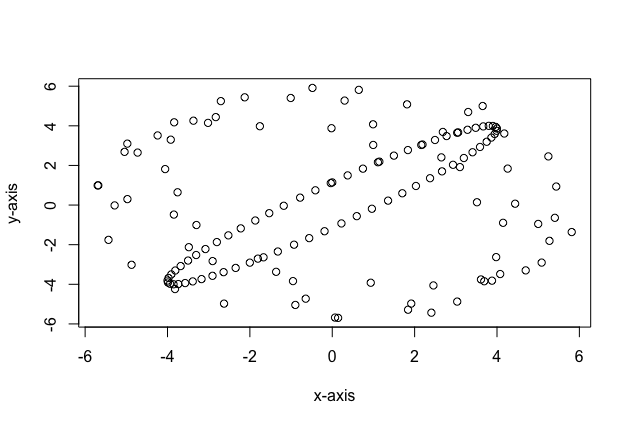}
				\label{1b}}
			\end{center}
		\caption{Time series and point cloud in $\mathbb{R}^2$ constructed using Takens' embedding}\label{takendata}
\end{figure}

    \begin{equation}
	{X} = \left[
	\begin{array}{c}
	X_{1}\\
	X_{2}\\
	\vdots\\
	X_{T-(d-1)\tau}\\
	\end{array}
	\right]=\left[
	\begin{array}{cccc}
	x_{1}& x_{1+\tau}\dots  &x_{1+(d-1)\tau}\\
	x_{2}& x_{2+\tau} \dots   &x_{2+(d-1)\tau}\\
	\vdots & \vdots & \vdots \\
	x_{T-(d-1)\tau} & x_{T-(d-2)\tau} \dots  &x_{T}\\
	\end{array}
	\right],\label{eq:point_cloud}
	\end{equation}
where $d$ is the dimension of reconstructed space (embedding dimension), $\tau$ is the time delay, and $T - (d - 1)\tau$ is the number of points in the new space. The row of the matrix $X$ denotes a point in the new space. We present the idea of Taken's embedding in Figure \ref{takendata}, where the first plot shows the time series. In contrast, the second plot presents the corresponding 2-dimensional point cloud, obtained using Taken's embedding with $d=2$ and $\tau=1$.

\section{New distance measures}\label{newdistances}

The notion of distance is inherently linked to the characteristics of the underlying data. For instance, the distance between two financial time series may differ from that between two DNA sequences. Nonetheless, distance measure aims to capture some sense of similarity, and often, one can employ the distance measure inherited from the reference space. However, if that is not the case, one must use a more suitable metric. For example, to compare persistence diagrams, Wasserstein distance (or bottleneck distance) is traditionally used due to their stability concerning perturbations of the input \cite{gidea2018topological,gidea2017topology,goel2020topological}. However, considering the time factor while dealing with financial time series is crucial, which the traditional Wasserstein distance fails to capture, as shown in the following toy example.

\textbf{Toy example 1}\label{toy}
Suppose we have two time series obtained from some experiment, $
   {x}=\{x_t,~t=1,\ldots,T\}$ and $  {y}=\{y_t,~t=1,\ldots,T\}$ in $\mathbb{R}^T$ such that $  {y}$ is the reverse of $  {x}$, that is, if $  {x}=(x_1,x_2,\ldots,x_T)$ then $  {y}=(x_T,x_{T-1},\ldots,x_1)$. The point clouds obtained using Takens' embedding theorem for $x$ and $y$ (say for $d=3$ and $\tau=2$) are
   \begin{align*}
 {X=\left[
	\begin{array}{ccc}
	x_{1}& x_{3}&   x_{5}\\
	x_{2}& x_{4}&   x_{6}\\
	\vdots & \vdots & \vdots \\
	x_{T-4}& x_{T-2}  &x_{T}\\
	\end{array}
	\right],Y=\left[
	\begin{array}{ccc}
    y_{1}& y_{3}&  y_{5}\\
	y_{2}& y_{4}&   y_{6}\\
	\vdots & \vdots & \vdots \\
	y_{T-4}& y_{T-2}&   y_{T}\\	
	\end{array}
	\right]
 =\left[
	\begin{array}{ccc}
x_{T}& x_{T-2}&  x_{T-4}\\
	x_{2}& x_{4}&   x_{6}\\
	\vdots & \vdots & \vdots \\
	x_{5}& x_{3}&   x_{1}\\	
	\end{array}
	\right].}
\end{align*}
since $y$ is reverse of $x$. Since TDA is a coordinate-free approach (or coordinate invariant), the point clouds for $x$ and $y$ will result in the same persistence diagrams and hence zero Wasserstein distance, implying a perfect similarity between the two, which is misleading. 


While the example above is hypothetical, it advocates for a more appropriate distance measure for time-series data. It emphasizes that Wasserstein distance might be beneficial when dealing with observations from a fixed geometrical object where one is trying to identify the structure of that object. Still, it is unsuitable for financial applications where time is essential. Further, the literature has observed that computing the Bottleneck or the Wasserstein distances is computationally demanding as it requires finding point correspondence as observed in \cite{agami2023comparison}. Therefore, we propose new distance measures between persistence diagrams and persistence landscapes that are efficient in terms of time and also consider the time component inherent in time-series data.

Consider two time-series, ${x}=\{x_t,~t=1,\ldots,T\}$ and ${y}=\{y_t,~t=1,\ldots,T\}$ in $\mathbb{R}^T$. For a given length $\ell\in\{1,\dots,T\}$, we define the sequence of $m$ sub-series
\begin{eqnarray*}
seq( {x})&=& \{(x_1,x_2,\ldots,x_{\ell})_1, (x_{1+\eta},x_{2+\eta}, \ldots,x_{{\ell}+\eta} )_2, \ldots,(x_{T-{\ell}+1}, x_{T-{\ell}+2},\ldots,x_T )_m \},
\end{eqnarray*}
where $\eta \in \{1,\ldots,{\ell}\}$ represents the degree of overlapping (for $\eta=\ell$ the sub-series are non overlapping and $\eta=1$ represents maximum overlapping), and $T-{\ell}=\eta(m-1)$. We now define distance between ${x}$ and ${y}$ using their corresponding sub-series as follows,

\begin{defn}\textbf{(Average $p$-Wasserstein distance ($AWD_p$))}
For $1 \leq p < \infty$, the average $p$-Wasserstein distance between the persistence diagrams $\Dcal_{X}$ of $  {x}$ and $\Dcal_{Y}$ of ${y}$ is defined as follows
\begin{equation*}
AWD_p(\Dcal_{X},\Dcal_{Y})= \sum_{i=1}^mw_i WD_p(\Dcal_{X_i},\Dcal_{Y_i})
\end{equation*}
where $WD_p(\Dcal_{X_i},\Dcal_{Y_i})$ represent the $p$-Wasserstein distance between the persistence diagrams $\Dcal_{X_i}$ and $\Dcal_{Y_i}$ corresponding to the point clouds of $i$th components in $seq( {x})$ and $seq({y})$. Here the weights $w_i$ are such that $w_i> 0$ and $\sum_{i}w_i=1$. For instance, one can simply take $w_i=\frac{1}{m}$. For $p=\infty$, we also call the average Wasserstein distance as the average bottleneck distance ($ABD$). 
\end{defn}

\begin{rem}
Computing $AWD_p$ is theoretically intuitive and computationally cheaper than calculating $WD_p$ between the two diagrams. Further, the choice of weights $w_i$ enables tuning of the influence of the preceding components of the time series when determining the distance.    
\end{rem}

\begin{defn}\textbf{(Differenced $p$-Wasserstein distance ($DWD_p$))}
The differenced $p$-Wasserstein distance ($DWD_p$) between the persistence diagrams $\Dcal_{X}$ and $\Dcal_{Y}$ of the point clouds ${ X}$ and ${Y}$ of the time series $  {x}$ and $  {y}$ is defined as
\begin{equation*}
 DWD_p(\Dcal_{X},\Dcal_{Y})= WD_p(\Dcal_{X- Y},\Dcal_{  0}), 
\end{equation*} 
 where $\Dcal_{X-Y}$ represents the persistence diagram corresponding to the point cloud obtained from the series $  {x}-  {y}$ and $\Dcal_{  0}$ is the diagram corresponding to a constant series (note that PDs are shift-invariant), containing only the diagonal \cite{mileyko2011probability}, so that $
WD_p(\Dcal_{X},\Dcal_{  0})=\left(2^{-p}\sum_{a\in \Dcal_{X}}(pers(a))^p\right)^{\frac{1}{p}}$, where $pers(a)$ denotes the persistence of $a=(b,d) \in \Dcal_x$, given by $pers(a)=d-b$, with $b$ and $d$ denoting the birth and death time.
\end{defn}

As highlighted above, the traditional Wasserstein distance, $WD_p(\Dcal_X,\Dcal_Y)$, has notable limitations. In the above toy-example, we observe that it successfully captures the topological differences between the time-series $x$ and $y$, but misses out to reflect the point-wise similarities as it is sensitive to phase shifts due to its dependence on the positioning of the topological features in $x$ and $y$. Additionally, computing $WD_p(\Dcal_X,\Dcal_Y)$ can be computationally intensive as it requires evaluating all bijections from the points in $\Dcal_{ {X}}$ to the points in $\Dcal_{ {Y}}$.

The average Wasserstein distance addresses this by utilizing a sliding window approach, dividing the time-series to multiple sub-series that allows to measure the changes in the shape of the point-cloud associated to each sub-series over the time-span of time-series. The robustness inherited in persistent homology against small perturbations makes average Wasserstein distance suitable primarily for analyzing noisy time-series. Further, it also facilitates capturing the abrupt shifts from one regime to other due to minor changes in external factors, called critical transitions, as demonstrated in \cite{akingbade2024topological}.

On the other hand, the differenced $p$-Wasserstein distance ($DWD_p$) aims to capture the point-wise discrepancies in the time-series' $x$ and $y$ over time. Clearly, a small value of $DWD_p$ implies that $x-y$ are close to zero implying strong topological similarity between $x$ and $y$. Moreover, it simplifies the computation of Wasserstein distance as it reduces the complexity by comparing a persistence diagram $\Dcal_{X-Y}$ to a constant persistence diagram. Furthermore, as it captures the point-wise similarity between $x$ and $y$, thus, it is more robust to small mis-alignments in the two time-series. This property, along with evaluation of similarity between $x$ and $y$ against a benchmark of zero, make it particularly useful for applications such as detecting anomalies or time-series matching.

Clearly, for the toy example, $AWD$ and $DWD$ are non-zero. A similar distance measures could also be defined on the space of persistence landscapes. Let $\lambda_x$ and $\lambda_y$ denote the persistence landscapes for the time series $x$ and $y$, respectively. The average $p$-Landscape distance ($ALD_p$) and the differenced $p$-Landscape distance ($DLD_p$) between time series $x$ and $y$ as follows
\begin{eqnarray}
\begin{aligned}
ALD_p(\lambda_x,\lambda_y)=\sum_{i=1}^mw_i\|\lambda^i_{  x}-\lambda^i_{  y}\|_p,~~ DLD_p(\lambda_x,\lambda_y)=\| {\lambda}_{  x-  y}\| _p, \\
\end{aligned}
\end{eqnarray}
where $\lambda_{x-y}$ is the persistence landscape corresponding to the point cloud of series $x-y$ and $\lambda_x^i$ and $\lambda_y^i$ are the persistence landscapes corresponding to the point clouds of $i$th components in $seq( {x})$ and $seq({y})$, respectively. 

\subsection{Stability of Persistence diagrams under new distance measures}

Having introduced new distance measures, it is essential to ensure that the mapping from data to a persistence diagram remains continuous as a data representation. Given that data is often noisy, it is crucial 
for the persistence diagrams to exhibit stability under perturbations in the data, that is, to demonstrate resilience to minor deformations in the data input. This section focuses on establishing the stability of persistence diagrams under the proposed distance measures.

\begin{defn}
Let $X$ and $Y$ be two compact sets in $\R^d$ i.e., $X,Y\subset \R^d$. The Hausdorff distance $HD(X,Y)$ between $X$ and $Y$ is defined as
\begin{equation}\label{defHd}
   HD(X,Y)=\inf\left\{\epsilon\ge 0\mid \text{$X\subseteq Y_\epsilon$ and $Y\subseteq X_\epsilon$}\right\},
\end{equation}
where $Y_\epsilon =\cup_{y\in Y} \{ z\in\R^d\mid \|z-y\|\le\epsilon\}$ denotes the generalized ball of radius $\epsilon$ around $Y$. Observe that
\begin{equation}\label{ABfinite}
  \text{$X\subseteq Y_\epsilon$ $\Longleftrightarrow$ for any $x\in X$ there exists some $y\in Y$ with $\|x-y\|\le \epsilon$.}
\end{equation}
\end{defn}
The following theorem relates the  Euclidean distance between two time series and Hausdorff distance between their corresponding point clouds.

\begin{theorem} \label{thm3}
Let $  {x}$ and $  {y}$ be two time series in $\R^T$. Let $  X=\{ x_{I_1},\dots,x_{I_m}\}$ and $  Y=\{ y_{I_1},\dots,y_{I_m}\}$ be their Takens' embeddings in $\R^d$. Then 
\begin{equation}\label{tempo}
HD(  X,  Y) \le \|   {x}-  {y}\|_2.
\end{equation}
\end{theorem}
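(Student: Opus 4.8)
The plan is to exploit the canonical index-preserving correspondence between the two point clouds. By construction, the $j$-th embedded point of each series is built from the same set of time indices, namely $x_{I_j}=(x_j,x_{j+\tau},\dots,x_{j+(d-1)\tau})$ and $y_{I_j}=(y_j,y_{j+\tau},\dots,y_{j+(d-1)\tau})$. So the natural move is to pair $x_{I_j}$ with $y_{I_j}$, control the Euclidean distance between these paired points, and then feed that bound into the characterization \eqref{ABfinite} of the inclusions that define the Hausdorff distance.

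First I would fix an index $j$ and compute
\[
\|x_{I_j}-y_{I_j}\|_2^2=\sum_{k=0}^{d-1}\bigl(x_{j+k\tau}-y_{j+k\tau}\bigr)^2.
\]
The key observation is that for fixed $j$ the indices $j,\,j+\tau,\,\dots,\,j+(d-1)\tau$ are pairwise distinct (as $\tau\ge 1$), so the right-hand side is a sub-sum of the nonnegative terms constituting $\|x-y\|_2^2=\sum_{t=1}^T(x_t-y_t)^2$. Hence $\|x_{I_j}-y_{I_j}\|_2\le\|x-y\|_2$ for every $j$.

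With this per-point bound in hand, the two set inclusions follow immediately from \eqref{ABfinite}: given any $x_{I_j}\in X$, the point $y_{I_j}\in Y$ satisfies $\|x_{I_j}-y_{I_j}\|_2\le\|x-y\|_2$, so $X\subseteq Y_{\|x-y\|_2}$; by symmetry $Y\subseteq X_{\|x-y\|_2}$. Taking $\epsilon=\|x-y\|_2$ in the infimum defining $HD$ then yields \eqref{tempo}.

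There is no genuine obstacle here; the only point requiring care is to notice that each embedded point reuses time indices across different points of the cloud, so one cannot and need not sum the pairwise distances. The argument works precisely because each \emph{individual} squared distance is a partial sum of $\|x-y\|_2^2$, and the overlap of the embedding windows is irrelevant to the per-point bound. I would also remark that the inequality can be strict, since the point cloud discards the ordering and multiplicity of the embedded points; this is consistent with the motivation in Toy example 1 that the embedding can lose phase/time information.
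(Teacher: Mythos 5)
Your proof is correct and follows essentially the same route as the paper's: the per-point bound $\|x_{I_j}-y_{I_j}\|_2^2=\sum_{t\in I_j}(x_t-y_t)^2\le\|x-y\|_2^2$ obtained by viewing each embedded point's squared distance as a sub-sum of the full squared norm, followed by the canonical pairing $x_{I_j}\leftrightarrow y_{I_j}$ fed into \eqref{ABfinite} and \eqref{defHd}. Your additional remarks (distinctness of indices within a window, irrelevance of window overlap, possible strictness of the inequality) are accurate elaborations of steps the paper leaves implicit.
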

\begin{proof}
For any $j=1,\dots,m$, we have
$\| x_{I_j} - y_{I_j}\|^2_2 =\sum_{t\in I_j} (x_t-y_t)^2 \le \sum_{t=1}^T (x_t-y_t)^2=\|   {x}  -   {y} \|^2_2.$ In view of \eqref{ABfinite}, we infer that $X\subseteq Y_{\|   {x}  -   {y} \|}$ and $Y\subseteq X_{\|   {x}  -   {y} \|}$. Combining this with \eqref{defHd}, we arrive at Equation (\ref{tempo}).
\end{proof}
We state the propositions required for our main theorem.
\begin{prop} \label{prop1} (\cite{cohen2007stability},\cite{kusano2017kernel},\cite{chazal2014persistence}) Let $(E,d_E)$ be a metric space and $  X$ and $  Y$ be finite subsets of $E$.  Then the persistence diagrams $\Dcal_{X}$ and $\Dcal_{Y}$ satisfy
\begin{equation*}
BD(\Dcal_{X},\Dcal_{Y}) \leq HD(  X,  Y). \end{equation*}
\end{prop}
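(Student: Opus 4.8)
The plan is to reduce the statement to the classical functional stability theorem for persistence diagrams, using the fact that the persistence diagram of a finite point set is the persistent homology of the sublevel-set (offset) filtration of its distance-to-set function. First I would introduce, for a finite set $X\subset E$, the distance function $d_X\colon E\to[0,\infty)$, $d_X(z)=\min_{x\in X} d_E(z,x)$, and note that its sublevel sets $d_X^{-1}([0,r])=\bigcup_{x\in X} B(x,r)$ are exactly the $r$-offsets whose nested family induces the filtration generating $\Dcal_X$; I would define $d_Y$ analogously for $Y$. This recasts the comparison of $\Dcal_X$ and $\Dcal_Y$ as a comparison of the two scalar functions $d_X$ and $d_Y$.

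The key elementary step is the estimate $\|d_X-d_Y\|_\infty \le HD(X,Y)$. To see this, set $\epsilon=HD(X,Y)$ and fix $z\in E$. Choosing $x^\ast\in X$ nearest to $z$, the characterization \eqref{ABfinite} gives some $y\in Y$ with $d_E(x^\ast,y)\le\epsilon$, whence $d_Y(z)\le d_E(z,y)\le d_E(z,x^\ast)+d_E(x^\ast,y)\le d_X(z)+\epsilon$; interchanging the roles of $X$ and $Y$ yields $|d_X(z)-d_Y(z)|\le\epsilon$, and taking the supremum over $z$ proves the claim. This is the analogue, at the level of distance functions, of Theorem \ref{thm3} at the level of point clouds.

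The main work is then the functional/algebraic stability theorem, which I would invoke as the cited engine. Writing $\delta=\|d_X-d_Y\|_\infty$, the pointwise bound $|d_X-d_Y|\le\delta$ gives the chain of inclusions $d_X^{-1}([0,r]) \subseteq d_Y^{-1}([0,r+\delta]) \subseteq d_X^{-1}([0,r+2\delta])$ for every $r$; passing to homology, these inclusions commute with the filtration maps and therefore exhibit a $\delta$-interleaving of the two persistence modules. The algebraic stability theorem converts a $\delta$-interleaving into the bound $BD(\Dcal_X,\Dcal_Y)\le\delta$, and chaining with the previous inequality gives $BD(\Dcal_X,\Dcal_Y)\le\delta\le HD(X,Y)$. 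I expect the genuine obstacle to lie in this last ingredient: establishing that an interleaving implies a bottleneck bound (the algebraic stability theorem) is the deep, non-elementary part, whereas the reduction via distance functions and the Hausdorff estimate are routine. Since the proposition is quoted from the literature, the plan is to verify the interleaving construction carefully and treat the passage from interleaving to the bottleneck bound as an established black box.
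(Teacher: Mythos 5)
The paper never proves Proposition \ref{prop1}: it is quoted from the literature (the citations to Cohen-Steiner et al., Chazal et al., and Kusano et al. stand in for a proof), so there is no in-paper argument to compare against. Your proposal is a correct reconstruction of the standard proof found in those references: the estimate $\|d_X-d_Y\|_\infty \le HD(X,Y)$ is exactly right (and for finite sets the Hausdorff infimum is attained, so your use of \eqref{ABfinite} is legitimate), the sublevel-set inclusions do produce a $\delta$-interleaving of the persistence modules, and deferring the interleaving-to-bottleneck step to the algebraic stability theorem is the same move the paper makes by citation, so your division into ``routine reduction'' plus ``cited engine'' mirrors the paper's treatment.

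One caveat is worth recording. Your argument establishes the bound for diagrams built from the offset (\v{C}ech-type) filtration, i.e., the sublevel sets of the distance functions $d_X$, $d_Y$. The paper, however, computes its diagrams from Vietoris--Rips filtrations with a diameter-based parameter (Section \ref{background}). For those, the natural map sending each $x\in X$ to some $\phi(x)\in Y$ with $d_E(x,\phi(x))\le\delta$ distorts diameters by up to $2\delta$, so the same interleaving argument yields $BD(\Dcal_X,\Dcal_Y)\le 2\,HD(X,Y)$ rather than the constant-$1$ bound, unless a radius-type convention for the Rips parameter is adopted. This mismatch is inherited from the paper's own citation practice (the constant-$1$ statement is the \v{C}ech/offset version), not something your proof introduces; but if the proposition is applied to the paper's Rips diagrams, one should either switch filtrations or carry a factor of $2$ through Theorem \ref{main_result}, where it merely rescales the constants and breaks nothing.
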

\begin{prop} \label{p3}
\cite{cohen2010lipschitz},\cite{kusano2017kernel} Let $1 \leq \hat{p} \leq p < \infty$, and $\Dcal_{X}$ and $\Dcal_{Y}$ be persistence diagrams whose degree $p$ total persistences\footnote{Given a persistence diagram $\Dcal_X=\{a_j=(b_j,d_j)\}$, the degree $p$ total persistence of $\Dcal_X$ is defined as sum of persistence of all the points of degree $p$ in $\Dcal_X$ i.e., $Pers_p(\Dcal_X)=\sum_j (d_j-b_j)$. } are bounded from above. Then,
{\footnotesize
\begin{equation*}
WD_p(\Dcal_{X},\Dcal_{Y}) \leq \left( \frac{Pers_{\hat{p}}(\Dcal_{X})+Pers_{\hat{p}}(\Dcal_{Y})}{2}\right )^{\frac{1}{p}} BD(\Dcal_{X},\Dcal_{Y})^{1-\frac{\hat{p}}{p}}, 
\end{equation*}}
where $Pers_{\hat{p}}(\Dcal_{X}):= \sum_{a \in \Dcal_{X}}pers(x)^{\hat{p}}$ for $1 \leq \hat{p} < \infty$.
\end{prop}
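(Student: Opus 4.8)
The plan is to prove this by an interpolation (Hölder-type) argument that trades the bottleneck distance $BD$ against the total persistences, exploiting the hypothesis $\hat{p}\le p$. Throughout, the assumption that the degree-$p$ total persistences are bounded guarantees that every sum and every infimum written below is finite and well defined, so all manipulations are legitimate.

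First I would fix an optimal bijection $\gamma$ realizing the bottleneck distance, so that $\sup_{a}\|a-\gamma(a)\|_\infty = BD(\Dcal_{X},\Dcal_{Y})$, and among all such matchings I would select one that in addition minimizes the total matched cost. Since $\gamma$ is one admissible bijection, it furnishes an upper bound for the infimum defining $WD_p$, namely $WD_p(\Dcal_{X},\Dcal_{Y})^p \le \sum_a \|a-\gamma(a)\|_\infty^p$. I would then split each summand as $\|a-\gamma(a)\|_\infty^p = \|a-\gamma(a)\|_\infty^{\hat{p}}\,\|a-\gamma(a)\|_\infty^{\,p-\hat{p}}$ and, using the uniform bound $\|a-\gamma(a)\|_\infty \le BD$ together with $p-\hat{p}\ge 0$, bound the second factor by $BD^{\,p-\hat{p}}$. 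This pulls the global scale $BD^{\,p-\hat{p}}$ out of the sum and reduces the problem to controlling the residual cost $\sum_a \|a-\gamma(a)\|_\infty^{\hat{p}}$.

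The crux is therefore to show $\sum_a \|a-\gamma(a)\|_\infty^{\hat{p}} \le \tfrac12\big(Pers_{\hat{p}}(\Dcal_{X})+Pers_{\hat{p}}(\Dcal_{Y})\big)$ for this matching. Here I would use the structure of diagram matchings: each point is matched either to an off-diagonal partner or to its orthogonal projection onto the diagonal, and the $\ell_\infty$-distance from $a=(b,d)$ to the diagonal equals $pers(a)/2$. A point matched to the diagonal thus contributes exactly $(pers(a)/2)^{\hat{p}}$. For a matched pair $(a,a')$ of off-diagonal points I would argue, via a swap/exchange argument, that $\|a-a'\|_\infty \le \max\{pers(a)/2,\,pers(a')/2\}$: were this to fail, re-routing both $a$ and $a'$ to their diagonal projections would strictly lower their individual costs without raising any other cost, contradicting the minimality of the chosen matching. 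Consequently $\|a-\gamma(a)\|_\infty^{\hat{p}} \le (pers(a)/2)^{\hat{p}}+(pers(\gamma(a))/2)^{\hat{p}}$ on every matched pair, and since each off-diagonal point of $\Dcal_{X}$ and of $\Dcal_{Y}$ is charged at most once, summing yields $\sum_a \|a-\gamma(a)\|_\infty^{\hat{p}} \le 2^{-\hat{p}}\big(Pers_{\hat{p}}(\Dcal_{X})+Pers_{\hat{p}}(\Dcal_{Y})\big)\le \tfrac12\big(Pers_{\hat{p}}(\Dcal_{X})+Pers_{\hat{p}}(\Dcal_{Y})\big)$, the last step holding because $\hat{p}\ge 1$. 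Recombining with the factored $BD^{\,p-\hat{p}}$ and taking $p$-th roots gives precisely the stated inequality.

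I expect the main obstacle to be the off-diagonal swap argument in the third paragraph: making rigorous that a bottleneck-optimal matching may be chosen so that no matched off-diagonal pair costs more than the larger of the two partners' distances to the diagonal. The Hölder factorization and the diagonal-projection bookkeeping are routine once that per-pair estimate is secured; the only remaining care is to verify that all quantities stay finite, which is exactly what the boundedness hypothesis on the total persistences supplies.
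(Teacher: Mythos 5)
You should first note that the paper states this proposition without proof (it is quoted from the cited references), so your attempt can only be compared with the standard literature argument, whose skeleton you reproduce correctly: bound $WD_p$ by the cost of a bottleneck-optimal matching $\gamma$, split each term as $\|a-\gamma(a)\|_\infty^p\le \|a-\gamma(a)\|_\infty^{\hat{p}}\,BD(\Dcal_X,\Dcal_Y)^{p-\hat{p}}$, and charge the residual $\hat{p}$-cost to the persistences of the two diagrams. The genuine gap is exactly the step you flagged as the main obstacle. Your secondary selection rule (among bottleneck-optimal matchings, minimize the \emph{total} matched cost) does not support the exchange argument: rerouting a matched off-diagonal pair $(a,a')$ to the diagonal replaces \emph{one} cost $\|a-a'\|_\infty$ by \emph{two} costs $pers(a)/2$ and $pers(a')/2$; each new cost is individually smaller, but their sum can strictly increase, so no contradiction with sum-minimality arises. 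In fact the claimed per-pair bound is false for your matching. Take $\Dcal_X=\{(0,2),(0,10)\}$ and $\Dcal_Y=\{(1.5,3.5)\}$: the point $(0,10)$ forces $BD=5$, and of the two bottleneck-optimal matchings, the sum-minimizing one (total cost $6.5$ versus $7$ for the all-to-diagonal matching) pairs $a=(0,2)$ with $a'=(1.5,3.5)$ at cost $\|a-a'\|_\infty=1.5>1=\max\{pers(a)/2,\,pers(a')/2\}$. Worse, for $\hat{p}=2$ even the consequence you draw from the lemma fails for this matching: $\|a-a'\|_\infty^{2}=2.25>2=(pers(a)/2)^{2}+(pers(a')/2)^{2}$. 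So the crux lemma, as you set it up, is not merely unproven but wrong.

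Two local repairs keep your architecture intact. (i) Change the secondary criterion: among bottleneck-optimal matchings, choose one minimizing the \emph{number} of off-diagonal-to-off-diagonal matched pairs. If such a matching contained a pair with $\|a-a'\|_\infty>\max\{pers(a)/2,\,pers(a')/2\}$, rerouting both points to the diagonal keeps every cost at most $BD$ (preserving bottleneck optimality) and strictly decreases that number, a contradiction; with this matching your per-pair estimate, the $2^{-\hat{p}}\le\tfrac12$ bookkeeping, and the final recombination are all valid. (ii) Alternatively, keep sum-minimality, for which the exchange argument legitimately yields the weaker bound $\|a-a'\|_\infty\le pers(a)/2+pers(a')/2$ (rerouting then strictly decreases the sum), and compensate by convexity, $(u+v)^{\hat{p}}\le 2^{\hat{p}-1}\bigl(u^{\hat{p}}+v^{\hat{p}}\bigr)$, which gives $\|a-a'\|_\infty^{\hat{p}}\le\tfrac12\bigl(pers(a)^{\hat{p}}+pers(a')^{\hat{p}}\bigr)$ and hence again $\sum_a\|a-\gamma(a)\|_\infty^{\hat{p}}\le\tfrac12\bigl(Pers_{\hat{p}}(\Dcal_X)+Pers_{\hat{p}}(\Dcal_Y)\bigr)$, exactly the constant the proposition requires. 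Either patch completes the proof.
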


\begin{prop} \label{p4} \cite{kusano2017kernel}
Let $E$ be a triangulable compact subspace in $\R^d$, $ {  X}$ be a finite subset of $E$, and $p > d$. Then,
$$ Pers_{p}(\Dcal_{X}) \leq \frac{p}{p-d}C_E diam(E)^{p-d},  $$
where $C_E$ is a constant depending only on $E$.
\end{prop}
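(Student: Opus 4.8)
The plan is to reduce the sum $Pers_p(\Dcal_X)=\sum_{a\in\Dcal_X}pers(a)^p$ to a one-dimensional integral via the layer-cake (Fubini) representation, and then to control the integrand by a covering-type bound that is where the triangulability of $E$ is actually used. Writing $pers(a)=d-b$ for $a=(b,d)\in\Dcal_X$ and using $pers(a)^p=\int_0^\infty p\,t^{p-1}\mathbf{1}\{pers(a)>t\}\,dt$, I would first exchange the (finite) sum with the integral to obtain
\begin{equation*}
Pers_p(\Dcal_X)=\int_0^\infty p\,t^{p-1}N(t)\,dt,\qquad N(t):=\#\{a\in\Dcal_X:pers(a)>t\}.
\end{equation*}
Since every off-diagonal point $(b,d)$ of the Vietoris--Rips diagram of a finite set contained in $E$ has $d\le diam(E)$ (all features are resolved by the scale at which the complex becomes a single simplex, with the essential $H_0$ class capped there), we have $pers(a)\le diam(E)$ for all $a$, hence $N(t)=0$ for $t\ge diam(E)$ and the integral runs effectively over $[0,diam(E)]$.

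The crux is the geometric estimate
\begin{equation*}
N(t)\le C_E\,t^{-d}\quad\text{for all }t>0,
\end{equation*}
with $C_E$ depending only on $E$. This is exactly the place where triangulability enters: a triangulable compact $E\subset\R^d$ admits triangulations whose number of simplices at resolution $t$ grows no faster than $t^{-d}$ (a packing/covering bound in $\R^d$), and the number of persistent homology classes surviving across a window of width $t$ is dominated by this simplex count, equivalently by the persistent Betti numbers at scales separated by $t$. I expect this step to be the main obstacle, because it is the only genuinely geometric ingredient while everything else is bookkeeping. Concretely, I would either invoke the bounded-total-persistence property of triangulable spaces directly, or reconstruct it by passing to the distance-to-$X$ function on $\R^d$ (which is $1$-Lipschitz, so its sublevel-set filtration is comparable to the \v{C}ech/Rips filtration of $X$ via the Nerve lemma) and bounding the Betti numbers of its sublevel sets by the simplex counts of a sufficiently fine triangulation of $E$.

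Finally, substituting the bound and using the hypothesis $p>d$ to guarantee integrability at the origin, I would compute
\begin{equation*}
Pers_p(\Dcal_X)\le\int_0^{diam(E)}p\,t^{p-1}C_E\,t^{-d}\,dt=p\,C_E\int_0^{diam(E)}t^{\,p-d-1}\,dt=\frac{p}{p-d}\,C_E\,diam(E)^{\,p-d},
\end{equation*}
which is precisely the claimed inequality. The condition $p>d$ is used in exactly one place: it makes the exponent $p-d-1>-1$, so that $t^{\,p-d-1}$ is integrable near $0$ and the evaluated antiderivative $t^{\,p-d}/(p-d)$ vanishes at the lower limit; without it the blow-up of $N(t)\sim t^{-d}$ as $t\to0$ would render the total persistence unbounded.
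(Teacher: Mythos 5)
The paper itself offers no proof of this proposition --- it is quoted verbatim from \cite{kusano2017kernel}, and the argument behind that citation (due to Cohen-Steiner, Edelsbrunner, Harer and Mileyko) has exactly the skeleton you propose: a layer-cake reduction $Pers_p(\Dcal_X)=\int_0^\infty p\,t^{p-1}N(t)\,dt$, a counting bound $N(t)\le C_E\,t^{-d}$, and an integration over $[0,diam(E)]$ in which $p>d$ enters only to make $t^{p-d-1}$ integrable at the origin. Your bookkeeping around this skeleton is correct: all deaths are capped at $diam(E)$ (at that scale the Rips complex is a full simplex, hence contractible), the exchange of the finite sum with the integral is immediate, and the final integral produces exactly $\frac{p}{p-d}C_E\,diam(E)^{p-d}$.

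The genuine gap is in your justification of the counting bound, which you rightly identify as the only substantive step. Your stated reason --- that a triangulable compact $E\subset\R^d$ admits mesh-$t$ triangulations with $O(t^{-d})$ simplices --- is false in general. Triangulability is a purely topological hypothesis: the homeomorphism onto a finite simplicial complex can distort the metric arbitrarily, so a mesh-$t$ triangulation of $E$ itself may need far more than $t^{-d}$ cells. For instance, an arc in the plane (triangulable, being homeomorphic to $[0,1]$) can be made to wiggle hierarchically so that any partition into sub-arcs of diameter at most $t$ requires on the order of $t^{-3}$ pieces. For the same reason, invoking the ``bounded total persistence of triangulable spaces'' is not a clean escape: that property in the cited literature requires \emph{polynomial growth} of triangulations, a metric condition that triangulability alone does not supply. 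The proof that actually works never triangulates $E$: one passes to the distance function $d_X$ on an ambient closed ball $B\supset E$ of radius comparable to $diam(E)$. The ball is convex, so it genuinely admits mesh-$t$ triangulations with $O\bigl((diam(E)/t)^d\bigr)$ simplices; $d_X$ is $1$-Lipschitz, so its piecewise-linear interpolation on such a triangulation is within $t/2$ of $d_X$ in sup norm; bottleneck stability then injects the classes of $d_X$ with persistence greater than $t$ into the off-diagonal classes of the PL function, whose number is at most the number of simplices. This yields $N(t)\le C_E\,t^{-d}$ with $C_E$ depending only on $diam(E)$ and $d$ (so, in particular, only on $E$), and the rest of your computation goes through verbatim. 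One further caveat, which you inherit from the paper rather than introduce: the persistent nerve lemma identifies the sublevel filtration of $d_X$ with the \v{C}ech filtration of $X$, not with the Vietoris--Rips filtration used in this paper, so transferring the bound to Rips diagrams requires an additional interleaving step (or reading the proposition, as in the cited source, for the \v{C}ech/ball filtration).
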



\begin{theorem} \label{main_result}
Let ${x}$ and ${y}$ be two time series in $\R^T$. Let $  X=\{ x_{I_1},\dots,x_{I_m}\}$ and $  Y=\{ y_{I_1},\dots,y_{I_m}\}$ be their Takens' embeddings in $\R^d$. Then 
 \begin{equation} \label{abd_1}
     ABD(\Dcal_{X},\Dcal_{Y})  \le \| {x}-{y}\|_2,
 \end{equation}
 \begin{equation} \label{awd_1}
   AWD_p(\Dcal_{X},\Dcal_{Y})  \le \left(\frac{\hat{p}}{\hat{p}-d}C_E diam(E)^{\hat{p}-d}\right)^\frac{1}{p} \| {x}-{y}\|_2^{1-\frac{\hat{p}}{p}}.
   \end{equation}
\end{theorem}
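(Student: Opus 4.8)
The plan is to prove both inequalities by reducing them, sub-series by sub-series, to the per-diagram estimates already established in Theorem \ref{thm3} and Propositions \ref{prop1}--\ref{p4}, and then averaging against the weights $w_i$. Since both $ABD$ and $AWD_p$ are convex combinations $\sum_i w_i(\cdots)$ with $\sum_i w_i = 1$, it suffices to bound each summand by a quantity \emph{independent of} $i$; the weighted sum then collapses to that same quantity. The single ingredient needed for every sub-series is the Hausdorff estimate $HD(X_i,Y_i)\le \|x^{(i)}-y^{(i)}\|_2 \le \|x-y\|_2$, where $x^{(i)},y^{(i)}$ denote the $i$th sub-series in $seq(x),seq(y)$: the first inequality is Theorem \ref{thm3} applied to the sub-series, and the second holds because the $i$th window $W_i$ is contained in $\{1,\dots,T\}$, so $\sum_{t\in W_i}(x_t-y_t)^2 \le \sum_{t=1}^T (x_t-y_t)^2$.

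For the bottleneck bound \eqref{abd_1} I would chain Proposition \ref{prop1} with the estimate above: for each $i$, $BD(\Dcal_{X_i},\Dcal_{Y_i}) \le HD(X_i,Y_i) \le \|x-y\|_2$. Multiplying by $w_i$, summing over $i$, and invoking $\sum_i w_i = 1$ yields $ABD(\Dcal_X,\Dcal_Y)=\sum_i w_i BD(\Dcal_{X_i},\Dcal_{Y_i}) \le \|x-y\|_2$.

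For the Wasserstein bound \eqref{awd_1} the key per-sub-series estimate is Proposition \ref{p3} with $1\le \hat p \le p$, which interpolates $WD_p(\Dcal_{X_i},\Dcal_{Y_i})$ between the averaged total $\hat p$-persistence and the bottleneck distance. I would bound the two total-persistence terms uniformly via Proposition \ref{p4} (valid once $\hat p > d$ and the point clouds sit in a common triangulable compact $E$), so their average is at most $\frac{\hat p}{\hat p - d}C_E\, diam(E)^{\hat p - d}$, and control the bottleneck factor by $\|x-y\|_2$ exactly as above. Because the exponent $1-\hat p/p$ is nonnegative, monotonicity of $t\mapsto t^{1-\hat p/p}$ gives $BD(\Dcal_{X_i},\Dcal_{Y_i})^{1-\hat p/p} \le \|x-y\|_2^{\,1-\hat p/p}$, so the resulting bound on $WD_p(\Dcal_{X_i},\Dcal_{Y_i})$ is free of $i$; averaging over $w_i$ once more collapses to the stated right-hand side.

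The argument is essentially bookkeeping once the three propositions are in hand, so the care lies in the hypotheses rather than in any computation. The main point to secure is the uniform applicability of Proposition \ref{p4}: one must exhibit a single triangulable compact $E\subset\R^d$ containing every sub-series point cloud $X_i,Y_i$, so that $diam(E)$ and $C_E$ may be taken common to all $i$. I would also flag that the regime $d < \hat p \le p$ is forced — the lower constraint $\hat p>d$ comes from Proposition \ref{p4} and the upper constraint $\hat p\le p$ from Proposition \ref{p3} and the monotonicity step — so the theorem implicitly presumes $p>d$; these are the only places where the exponents are genuinely constrained.
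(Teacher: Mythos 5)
Your proposal is correct and follows essentially the same route as the paper's own proof: chain Proposition \ref{prop1} with Theorem \ref{thm3} and the window-inclusion bound $\|x^{(i)}-y^{(i)}\|_2\le\|x-y\|_2$ to get a per-sub-series estimate independent of $i$, then collapse the weighted sum via $\sum_i w_i=1$, and for \eqref{awd_1} interpolate with Proposition \ref{p3} and bound the total persistences uniformly by Proposition \ref{p4}. Your explicit flagging of the hypotheses the paper leaves implicit --- a single triangulable compact $E$ containing all sub-series point clouds, and the forced regime $d<\hat p\le p$ --- is a welcome refinement, but does not change the argument.
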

\begin{proof}
By definition, for any choice of $\eta$ and $l$, $X_i,~Y_i;~ i \in \{1,2,\ldots,m\}$, which represents the point cloud of $i$th component in $seq( {x})$ and $seq( {y})$, say $ {x}_i$ and $ {y}_i$, respectively, are again subsets of $E$. Therefore, using Proposition \ref{prop1}, we have
\begin{eqnarray*}
BD(\Dcal_{X_i},\Dcal_{Y_i}) &\leq& HD(  X_i,  Y_i)~~ \forall ~~i\\
\implies 
\sum_{i=1}^mw_i BD(\Dcal_{X_i},\Dcal_{Y_i}) &\leq& \sum_{i=1}^mw_i HD(  X_i,  Y_i),~~(\because ~ w_i>0, ~ \sum_{i=1}^{m}w_i=1)\\
& \leq& \sum_{i=1}^mw_i \|  {x}_i  -  {y}_i\|_2, ~~ (\mbox{using Theorem \ref{thm3}})\\
&\leq& \sum_{i=1}^mw_i \|  {x}  -  {y}\|_2, ~~(\mbox{by definition of Euclidean distance})\\
&\leq&  \|  {x}  -  {y}\|_2 \sum_{i=1}^mw_i
\leq  \|  {x}  -  {y}\|_2 ~~ (\because \sum_{i=1}^mw_i=1.)
\end{eqnarray*}
Therefore, we have (\ref{abd_1}). Again, by definition, we have
\begin{equation*}
AWD_p(\Dcal_{X},\Dcal_{Y})= \sum_{i=1}^mw_i WD_p(\Dcal_{X_i},\Dcal_{Y_i})
\end{equation*}
Using Proposition \ref{p3} and \ref{p4}, we have
{\footnotesize
\begin{eqnarray*} 
\sum_{i=1}^mw_i WD_p(\Dcal_{X_i},\Dcal_{Y_i})
&\leq&\sum_{i=1}^mw_i \left( \frac{Pers_{\hat{p}}(\Dcal_{X_i})+Pers_{\hat{p}}(\Dcal_{Y_i})}{2}\right )^\frac{1}{p} BD(\Dcal_{X_i},\Dcal_{Y_i})^{1-\frac{\hat{p}}{p}}\\
 &\leq& \left(\frac{\hat{p}}{\hat{p}-d}C_E diam(E)^{\hat{p}-d}\right)^\frac{1}{p}\sum_{i=1}^mw_iBD(\Dcal_{X_i},\Dcal_{Y_i})^{1-\frac{\hat{p}}{p}}\\
 &\leq& \left(\frac{\hat{p}}{\hat{p}-d}C_E diam(E)^{\hat{p}-d}\right)^\frac{1}{p}\sum_{i=1}^mw_iHD({X_i},{Y_i})^{1-\frac{\hat{p}}{p}}\\
  &\leq& \left(\frac{\hat{p}}{\hat{p}-d}C_E diam(E)^{\hat{p}-d}\right)^\frac{1}{p}\|  {x}- {y}\|_2^{1-\frac{\hat{p}}{p}}.
 \end{eqnarray*}}

Therefore, we have result (\ref{awd_1}).
\end{proof}

\begin{rem}
The $DWD_p$ is a special case of $WD_p$ with one component as zero $PD$. Therefore, the stability of persistence diagrams with respect to $DWD_p$ is trivial. Further, the stability of the proposed distances on persistence landscapes follows directly from the landscape stability theorems, Theorem 12 - Theorem 16 in \cite{bubenik2015statistical}. Further, the distance metrics $AWD_p,	DWD_p, ALD_p$ and $DLD_p$ are pseudo-metrics. The properties of non-negativity and symmetry are easily verified. Also, the triangular inequality property is satisfied since they are defined using the Wasserstein and $p$-landscape distance, which are again pseudo-metrics.
\end{rem}


\section{Our Methodology for Sparse Portfolios}\label{our_methodology}
Consider a stock-market index $I$ consisting of assets $i\in\Ical=\{1,2,\ldots,n\}$, in discrete time $t=0,1,2,\ldots,$, where $t$ represents the end of a trading day. At any time $t$, we denote the price of the index $I$ and any asset $i\in \Ical$ by $P_{0,t}$ and $P_{i,t}$ respectively. Further, we denote by $R_{i,t}$ and $r_{i,t}$, the simple-return and log-return of the asset $i$ over the period $[t-1,t]$ with $t=1,2,\ldots,$ and are obtained as follows,
\begin{equation*}
R_{i,t}=\frac{P_{i,t}-P_{i,t-1}}{P_{i,t-1}},~~ r_{i,t}=\ln\,\frac{P_{i,t}}{P_{i,t-1}},\;\; i \in \Ical\cup\{0\};
\end{equation*}
Now, consider an investor at time $T$, who desires to create a portfolio consisting of assets in $\Ical$, based on the historical observations, i.e., the time-series of prices (and hence of returns), $(P_{0,t}, P_{1,t},\ldots,P_{n,t}),~t=0,1,2,\ldots,T$. We define a portfolio as an $n\times 1$ vector, denoted by $\bm w=(w_{1},w_{2},\ldots,w_{n})$, where $w_{i},i\in\Ical$ denotes the proportion of investment in the $i$th asset. We assume that the investor follows a so-called buy-and-hold strategy, i.e., she plans to hold this portfolio until the end of a specific investment horizon. Further, the set of all admissible portfolios is denoted by $\Wcal$, 
\begin{equation*}
\Wcal=\left\{{\bm w}=(w_1,w_2,\ldots,w_n): \sum_{i\in \Ical} w_i=1,w_i\geq 0~~\forall i\in \Ical \right\}.
\end{equation*}


There are multiple ways to select the portfolio $\bm w$ depending on the goals and risk appetite of the investor. We discuss three of them, namely, mean-variance ($MV$) optimal portfolio, global minimum variance ($GMV$) portfolio, and index tracking ($IT$) portfolio with budget and short selling constraints.

\textbf{Mean-Variance Portfolio} In the standard Markowitz, i.e., the mean-variance portfolio optimization problem, the goal is to select a portfolio that delivers the highest expected returns for a given variance. Mathematically, the $MV$ portfolio corresponds to the solution of the following optimization problem, 
	\begin{align} (MV) \qquad & \max_{\bm w\in \Wcal}
	\begin{aligned}[t]
	\bm w'\bm\mu-\frac{\gamma}{2}\bm w'\bm\Sigma \bm w \label{mvp}
	\end{aligned} 
	\end{align} 
where $\bm\mu=(\mu_1, \ldots, \mu_n) $ is the sample mean vector and $\bm\Sigma$ is the sample variance-covariance matrix of the asset returns calculated over the observation period, $t=0,1,2\ldots,T$.  Further, $\bm w'\bm \mu$ and $\bm w'\bm\Sigma \bm w$ denote the return and the variance of the portfolio. The parameter $\gamma$ is the risk aversion parameter that controls the trade-off between the portfolio’s risk and return. We follow \cite{demiguel2009optimal} and set $\gamma =1$ for the empirical analysis. 

It is well documented in the literature that estimating mean vector $\bm \mu$ is more difficult than the variance-covariance matrix $\bm\Sigma$ of asset returns. Further, the impact of errors in $\bm \mu$ estimates is significantly larger than that of $\bm \Sigma$ on the portfolio $\bm w$. Therefore, we also consider a global minimum variance portfolio, which relies solely on estimates of $\Sigma$ and thus is less vulnerable to estimation error than $MV$ portfolio. The $GMV$ portfolio is the solution to the following optimization problem,
	\begin{align} (GV) \qquad & \min_{\bm w \in \Wcal}
	\begin{aligned}[t]
\bm w'\bm \Sigma \bm w \label{gvp}
	\end{aligned} \end{align} 
\textbf{Index Tracking} In an index tracking portfolio, the goal is to select the portfolio $\bm w$ that aims to mimic the performance of the index $I$ by minimizing the deviation of the portfolio return from the index return, the measure commonly known as tracking error $TE(\bm w)$. The problem can be formulated as 
\begin{align} \qquad & \min_{\bm w \in \Wcal}
	\begin{aligned}[t]
TE({\bm w }) \label{tev1}
	\end{aligned}  
	\end{align} 
The most common choice for the tracking error $TE(\bm w)$ in the literature is the mean of the squared deviations of the portfolio return and the index return, traditionally called the tracking error variance ($TEV$). We also adopt $TEV$ as the tracking error measure \footnote{One of the key benefits of employing this metric for tracking error is that the objective function is quadratic in the asset weights, allowing for using standard and very efficient quadratic programming algorithms to minimize it. A further advantage of this tracking error is that it avoids the issues associated with metrics based solely on the variance of the deviation since these measures ignore the possibility of a constant difference between the two returns (\cite{beasley2003evolutionary}). Instead of the squared deviations, the absolute error is also a popular choice in the literature (e.g., \cite{prigent2007portfolio,rudolf1999linear}). This is advantageous because the optimization problem reduces to a linear program that can be efficiently solved. In practice, squared deviation is preferred over absolute errors as a statistical criterion, especially when large errors are undesirable.} which is defined as 
\begin{equation*}
\frac{1}{T}	\sum_{t=1}^{T}(\sum_{i =1}^n w_{i}R_{i,t}-R_{0,t})^2,
\end{equation*}
where $R_t(\bm w)=\sum_{i =1}^n w_{i}R_{i,t}$ denotes the portfolio return at time $t$.

\subsection{TDA-based Sparse Portfolios}

As discussed in the Introduction, various approaches for asset selection have been documented in the literature, especially for index-tracking and the Markowitz model to obtain sparse portfolios. In this paper, we adopt a strategy based on clustering that involves utilizing the distance measures introduced on persistence diagrams and landscapes. Building upon the new topology-based similarity measures, we seek to organize assets or clusters so that assets in the same cluster are similar and assets across the clusters are dissimilar in a topological sense. Once the clustering procedure is completed, we select the assets based on the following two strategies:  

\textbf{Strategy 1:} This strategy corresponds to selecting assets that share the cluster with the index, intending to replicate the index's performance. This choice is intuitive because the assets in the same cluster have a similar topological structure as quantified by the similarity matrix, thus narrowing down the number of assets potentially more suitable to mimic index performance. We then solve the tracking error optimization problem (\ref{tev1}) on the selected assets to find the optimal weights.

\textbf{Strategy 2:} This strategy involves selecting assets for mean-variance and global minimum-variance portfolios. The underlying idea is to choose the representative assets from each of the clusters, driven by the intuition that this strategy promotes sparsity due to a smaller number of selected assets and diversity due to inter-cluster dissimilarity, as quantified by the similarity matrix. Once the assets are selected, we solve (\ref{mvp}) and (\ref{gvp}) to find the optimal weights. In this strategy, one needs to select the cluster representative. One possible way is to choose the representative based on criteria, such as the asset with the highest Sharpe ratio or the minimum standard deviation. However, this leads to a selection bias. Therefore, instead of imposing some random selection criteria, we propose selecting the cluster's exemplar (representative center).

In this paper, we employ affinity propagation clustering (APC) (\cite{redmond2011affinity}) to obtain clusters.\footnote{The details on how APC works is presented in Appendix A.} Our choice of adopting APC mainly stems from the requirement that the representative center of the clusters must be a part of the data set (refer to Strategy 2). Further, we should get the same result, i.e., the same clusters and centers, if we run our strategy on the same data set but on different platforms/software. Clustering techniques, such as $k$-means and spectral clustering, assign random data points as initial cluster centers and shift these centers in each iteration. The drawback of these methods is that they yield different clustering sets upon each execution due to random initialization, and the centroids often lack a meaningful connection to actual variables in the dataset (\cite{jain1999data}). Affinity propagation addresses this issue by directly returning exemplars that correspond to actual data points.

Moreover, APC demonstrates superior accuracy compared to other clustering algorithms (\cite{redmond2011affinity}) such as $k$-means (\cite{macqueen1967some}), and spectral clustering (\cite{shi2000normalized,murzin1995scop}). Notably, APC doesn't require the pre-specification of the number of clusters, unlike $k$-means and spectral clustering. Consequently, the similarity matrix is the only input needed for affinity propagation clustering. However, it requires the choice of $\lambda$ and $P$ (refer Appendix A for details on these parameters). We chose P as the median of the similarity matrix, a common choice in the literature, as it yields a moderate number of clusters. We run the clustering algorithm for various values of $\lambda$ and select the optimal $\lambda$ based on the Silhouette score of clusters. Finally, as we deal with time series of stock returns, which typically behaves like white noise, clustering algorithms such as $k$-means would not be suitable as demonstrated in \cite{zheng2020diversity}.\footnote{We also compare the portfolio performance from APC to that from $K-$medoids and Hierarchical clustering in the Section \ref{appendix_comparison}}


Given the new topology-based distance measures, we define the following similarity matrices for the set of time series $x,y \in \mathbb{R}^T$ to be used as an input in affinity propagation clustering,
\begin{alignat}{2}
    K_1(x, y) & = e^{-\frac{AWD_p(\Dcal_{x}, \Dcal_{y})^2}{\sigma^2}}, \quad &
    K_2(x, y) & = e^{-\frac{Wass_p(\Dcal_{x-y}, \Dcal_{0})^2}{\sigma^2}}, \notag \\
    K_3(x, y) & = e^{-\frac{ALD_p(\lambda_x, \lambda_y)^2}{\sigma^2}}, \quad &
    K_4(x, y) & = e^{-\frac{DLD_p(\lambda_x, \lambda_y)^2}{\sigma^2}},
\end{alignat}
where $\sigma$ is a free parameter.\footnote{These are kernels in strict sense \cite{paulsen2016introduction} and induce similarity matrices which are semi-positive definite and hence can be used as input for the clustering algorithms.} It is worth highlighting that apart from the advantage of the new TDA-based distance measure over $p$-Wasserstein distance that it allows to compare time-series data as discussed in Section \ref{newdistances}, it offers a substantial computational speedup as it reduces the computational complexity to compute the persistence landscape for a large time series by dividing it into computations over several sub-series. Further, this division into several sub-series accounts for heteroscedasticity and non-stationarity in the financial data.

For comparison purposes, we also utilize the following kernels based on traditional correlation measures, namely Spearman’s rank correlation and Pearson's correlation,
\begin{eqnarray}
\begin{aligned}
K_5(x, y)=e^{-\frac{d_1(  x,  y)^2}{\sigma^2}},~
K_6(x, y)=e^{-\frac{d_2(  x,  y)^2}{\sigma^2}},\\
\end{aligned}
\end{eqnarray}
where $d_i(x,y) =\sqrt{2(1-\rho_i(x,y))}~~i\in\{1,2\}$, with $\rho_1$ and $\rho_2$ denoting the Spearman’s rank correlation and Pearson's correlation, respectively.\footnote{The distance measure $d_1$ defined using Pearson's correlation is the most commonly used distance measure in the literature \cite{onnela2003dynamics}. However, because these $ x, y $ are logarithmic returns in our scenario, Spearman's or Kendall's rank correlation coefficient is a much better choice due to its robustness \cite{zheng2020diversity}. As a result, we select both of these correlation measures for comparison.} We use the same notation $K_i~i=1,\ldots,6$ to represent the kernels and the corresponding similarity matrix.

The kernels, $K_1$ to $K_6$, depend on hyper-parameter $\sigma$, which needs to be estimated from training data. This paper adopts a data-driven approach for estimating $\sigma$, specifically employing the local scaling algorithm \cite{yang2008self, shang2012fast}. More precisely, we use a local scaling parameter $\sigma_{ij}$ for each pair of input data $(x_i,x_j)$ defined as
\begin{equation*}
\sigma_{ij}= d(x_i, x_m(x_i)) d(x_j, x_m(x_j)), 
\end{equation*}
where $x_m(x_\cdot)$ denotes the $m$th nearest neighbor of the input $x_\cdot$ in terms of distance measured by $d$ used in the corresponding kernel.\footnote{The selection of $m$ is independent of scale. We take $m=7$ in our analysis since it provides good clustering results \cite{yang2008self}.} The adoption of diverse scaling parameters for each pair $(x_i,x_j)$ rather than a single scaling parameter $\sigma$ allows us to exploit the local statistics of the neighborhoods around the corresponding input pair, thereby facilitating a data-driven self-tuning of $\sigma$.

\subsection{Performance Measures}

Our goal is to evaluate the out-of-sample performance of the portfolio strategies mentioned above: Strategy 1 for index tracking and Strategy 2 for mean-variance and minimum variance optimal portfolios. We study the performance of these portfolios across various standard datasets in the literature on portfolio construction (more details on the datasets follow in the next section). 

Given a daily time series of out-of-sample returns of optimal index tracking portfolio from strategy 1, we compute the five metrics. First, we calculate out-of-sample tracking error (TE): $\text{TE}={\frac{\sum_{t=1}^{T}(R_t(\bm w)-R_{0,t})^2}{T}}.$

It measures how closely the optimally tracking portfolio replicates the benchmark index. Further, we also apply one-tailed $t$-test, with null hypothesis $H_0:\, \text{TE}_{s_1}-\text{TE}_{s_2} \leq 0$ against the alternative hypothesis $H_a: \text{TE}_{s_1}-\text{TE}_{s_2} > 0$, to test whether the TE from strategy $s_1$ is statistically indistinguishable to that from strategy $s_2$.{\footnote{ The $t$-test is used to test if the mean $\mu$ of the population is ($\geq$ ~or~ $\leq$) to some constant $\mu_0$ when the variance of the population is unknown. To test the following hypothesis
\begin{equation*}
H_0: \mu\leq \mu_0~~~\mbox{against}~~~H_a: \mu> \mu_0,
\end{equation*}
the test statistics is given by
$t=\frac{\bar{x}-\mu_0}{\frac{S}{\sqrt{N}}},$ where $S$ and $\bar{x}$ denote the sample variance and mean, respectively, and $N$ is the
sample size. The critical region for the test is $t>t_{\alpha,N-1},$
where $\alpha$ is the significance level, and $t_{\alpha,N-1}$ denotes
the critical value of the $t$-distribution with $N-1$ degrees of
freedom. The $p$-value can be obtained as $P(t_{N-1}>t)$ and we
reject the null hypothesis if $p-$value is less than $\alpha$.}} Second, we calculate excess mean return (EMR), defined as the mean of the difference between the out-of-sample returns from the optimal tracking portfolio and the returns of the benchmark index, that is, $\text{EMR}= \frac{\sum_{t=1}^{T}(R_t(\bm w)-R_{0,t})}{T}.$ A positive (negative) EMR indicates that the tracking portfolio outperforms (underperforms) the benchmark index. 

Third, to measure the extent to which the optimal tracking portfolio returns levels differ from that of the benchmark, we calculate Pearson's correlation coefficient (COR) between the returns from the tracking portfolio and the benchmark index. A strategy for index tracking is appealing if its correlation is close to one. Fourth, we calculate the information ratio (IR), defined as the ratio of excess mean returns and tracking error, to measure the trade-off between excess mean return and tracking error: $\text{IR}=\frac{\text{EMR}}{\text{TE}}.$
Higher values of the information ratio are desirable. 

Finally, to measure the tracking portfolio's stability and understand the average wealth traded while holding (and re-balancing) the portfolio, we compute the portfolio turnover (TR). Formally, it is defined as the average absolute value of trades among $n$ stocks:
$\text{TR}=\frac{1}{N} \sum _{r=1}^{N}\sum_{i=1}^{n}|w^r_{i}-w^{r-1}_{i}|,$
where $N$ is the total number of re-balancing windows, and $w^r_{i}$ and $w^{r-1}_i$ are respectively the weights of $i$th asset in $r$th and $(r-1)$th window. One can interpret $TR$ as a measure of the transaction costs for implementing the corresponding portfolio strategy. For instance, a lower TR means a lower transaction cost.
 
Similarly, given the time series of daily out-of-sample returns obtained from mean-variance and minimum variance optimal portfolios, we follow \cite{demiguel2009optimal}
and compute four quantities to assess the performance of the portfolios. First, we measure the out-of-sample mean, which is given by $\mu= {\sum_{t=1}^{T}\frac{R_t(\bm w)}{T}}.$ To test whether the mean from the portfolio is statistically distinguishable, we also compute the p-value of the difference using the $t$-test.
We apply a one-tailed $t$-test on null
hypothesis $H_0:\, \mu_{s_1}-\mu_{s_1} = 0$ against the alternative
hypothesis $H_a: \mu_{s_1}-\mu_{s_1} > 0$ to test if the out-of-sample mean return
from $s_1$ is statistically significant from, $s_2$.

Second, we calculate the out-of-sample Sharpe ratio (SR), which is defined as the ratio of the sample mean to the sample standard deviation, i.e., $\text{SR}= \frac{\mu}{\sigma}$. Furthermore, we test whether the out-of-sample Sharpe ratios of two strategies, $s_1$ and $s_2$, are statistically different, we use the hypothesis test suggested by \cite{jobson1981performance} after adjusting for the correction highlighted
in \cite{memmel2003performance}.
We apply the one-sided $z_{\text{SR}}$ test with the hypothesis $H_0:\, \text{SR}_{s_1} - \text{SR}_{s_2} = 0$ and $H_a: \text{SR}_{s_1} - \text{SR}_{s_2} > 0$.\footnote{Given two strategies $s_1$ and $s_2$, with $\mu_{s_1}$, $\mu_{s_2}$, $\sigma_{s_1}$, $\sigma_{s_2}$, $\sigma_{s_1,s_2}$ as their sample means, standard deviations, and the covariance of two strategies over a sample period $n$. The $z$-test statistic is~$$z_{\text{SR}} = \frac{\sigma_{s_2}\mu_{s_1} - \sigma_{s_1}\mu_{s_2}}{\sqrt{\Upsilon}},~~ \mbox{with} $$ 
\begin{eqnarray*}
\Upsilon = \frac{1}{n}\big(2\sigma_{s_1}^2\sigma_{s_2}^2 - 2\sigma_{s_1}\sigma_{s_2}\sigma_{s_1,s_2} + 0.5\mu_{s_1}\sigma_{s_2}^2 + 0.5\mu_{s_2}\sigma_{s_1}^2
- \frac{\mu_{s_1}\mu_{s_2}}{\sigma_{s_1}\sigma_{s_2}}\sigma_{s_1,s_2}^2\big).\end{eqnarray*}
}
Along with the above performance measures, we calculate the certainty-equivalent (CEQ) return, defined as the risk-free rate an investor is willing to accept rather than adopting a particular risky portfolio strategy. Formally, we compute the CEQ return as $\text{CEQ}=\mu-\frac{\gamma}{2} \sigma^2,$
where $\sigma^2$ is the variance of the out-of-sample returns of the portfolio. We take $\gamma =1 $ as in \cite{demiguel2009optimal} in the analysis. We further test for the statistical significance of the difference of CEQ returns from two strategies \cite{demiguel2009optimal}. Fourth, we calculate the portfolio turnover ratio (TR) as a proxy for the transaction cost.

\section{Empirical Results}\label{empirical_results}

This section presents our empirical analysis. Section \ref{data} describes the data used in our study and introduces the rolling window scheme employed in the analysis. Sections \ref{sparse_index_tracking} and \ref{mean_variance} contain the performance results of our sparse index tracking and sparse mean-variance portfolios, respectively. Additionally, within these sections, we further validate the robustness of our findings through a performance analysis on the recent COVID-19 data.

\subsection{Data}\label{data}

We consider the daily closing prices of constituents of the S\&P 500 index, obtained from the Thomson Reuters datastream spanning a period from 1st Dec 2009 to 31st August 2022.\footnote{We chose the constituents of the S\&P 500 index as it includes the largest and the most liquid companies in the US markets and hence is widely used as a benchmark by both academics and practitioners alike. Further, the S\&P 500 index is relatively stable and has least distortions driven by idiosyncrasies of small and less liquid stocks, whose behavior is not a representative of market-wide behavior. Thus, it provides a reliable dataset over a long duration that has minimum distortion due to less frequent re-balancing.  Moreover, it provides a robust proxy to analyze economic trends and broader dynamics of the markets due to its comprehensive coverage of large as well as well-established companies spanning various sectors. At last, institutional investors closely monitor the performance of S\&P 500 constituents, thus, making the index a key indicator of market sentiments and risk perceptions, particularly during major financial events. For completeness, we have also tested the performance of our TDA-based sparse index tracking portfolios on the benchmark data sets from \cite{beasley2003evolutionary} and \cite{beas2009} for index tracking problems (refer Table \ref{bea}). These data sets are drawn from several major world equity indices, and thus, analyzing the performance over these data sets gives additional evidence on the adequacy and effectiveness of the TDA-based portfolios.} The selection of the period stems from various reasons. First, we chose the data period according to the availability of a maximum number of constituents, resulting in 451 stocks in our dataset. Second, it covers a range of particularly significant market conditions, starting from the recovery after the sub-prime crisis to the onset of unprecedented COVID-19 pandemic, when the global markets experienced an unmatched turmoil, followed by a historic rebound to pre-crisis peaks and a stable period afterwards. Thus, the selected period allows to assess the performance of TDA across a wide range of economic regimes, including both normal and highly volatile scenarios. The COVID-19 crisis period, in particular, provides a perfect environment to test the robustness of our approach under extreme volatility.

We divide the study period into three non-overlapping periods. The first period covers a relatively stable period starting in December 2009 to July 2019. The second period, August 2019 to August 2020, allows to evaluate the robustness of TDA based approach during a crucial time of COVID-19 pandemic. The remaining period, September 2020 to August 2022 is utilized for comparison purposes, in particular, to other clustering approaches. There are several reasons for selecting August 2019 to August 2020 for COVID-19 analysis. First, choosing August 2019 as the starting month allows us to use February 2020 as the first test month. Second, it covers an extraordinary financial market event, where markets across the globe faced an unmatched turmoil, resulting in a sharp decline of 34\% of the S\&P500 from its pre-crisis peak in February 2020. This is in contrast to a rather slow paced decline during the sub-prime crisis. Third, as pointed out in \cite{adrian2023macro}, markets encountered a historic rebound to pre-crisis levels by mid-August 2020, owing to the policy responses such as liquidity support, cuts in the interest rates, emergency credit availability, thus, showing one of the fastest recoveries. Therefore, the selected period allows us to assess the models' performance across various phases of the market, including extreme volatility and sharp recovery, thus offering insights into models' performance under stressed scenarios.
We utilize daily log returns to obtain an input cloud for filtration and, consequently, to compute the distance measures for clustering. At the same time, we employ simple returns to solve the portfolio optimization problem. 


For out-of-sample performance analysis, we follow a rolling window scheme.\footnote{We look at the problem of constructing a sparse optimal portfolio for a period $1,\ldots, T$, which corresponds to the recent past (the in-sample period) and where the path of asset returns is known. The ultimate aim is to create portfolios close to optimum performance in the $T + 1,\ldots, T + L$ period (the out-of-sample period). } To elaborate, we begin our first window with a training period of 126 trading days (six months) as the in-sample period to create the optimal portfolio and track its performance over a testing period of 21 trading days (one month) without any re-balancing during this period.{\footnote{We follow the standard of 21 trading days in a month (see, for instance, \cite{fastrich2014cardinality}).}} In the next window, we shift forward our training and testing period by one month to include more recent data and exclude the oldest data points such that a fixed size of the rolling window (seven months) is maintained. We then obtain an optimal portfolio on the new training period and track its performance over the new testing period. We repeat this procedure until the entire data sample is exhausted. Finally, we analyze the performance of the optimal portfolios over the testing period, thereby offering a comprehensive assessment of the effectiveness of the filtering strategy. 




To summarize, within a fixed rolling window, we obtain the time series of log returns for each constituent of the S\&P 500 index during the in-sample period. We then follow the following steps:
\begin{enumerate}
    \item[(i)] Convert each time series into a point cloud to calculate kernels $K_2$ and $K_4$. Divide the time series into sub-series and convert each into a point cloud to calculate kernels $K_1$ and $K_3$.
    
    \item[(ii)] Use these point clouds to generate corresponding persistence diagrams and persistence landscapes.
    
    \item[(iii)] Calculate pair-wise distances between all possible pairs of stock return time-series, as defined in Section \ref{newdistances}.
    
    \item[(iv)] Finally, obtain similarity matrices using the kernels $K_i,~i=1,2,3,4$. The calculations for kernels $K_5$ and $K_6$ are straightforward.
    
    \item[(v)] With each of the obtained similarity matrices, apply affinity propagation clustering, resulting in clusters.
    
    \item[(vi)] Select assets from the obtained clusters using the proposed strategies, Strategy 1 and Strategy 2. Solve the respective optimization problems using simple returns of the constituents to get optimal portfolios.
    
    \item[(vii)] Finally, track the performance of these portfolios over the out-of-sample period.
\end{enumerate}

\subsection{Strategy 1: Sparse Index Tracking}\label{sparse_index_tracking}

In this subsection, we assess the performance of our TDA-based data-driven sparse index tracking portfolios. We denote the optimal tracking portfolio corresponding to a similarity matrix $K_i$ by $C_i$, $i=1,2,\ldots,6$. We also create the following portfolios from the literature as benchmarks for comparative analysis.

\begin{itemize}
\item[1.] $\text{IT with cardinality constraint}$: 
    This portfolio corresponds to the optimization problem (\ref{tev1}) with an additional constraint imposed on the maximum number of assets that can be selected within a tracking portfolio. Specifically, we impose the following restriction that the number of holdings with non-zero weights $\leq K_{max}$ where we set $K_{max}$ equal to the number of assets selected in the portfolio $C_i,~i=1,2,\ldots,6$, to ensure a comparable number of asset for a fair comparison to the clustering based portfolios. Consequently, this yields six portfolios, denoted by $I_i,~i=1,2,\ldots,6$. Given that the problem (\ref{tev1}) with an additional constraint on the number of assets is a mixed-integer quadratic programming (MIQP) problem, it poses a challenging computational task. To tackle this computational bottleneck, we employ the Gurobi optimizer, which is recognized for its prowess in solving mixed-integer programming (MIP). For practical considerations, we follow \cite{shu2020high} and impose time limits of thirty minutes and one hour for solving each MIQP within every window. To distinguish, we denote the portfolios solved within a 30-minute time limit by $I_{i}^{(1)}$, while those solved with a one-hour time limit by $I_{i}^{(2)}$.

\item[2.] $\text{IT with similarity based selection}$: Considering that the clustering-based portfolios $C_i$ are constructed by selecting assets from the clusters identified using similarity matrices, an alternative and more straightforward approach could be to eliminate the need for clustering altogether and choose the assets with the maximum similarity to the index. To settle this debate, we introduce an additional set of six portfolios, denoted by $MS_i,i=1,2,\ldots,6$. Each $MS_i$ corresponds to the portfolio obtained by solving (\ref{tev1}) on top $M$ assets selected based on maximum similarity with the index, as measured by the similarity matrix $K_i$. We consider the maximum similarity criteria with $M=20$ as a benchmark.


\item[3.] $\text{IT with in-sample tracking error}$: In light of the objective function in (\ref{tev1}), i.e., the tracking error between the portfolio and the benchmark index, one might also consider the following simple approaches: (i) select $M$ assets that have the highest similarity to the index, as gauged by the in-sample tracking error between the stock return and the index return, (ii) follow the strategy one based on the similarity matrix, denoted by $K_7$, where similarity is measured by the negative of in-sample tracking error\footnote{$K_7$ is a negative squared distance function and is not a kernel (see Proposition 9.4 in \cite{paulsen2016introduction}) while the kernels $K_1,\ldots,K_6$ are actual kernel (positive definite) functions. Albeit for the affinity propagation clustering, positive definiteness is not required \cite{dueck2007non}.}: 
\begin{equation*}
 K_7(X,Y)=-||X-Y||_2^2.   
\end{equation*}
We denote this portfolio by $C_7$. 
\end{itemize}


Before analyzing the performance of the portfolios described above, we first examine the number of assets in the cluster containing the index resulting from APC based on the seven similarity matrices, \(K_i,~i=1,2,\ldots,7\), following our strategy 1. The mean and median number of assets in all windows are reported in Table \ref{clustering_2}. There are several observations. The number of clusters formed with TDA-based similarity matrices $K_1$ to $K_4$, as well as the correlation-based similarity matrices $K_5$ and $K_6$, is consistent over the sample period. A small number, on average, of assets in these clusters results in sparse portfolios as desired. For instance, the similarity matrix $K_2$ selects 40 assets (approximately 8\%) on average, while the correlation-based matrix $K_5$ (or $K_6$) selects a further smaller number of assets. In contrast, the similarity matrix $K_7$, on average, selects almost 57\% of assets, which is not desirable, and consequently, we drop the portfolio $C_7$ from our analysis.\footnote{In Appendix \ref{similarity_based_selection}, we present the study of the properties of the selected assets with respect to the benchmark index.}


\begin{table}[t!]
  \centering
  \caption{Mean and Median number of assets sharing cluster with the index}
    \begin{tabular}{l|rrrrrrr}
\hline
    Similarity matrix & $K_1$&$K_2$&$K_3$&$K_4$&$K_5$&$K_6$&$K_7$\\
\hline
Mean &7	&40&	6&	12&	18&	18&255 \\
Median&	6	&38&	5&	12&	18&	17&256 \\
 \\
    \end{tabular}%
  \label{clustering_2}%
\end{table}%

We also analyze the stability of cluster composition in time (across different windows) as well as among different similarity matrices, $K1$ to $K6$. To this end, we analyze the clustering structure over the period September 2020 to August 2022. We adopt a rolling window approach with a 6-month window size, shifting by one month, resulting in a total of 18 windows. We chose this recent period for its stable conditions and practical number of windows it offers, thus, making it easier to observe the variations in cluster composition and providing a reliable foundation for analysis. We start by calculating the Adjusted Rand Index (ARI) between the clustering composition in two consecutive windows for each similarity matrix.\footnote{Refer Appendix for more details on ARI.} We report the average value of ARI across the windows for each similarity matrix in Table \ref{ad_index}. It is evident from the Table that each similarity matrix except $K4$ generates an average ARI above 0.9638 suggesting that these similarity matrices generate relative stable clusters over consecutive windows. In contrast, $K4$, with an average ARI of 0.8301, shows slightly lower stability, indicating that its clustering assignments may vary more significantly across different windows. 
\begin{table}[ht!]
    \centering
    \caption{Average Adjusted Rand Index (ARI) for affinity propagation clustering using $K_1-K_6$ across consecutive windows}
    \begin{tabular}{lcccccc}
        \toprule
        \textbf{Similarity Matrix} & \textbf{$K1$} & \textbf{$K2$} & \textbf{$K3$} & \textbf{$K4$} & \textbf{$K5$} & \textbf{$K6$} \\
        \midrule
        \textbf{Average ARI} & 0.9875 & 0.9638 & 0.9779 & 0.8301 & 0.9858 & 0.9857 \\
        \bottomrule
    \end{tabular}
    \label{ad_index}
\end{table}

\begin{figure}[ht!]
\begin{center}
		\includegraphics[scale=.50]{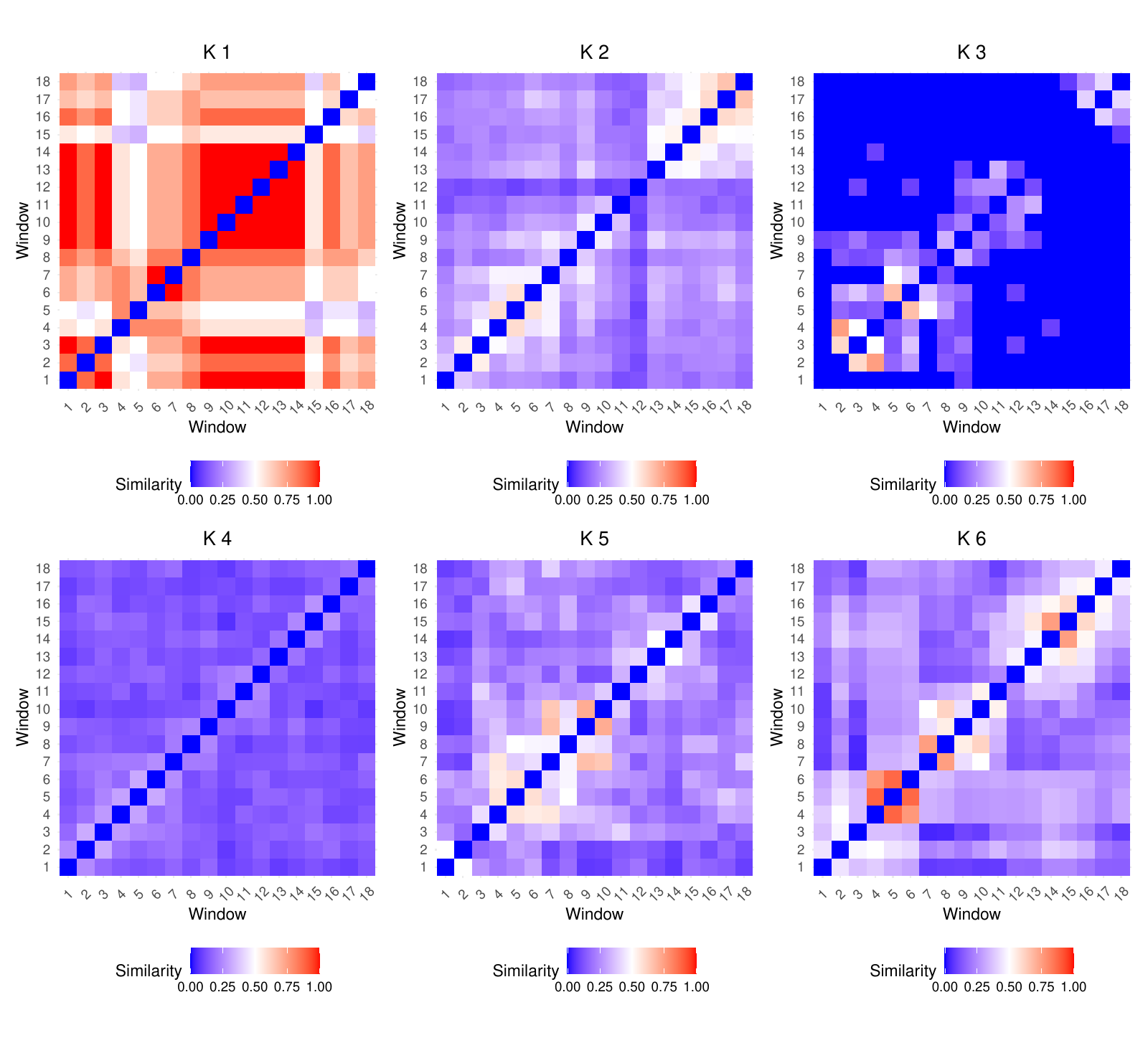}
     \caption{A comparison of Jaccard similarity between the clusters, from different windows, containing the index for six kernel matrices.}\label{fig:similarity_win}
\end{center}
\end{figure}

Next, we compare the composition of the cluster containing the benchmark index, that is, the cluster selected for the index tracking problem. To do this, we compute the Jaccard similarity between these clusters from different windows, with results displayed in Figure \ref{fig:similarity_win}.\footnote{Refer Appendix for more details on Jaccard Similarity.} Notably, we observe considerable variability in the similarity values across windows for each kernel, demonstrating how the cluster composition vary over time. There are several observations. First, the TDA based kernels, $K2$ to $K4$, show less similarity when compared to $K_1$ that produces relatively stable clusters. That is, the kernels $K_2$ to $K_4$ are able to capture the nuanced, non-linear relationships among assets, resulting in dynamic clusters that more accurately reflect market fluctuations in contrast to $K_4$ that is less responsive to market shifts. Second, the similarity matrix $K_2$ appears to strike a balance between stability and adaptability in the sense that the clusters in consecutive windows are neither the same nor completely changed, as evident by the first sub anti-diagonal (sub-diagonal above the main diagonal).\footnote{Investors seeking a combination of responsiveness and stability may find $K_2$ particularly advantageous. The stability of $K_2$ is further supported by its lower turnover values compared to other TDA-based portfolios, underscoring its suitability for maintaining a stable investment strategy. Refer results section for more details.} On the other hand, the kernels $K_3$ and $K_4$ shows significant change in the cluster composition, with exception by $K_3$ in the first half of the period. Third, the traditional similarity matrices, i.e., $K_5$ and $K_6$ also generate stable clusters, with $K_5$ demonstrating stability over a shorter period, while $K_6$ maintains stability in clusters of months (observe the small squares of different colors around the diagonal). Finally, observe the two prominent squares along the diagonal in the heat map of kernel $K_2$, the first one spanning the windows 1 to 12 and the second one covering the remaining six windows. These patterns corresponds to the bull (growth) and bear (fall) phase of the benchmark index (note that these clusters contain the benchmark) as illustrated in Figure \ref{fig:index}. This separation is less distinct in the other similarity matrices, making $K_2$ particularly effective at highlighting shifts in asset behavior in response to market shifts. To conclude, the ability of the similarity matrix $K_2$ to capture the regime shifts advocates for the strength of the TDA-based approach in identifying nuanced market patterns.

\begin{figure}[ht!]
\begin{center}
		\includegraphics[height=5.0cm, width=10.5cm]{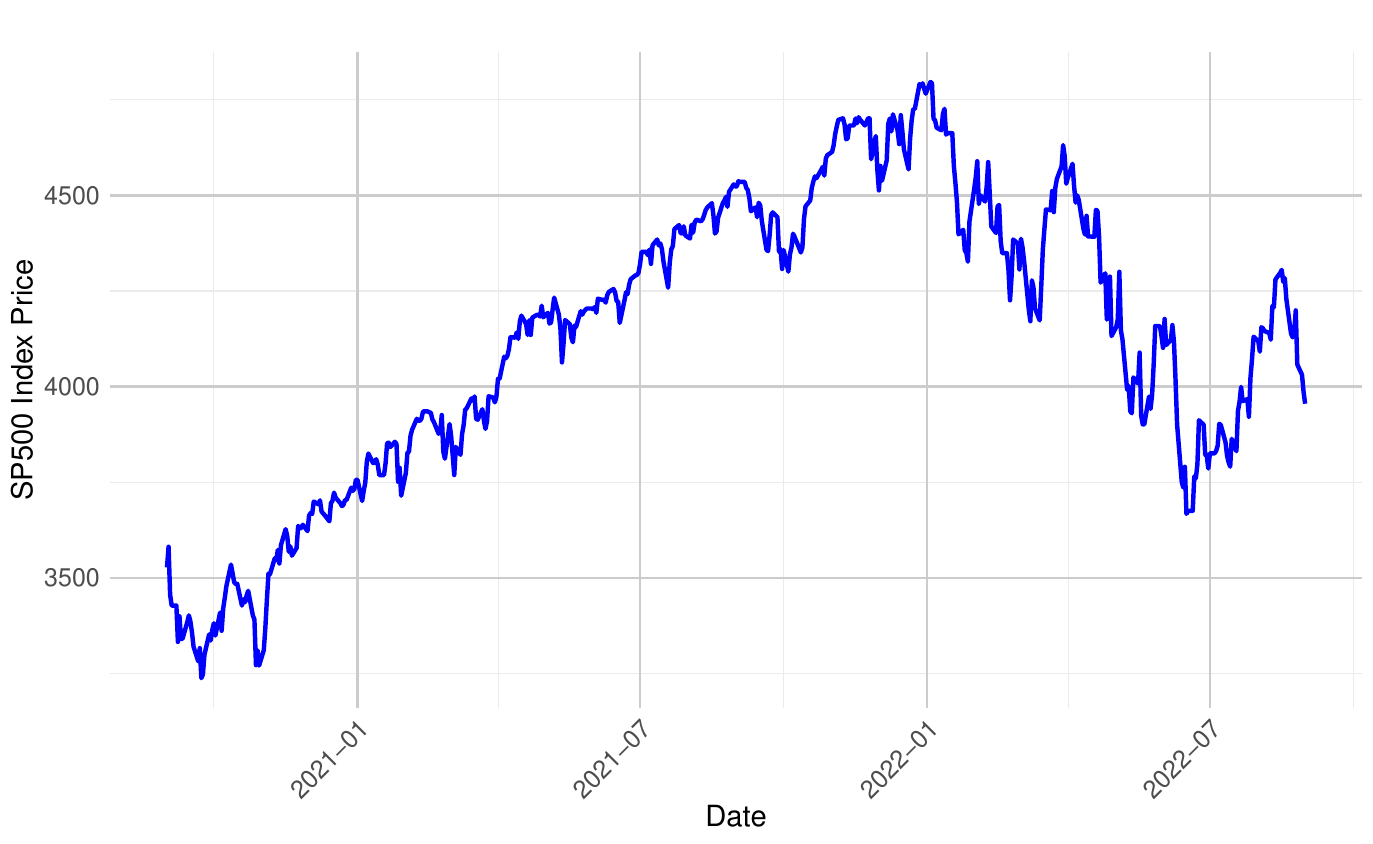}
    	
     \caption{Evolution of S\&P500 over the period, September 2020 to August 202.}\label{fig:index}
\end{center}
\end{figure}

Finally, we compare the composition of the cluster containing the benchmark index across the six similarity matrices, $K_1$ to $K_6$. To this end, we compute the Jaccard similarity between clusters from each pair of kernels for each window, then take the average similarity across all windows. The results are presented in Figure \ref{fig:kernels}. The relatively low similarity scores, ranging from 0 to 0.5, suggest that each similarity matrix captures distinct aspects of the data, generating clusters that differ significantly from one another, leading to diverse clusters. Notably, $K_5$ and $K_6$ show a moderate similarity compared to other pairs since they both try to capture the correlation among the assets. When comparing TDA-based similarity matrices to correlation-based matrices, $K_2$ shows the highest similarity to both $K_5$ and $K_6$. This suggests that TDA-based approaches could provide a valuable alternative to traditional methods, capturing not only the correlation but potentially provide deeper and robust insights into the asset relationships, particularly when dealing with non-linear dependencies.

\begin{figure}[ht!]
\begin{center}
		\includegraphics[scale=.46]{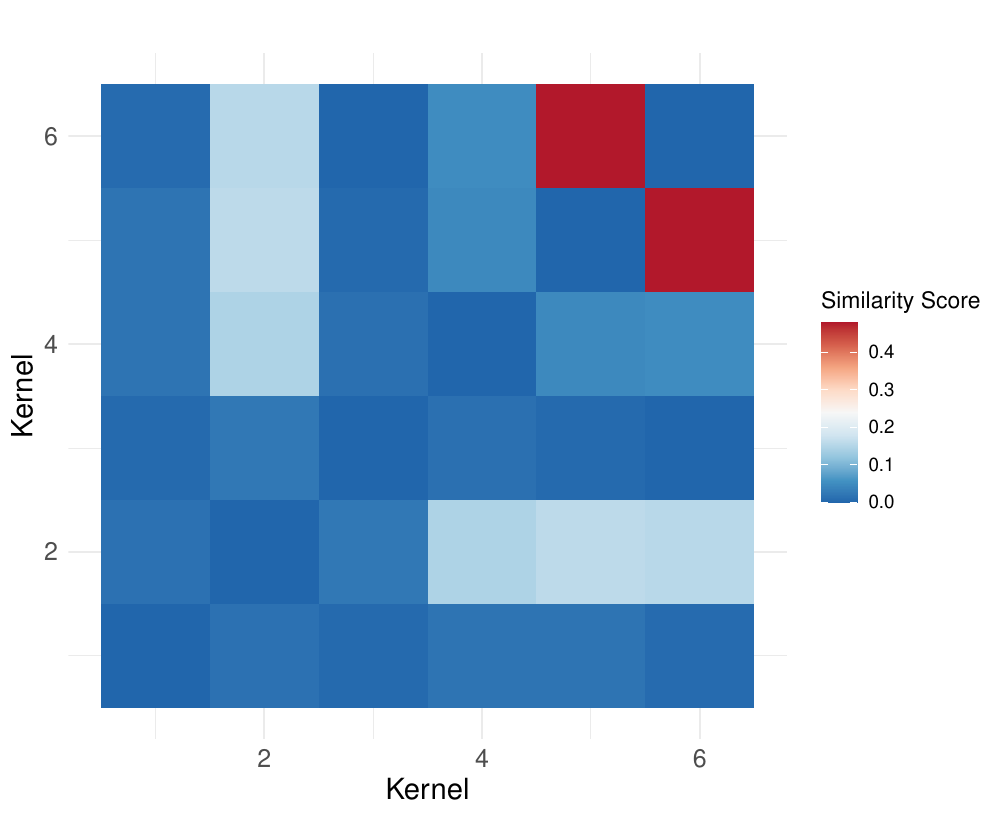}
    	
     \caption{Average similarity (Jaccard) across the cluster composition (containing the benchmark index) across the
six similarity matrices, $K_1$ to $K_6$. }\label{fig:kernels}
\end{center}
\end{figure}

\begin{table*}[t!] 
  \centering
  \caption{The out-of-sample performance of clustering based portfolios $C_1-C_6$, maximum similarity based portfolios $MS_1-MS_6$ and portfolios from TEV model with cardinality constraint $I_{1t_1}-I_{6t_1}$ and $I_{1t_2}-I_{6t_2}$.}
    \begin{tabular}{l|rrrrrr}
\hline
     & \multicolumn{1}{l}{$C_1$} & \multicolumn{1}{l}{$C_2$} & \multicolumn{1}{l}{$C_3$} & \multicolumn{1}{l}{$C_4$} & \multicolumn{1}{l}{$C_5$} & \multicolumn{1}{l}{$C_6$} \\
\hline
    TE    & 2.62E-05 & \textit{5.14E-06} & 2.79E-05 & \textbf{5.00E-06} & 1.21E-05 & 1.08E-05 \\
    EMR   & \textbf{1.44E-04} & \textit{1.30E-04} & -1.05E-05 & 3.40E-05 & 9.93E-05 & 9.03E-05 \\
    COR   & 8.36E-01 & \textit{9.68E-01} & 8.32E-01 & \textbf{9.70E-01} & 9.39E-01 & 9.44E-01 \\
    INFO  & 2.81E-02 & \textbf{5.71E-02} & -1.98E-03 & 1.52E-02 & \textit{2.86E-02} & 2.75E-02 \\
    TR & 1.47   & \textit{1.10  } & 1.70   & 1.58   & 1.18   & \textbf{1.08  } \\
\hline
     & \multicolumn{1}{l}{$MS_1$} & 
     \multicolumn{1}{l}{$MS_2$} &
     \multicolumn{1}{l}{$MS_3$} &
     \multicolumn{1}{l}{$MS_4$} &
     \multicolumn{1}{l}{$MS_5$} &\multicolumn{1}{l}{$MS_6$} \\
\hline
    TE   & 1.01E-05 &\textit{7.61E-06}& 1.18E-05 &\textbf{6.85E-06}& 8.79E-06& 8.76E-06  \\
    EMR    & 3.71E-05 & 5.74E-05 & -3.90E-05 & \textbf{6.62E-05} & 6.00E-05 & 5.91E-05 \\
    COR   & 9.04E-01 & 9.36E-01 & 8.85E-01 & \textbf{9.37E-01} & 9.37E-01 & 9.41E-01 \\
    INFO  & 9.55E-03 & 1.77E-02 & -9.07E-03 & \textbf{2.02E-02} & 1.63E-02 & 1.71E-02 \\
    TR & 1.02   & \textbf{7.74E-01} & 1.34   & 1.47   & 9.54E-01 & 8.50E-01 \\
\hline
     & \multicolumn{1}{l}{$I_{1t_1}$} & \multicolumn{1}{l}{$I_{2t_1}$} & \multicolumn{1}{l}{$I_{3t_1}$} & \multicolumn{1}{l}{$I_{4t_1}$} & \multicolumn{1}{l}{$I_{5t_1}$} & \multicolumn{1}{l}{$I_{6t_1}$}  \\
\hline
TE    & 5.70E-05 & 3.14E-05 & 3.20E-05 & 5.41E-05 & \textbf{2.55E-05} & 4.67E-05 \\
    EMR   & -5.81E-05 & 1.90E-04 & -1.54E-04 & \textbf{2.25E-04} & -3.14E-05 & -1.67E-05 \\
    
    COR   & 7.57E-01 & 8.52E-01 & 8.43E-01 & 7.66E-01 & \textbf{8.75E-01} & 8.01E-01 \\
    INFO  & -7.69E-03 & \textbf{3.39E-02} & -2.72E-02 & 3.06E-02 & -6.22E-03 & -2.45E-03 \\
    TR & 1.72   & \textbf{1.61  } & 1.65   & 1.64   & 1.68   & 1.62   \\
\hline
  &  \multicolumn{1}{l}{$I_{1t_2}$} & \multicolumn{1}{l}{$I_{2t_2}$} & \multicolumn{1}{l}{$I_{3t_2}$} & \multicolumn{1}{l}{$I_{4t_2}$} & \multicolumn{1}{l}{$I_{5t_2}$} & \multicolumn{1}{l}{$I_{6t_2}$}  \\
\hline
    TE    & 1.34E-05 & \textbf{2.45E-06} & 1.28E-05 & 8.00E-06 & 6.57E-06 & 5.13E-06 \\
    EMR   & \textbf{1.36E-04} & 3.11E-05 & -3.70E-06 & 9.91E-06 & 7.48E-05 & 4.76E-05 \\
    
    COR   & 9.26E-01 & \textbf{9.85E-01} & 9.32E-01 & 9.54E-01 & 9.62E-01 & 9.71E-01 \\
    INFO  & \textbf{3.73E-02} & 1.99E-02 & -1.03E-03 & 3.50E-03 & 2.92E-02 & 2.10E-02 \\
    TR & 1.63   & \textbf{1.58  } & 1.66   & 1.71   & 1.68   & 1.66   \\
    \end{tabular}%
  \label{performance_1}%
\end{table*}%


Table \ref{performance_1} presents the out-of-sample performance of all portfolios in terms of five performance measures, i.e., correlation with the index (COR), tracking error (TE), excess mean return (EMR), information ratio (INFO), and the turnover ratio (TR). The best and second-best values among models are highlighted in bold and italics, respectively. 
The average computation time (from data to portfolio weights) per rolling window for portfolios $C_1$ to $C_4$ are 10.36, 15.09, 11.46, and 16.15 minutes, respectively. In contrast, the computational time for portfolios $C_5$ and $C_6$, based on correlation coefficients, are 1.5 and 2.47 minutes. The time for $I_{t_1}$ and $I_{t_2}$ are respectively 30 minutes and 60 minutes by definition of these portfolios. The computational time for $C_1$ to $C_4$ is relatively larger since these portfolios involves calculating the Wasserstein distance between persistence diagrams. While TDA based portfolios take more time, these durations are manageable given the fact the obtained portfolios are rebalanced on monthly basis, thus, making the computational load relatively insignificant. Given that it is challenging to compare all the models simultaneously, we first identify the best-performing portfolio among those constructed based on clustering using TDA-based distance measures. Subsequently, we comprehensively compare these selected portfolios to those based on maximum similarity criteria and cardinality constraints.

First, it is evident that all but $C_3$ portfolios generate positive EMR. Although an ideal tracking portfolio should have zero excess mean return, it is rarely observed in practice. We usually observe both positive and negative values with a positive (negative) EMR indicating that the tracking portfolio outperforms (under-performs) the benchmark index. A positive return over the benchmark is desirable as it can certainly offset the negative impact of transaction costs. Second, in terms of the tracking performance, the portfolios $C_4$ and $C_2$ outperform the other TDA-based clustering portfolios in terms of out-of-sample tracking error and result in the highest correlation of 96.8\% and 97\%, respectively, with the index.

Furthermore, the tracking error from $C_5$ and $C_6$ is relatively high compared to the tracking error from $C_2$ and $C_4$, demonstrating the proposed kernels' efficacy in filtering out more identical assets than the conventional kernels. Third, $C_2$ excels $C_4$ as it attains the highest information ratio and maintains a trade-off between the return over and above the benchmark index and the tracking error. Last but not least, we compare the turnover values that reflect the potential transaction costs associated with portfolio re-balancing, with higher values showing more re-balancing, hence higher transaction costs, and vice versa. The turnover values show that the investor needs to re-balance fewer positions for $C_6$ and $C_2$ compared to the re-balancing for $C_4$ portfolios from one investment period to the other. Finally, we conclude that both $C_2$ and $C_4$ outperform the other clustering-based portfolios, with $C_2$ outperforming $C_4$ in terms of turnover ratio and information ratio. 

Next, we observe that both the portfolios $C_2$ and $C_4$ outperform the portfolios based on maximum similarity selection criteria with $M=10$ in terms of the tracking error and correlation with the index. Further, $C_2$ also beats the maximum similarity portfolios in terms of EMR, while $C_4$ fails to do so. One can attribute the smaller values of TR for maximum similarity portfolio compared to $C_2$ and $C_4$ to the fact the $MS$ portfolios consist of only ten assets and hence the less re-balancing.

Comparing the performance of the portfolios $I_{1t_1}-I_{6t_1}$ to that of the portfolios $I_{1t_2}-I_{6t_2}$, we find that the latter set of portfolios outperforms the former in terms of TE, EMR, and correlation with the index. In contrast, for other performance measures, INFO and TR, there is no clear domination. Further, the portfolios $I_{1t_1}-I_{6t_1}$ perform poor than the TDA-based sparse portfolios. When we compare the clustering-based portfolios to portfolios represented as $I_{1t_2}-I_{6t_2}$, we observe that the tracking error from $C_2$ is greater than that of $I_{2t_2}$, but the turnover is relatively low. This higher tracking error in the case of $C_2$ is quite intuitive. We minimize tracking error by utilizing all available assets when considering $I_{2t_2}$. In contrast, with $C_2$, we first filter out a subset of assets and then minimize the tracking error on this reduced set, resulting in a sub-optimal solution. However, it's important to note that the former approach incurs higher transaction costs because, in each evaluation window, the optimization problem is solved anew using all assets, leading to frequent portfolio re-balancing. 

On the other hand, for $C_2$, the initial filtering step imparts stability to the problem, resulting in less frequent portfolio re-balancing and, consequently, lower transaction costs. In conclusion, no consensus exists for the best among $C_2$, $C_4$, and $I_{1t_2}-I_{6t_2}$. There is a trade-off between the tracking error and the transaction costs. Nonetheless, clustering-based portfolios require less computational time when compared to cardinality-constrained portfolios. To find the best portfolio, we further analyze their performance. That is, given the superior performance of the portfolios $C_2$, $C_4$ and $I_{1t_2}-I_{6t_2}$, we now compute the mean ratio of the in-sample tracking error and out-of-sample tracking error \footnote{We calculate the ratio of in-sample tracking error and out-of-sample tracking error in each window and then take average to find the mean \cite{beas2009}}. We also compare it to the mean ratio of the portfolio $TE_{all}$, which corresponds to a complete replication index tracking portfolio, i.e., a tracking portfolio constructed using all the assets constituting the index without penalty terms. A comparison of the mean ratios allows us to investigate the severity of the estimation error. Table \ref{con} shows the average value of in-sample and out-of-sample tracking error across the windows and the average value of the corresponding ratios. A ratio of less than one across all portfolios indicates that, on average, there is a performance decline when transitioning from the in-sample period to the out-of-sample period. It is not surprising since the portfolio is optimized over the in-sample period, thus generating a smaller tracking error, while the same portfolio composition, when tracked over the out-of-sample period, the data that is unseen before. Nevertheless, it is remarkable that performance degradation is not excessively high, particularly for portfolios based on clustering techniques. This reaffirms the robustness of clustering-based portfolios in both the in-sample and out-of-sample periods.

\begin{table*}[t!]
  \centering
  \caption{In-sample and out-of-sample robustness check. In-TE is the average in-sample value of the tracking error across the windows, Out-TE is the average out-of-sample value of the tracking error across the windows and Mean ratio is the mean (taken across the windows) value of ratio of in-sample and out-of-sample tracking error. $C_2$ and $C_4$ are clustering based portfolios, $I_{1t_2}-I_{6t_2}$ are portfolios from cardinality constraint model and $TE_{all}$ is the portfolio build using all the assets.}
   \scalebox{0.8}{
    \begin{tabular}{l|rr|rrrrrr|r}
          & \multicolumn{2}{l|}{Clustering} &&     &        \multicolumn{3}{l}{Cardinality constraint}             &       &        \multicolumn{1}{|l}{All} \\
          \hline
          & \multicolumn{1}{l}{$C_2$} & \multicolumn{1}{l|}{$C_4$} & \multicolumn{1}{l}{$I_{1t_2}$} & \multicolumn{1}{l}{$I_{2t_2}$} & \multicolumn{1}{l}{$I_{3t_2}$} & \multicolumn{1}{l}{$I_{4t_2}$} & \multicolumn{1}{l}{$I_{5t_2}$} & \multicolumn{1}{l|}{$I_{6t_2}$} & \multicolumn{1}{l}{$TE_{all}$} \\
\hline
    In TE & 2.4E-06 & 2.2E-06 & 5.1E-06 & 1.7E-07 & 4.8E-06 & 2.1E-06 & 2.4E-06 & 1.0E-06 & 5.4E-15 \\
    Out TE & 5.1E-06 & 5.0E-06 & 1.3E-05 & 2.4E-06 & 1.3E-05 & 8.0E-06 & 6.6E-06 & 5.1E-06 & 9.0E-07 \\
Mean ratio& \textbf{5.5E-01} & \textit{5.2E-01} & 4.4E-01 & 7.4E-02 & 4.7E-01 & 3.0E-01 & 2.3E-01 & 2.3E-01 & 7.1E-09 \\
    \end{tabular}}%
 \label{con}
\end{table*}%

\begin{rem}
To further validate the exceptional performance of our top-performing TDA-based sparse portfolio, $C_2$, we compare it to the tracking portfolios constructed using the co-integration technique, a methodology known for its effectiveness in index-tracking problems \cite{alexander2005indexing}, \cite{sant2017indexQREF}. We denote this portfolio by $CO$. The idea underlying the co-integration index tracking portfolio revolves around determining the optimal weights for a given set of assets $S_1,\ldots,S_n$ that achieve the highest level of co-integration with the benchmark. Nevertheless, 
selecting the parameter $n$ is vital and demands careful consideration. For instance, to obtain a 
stable co-integrating vector of at least $n$ stocks out of $N$ stocks in the index, a mechanical search would entail examining an extensive array of portfolios - specifically, $N!/(N-n)!n!$ portfolios - as potential candidates for backtesting. However, in our paper, we adopt the approach presented in \cite{sant2017indexQREF} for constructing a co-integration-based index-tracking portfolio. More specifically, we start by randomly selecting 40 stocks, a choice motivated by our intention to compare it to the $C_2$ portfolio, which, on average, consists of 40 stocks. We run a regression analysis (for a comprehensive understanding of the co-integration model, please refer to \cite{sant2017indexQREF}) on the in-sample data and select this portfolio as a candidate if the residuals from the regression are stationary, as indicated by the Augmented Dickey-Fuller (ADF) test. The stationary residuals would ascertain the presence of co-integration between the index and asset price series. Otherwise, we discard the portfolio under consideration. We repeat this process for 100,000 regressions conducted sequentially for each portfolio. Each regression employs a unique combination of 40 assets. The final portfolio chosen is determined by the regression that yields the smallest sum of squared residuals. Our findings suggest that the $C_2$ portfolio outperforms the $CO$ portfolio in terms of all the performance metrics as reported in Table \ref{coint}. For instance, the $C_2$ portfolio achieves a significantly lower tracking error of 5.14E-06 in contrast to 1.57E-05 achieved by the $CO$ portfolio. Similarly, the correlation of 0.968 with the index for the $C_2$ portfolio is significantly higher than 0.924 for the $CO$ portfolio. These findings suggest that TDA-based sparse index tracking portfolios perform better than traditional portfolios based on correlation and co-integration with the index.
\end{rem}

\begin{table}[ht!]
  \centering
  \caption{The out-of-sample performance of $C_2$ portfolio constructed using Strategy 1 with kernel 2 and the $CO$ portfolio obtained from co-integration based index tracking model.}
    \begin{tabular}{l|rr}
         \hline
          &   \multicolumn{1}{l}{$C_2$}    & \multicolumn{1}{l}{$CO$} \\
 \hline  
    TE    & \textbf{5.14E-06} & 1.57E-05 \\
     EMR   & \textbf{1.30E-04} & 5.06E-05 \\
    COR   & \textbf{9.68E-01} & 9.24E-01 \\
    INFO  & \textbf{5.71E-02} & 1.28E-02 \\
    TR & \textbf{1.10E+00} & 2.38E+00 \\
    \end{tabular}%
  \label{coint}%
\end{table}%


Overall, we observed an improvement in the index tracking problem using the TDA-based kernels with strategy 1. The key concern, though, is the robustness of the strategy. The market fluctuations are not deterministic and cannot be captured. So, we analyze the performance of TDA-based kernels during the crisis period. In particular, we look at the recent Covid-19 pandemic.

\textbf{Covid-19 robustness check:} As a robustness check, Table \ref{covid_indextracking} shows the out-of-sample performance of tracking portfolios for data from the recent COVID-19 pandemic, August 2019 to August 2020. Again, we use the daily prices of the S\&P500 index and its 502 constituents as available in this period. We adopt the rolling window strategy with 21 days (approx one month) in-sample period and five days (approx one week) out-of-sample period. Among TDA-based portfolios, we observe that $C_2$ and $C_4$ still have the lowest and statistically significant tracking errors. Also, $C_1$ attains minimum turnover while $C_2$ attains second minimum turnover compared to other portfolios, an observation that aligns with observations in Figure \ref{fig:similarity_win} where we observed that $C_1$ generates the most similar clusters.

In contrast to the stable period, the best performing TDA-based portfolios, i.e., $C_2$ and $C_4$ outperforms the maximum similarity-based portfolio $MS_2$ outperforms in terms of tracking error and EMR, but falls short in terms of correlation and turnover. Specifically, $C_2$ and $C_4$ achieves tracking errors of 2.43E-03, and 2.56E-03, respectively, which is better than 4.16E-03 recorded by $MS_2$. However, $MS_2$ shows a favorable turnover ratio of 0.77, compared to 1.39 and 1.98 for $C_2$ and $C_4$ respectively. This is likely because $MS_2$ maintains a fixed portfolio size of 20 assets, thereby capping the maximum number of position changes at 20. The comparison of TDA-based portfolios against cardinality-based portfolios shows that those based on the similarity matrix $K_2$ outperforms other cardinality based portfolios. Additionally, TDA based portfolio $C_2$ outperforms the best cardinality based portfolios, $I_{2t_1}$ and $I_{2t_2}$ by achieving a lower tracking error and turnover, alongside higher correlation. Therefore, considering the trade-off between the performance and the computational time, clustering-based portfolios emerge as the superiod alternative.


\begin{table*}[t!] 
  \centering
  \caption{The out-of-sample performance of clustering based portfolios $C_1-C_6$, maximum similarity based portfolios $MS_1-MS_6$ and portfolios from TEV model with cardinality constraint $I_{1t_1}-I_{6t_1}$ and $I_{1t_2}-I_{6t_2}$ on the Covid-19 pandemic period.}
    \begin{tabular}{l|rrrrrr}
    \hline
     & \multicolumn{1}{l}{$C_1$} & \multicolumn{1}{l}{$C_2$} & \multicolumn{1}{l}{$C_3$} & \multicolumn{1}{l}{$C_4$} & \multicolumn{1}{l}{$C_5$} & \multicolumn{1}{l}{$C_6$} \\
     \hline
TE           & 7.55E-03 & \textbf{2.43E-03} & 7.65E-03 & \textit{2.56E-03} & 5.69E-03 & 6.91E-03 \\
EMR          & -8.93E-04 & 8.22E-05 & \textbf{4.33E-04} & -6.46E-05 & -1.38E-04 & \textit{1.35E-04} \\
COR  & 9.42E-01 & \textbf{9.96E-01} & 9.36E-01 & \textit{9.91E-01} & 9.77E-01 & 9.67E-01 \\
INFO          & -1.18E-01 & \textit{3.38E-02 }& \textbf{5.66E-02} & -2.52E-02 & -2.43E-02 & 1.95E-02 \\

  TR &\textbf{ 0.88 }  & \textit{1.39 }& 1.98   & 1.59   & 1.74   & 1.71   \\
    \hline
  & \multicolumn{1}{l}{$MS_1$} & 
     \multicolumn{1}{l}{$MS_2$} &
     \multicolumn{1}{l}{$MS_3$} &
     \multicolumn{1}{l}{$MS_4$} &
     \multicolumn{1}{l}{$MS_5$} &\multicolumn{1}{l}{$MS_6$}  \\
     \hline
TE           & 6.47E-03 & \textbf{4.16E-03} & 6.78E-03 & \textit{4.65E-03} & 5.00E-03 & 4.9E-03 \\
EMR          & -5.82E-04 & \textbf{5.42E-05} & 8.05E-06 & -1.23E-03 & \textit{5.15E-05} & 1.35E-05 \\
COR  & 9.79E-01 & \textbf{9.95E-01} & 9.71E-01 & \textit{9.91E-01} & 9.83E-01 & 9.88E-01 \\
INFO     &     -8.97E-02 &\textbf{ 1.30E-02 }& 1.19E-03 & -2.65E-01 & \textit{1.03E-02} & 2.76E-03 \\

   TR & 1.16   & \textbf{0.77  } & 1.56   & 1.67   &\textit{ 0.86 }  & 1.02   \\
    \hline
      &  \multicolumn{1}{l}{$I_{1t_1}$} & \multicolumn{1}{l}{$I_{2t_1}$} & \multicolumn{1}{l}{$I_{3t_1}$} & \multicolumn{1}{l}{$I_{4t_1}$} & \multicolumn{1}{l}{$I_{5t_1}$} & \multicolumn{1}{l}{$I_{6t_1}$}  \\
\hline    
  TE    & 7.14E-03 & \textbf{3.23E-03} & 6.15E-03 & \textit{3.24E-03} & 1.12E-02 & 5.57E-03 \\
EMR   & -4.71E-04 & \textbf{1.67E-04} & -9.76E-05 &\textit{ -9.55E-05} & -2.91E-04 & -1.44E-04 \\
COR   & 9.68E-01 & \textbf{9.93E-01} & 9.76E-01 & \textit{9.93E-01} & 9.15E-01 & 9.81E-01 \\
INFO    & -6.60E-02 & \textbf{5.16E-02} & \textit{-1.59E-02} & -2.95E-02 & -2.60E-02 & -2.58E-02 \\
  TR    & 1.63  & 1.52   & \textbf{1.45  } & \textit{1.51  } & 1.87   & 1.79   \\
  \hline
      &  \multicolumn{1}{l}{$I_{1t_2}$} & \multicolumn{1}{l}{$I_{2t_2}$} & \multicolumn{1}{l}{$I_{3t_2}$} & \multicolumn{1}{l}{$I_{4t_2}$} & \multicolumn{1}{l}{$I_{5t_2}$} & \multicolumn{1}{l}{$I_{6t_2}$}  \\
\hline
  TE    & 6.84E-03 & \textbf{3.22E-03} & 6.05E-03 & \textit{3.24E-03} & 6.79E-03 & 5.64E-03 \\
EMR   & {-5.98E-04} & \textbf{1.58E-04} & -2.62E-04 & -9.55E-05 & -2.85E-04 & \textit{3.98E-06 }\\
COR   & 9.72E-01 & \textbf{9.93E-01} & 9.77E-01 & \textit{9.93E-01} & 9.68E-01 & 9.80E-01 \\
INFO    & -8.75E-02 & \textbf{4.90E-02} & -4.33E-02 & -2.95E-02 & -4.20E-02 & \textit{7.05E-04} \\

   TR &1.63 	&1.50  &\textbf{	1.41 } &	\textit{1.50  }	&1.68  &	1.77 
\\

    \end{tabular}%
 \label{covid_indextracking}
\end{table*}%

\textbf{Case Study (Publicly available data for Index tracking model):} For completeness, we also test the performance of our TDA-based sparse index tracking portfolios on the benchmark data sets from \cite{beasley2003evolutionary,beas2009} for index tracking problems. These data sets are drawn from several major world equity indices, and thus, analyzing the performance of these data sets gives additional evidence on the adequacy and effectiveness of the TDA-based portfolios. 

\begin{table*}[t!]
		\centering
		\caption{Indices used for empirical analysis\color{blue}}
		\scalebox{.8}{
			\begin{tabular}{l|lccc}
				\hline
				S. No. & Index & Country & Total Number &  Constituents  \\
    				 &  &  & of constituents &  considered \\
				\hline
			1. & FTSE 100 & UK & 100 & 89\\
			2. & Nikkei 225 & Japan & 225 & 225\\
			3. &  DAX 100& Germany& 100 & 85\\
			4. & Hang-Seng & Hong Kong & 50 & 31\\
            5. & S\&P 100 & USA & 100 & 98\\
			6. & S\&P 500 & USA  & 500 & 457\\
			7. &  Russell 2000& USA& 2000 & 1318\\
			8. & Russell 3000& USA&  3000 & 2151\\
	\end{tabular}
		}
		\label{dataset}%
\end{table*}
Our data sets belong to the OR-Library and are publicly available at \url{http://people.brunel.ac.uk/~mastjjb/jeb/info.html}. There are currently eight benchmark instances. Each instance corresponds to the securities composing a specific market index. Specifically, the following market indices are considered: Hang Seng (Hong Kong), DAX100 (Germany), FTSE100 (United Kingdom), S\&P100 (USA), Nikkei225 (Japan), S\&P500 (USA),
Russell2000 (USA) and Russell3000 (USA). The data set comprises 291 weekly closing prices for each of the indices spanning March 1992 to September 1997. Table \ref{dataset} gives the list of indices, their actual number of constituents, and the number of constituents considered for this study.

We follow \cite{beasley2003evolutionary} and consider the first 145 weekly prices as in-sample observations and the following 146 weeks as out-of-sample periods to evaluate the performance of the models. We do not require a rolling window approach here, as there is only one training and testing period. Consequently, instead of reporting transaction costs, we find the Hinderfindahl index, a diversity index calculated as $\text{HHI}=\sum_{i=1}^{n}w_i^2.$ In the case of a na\"{\i}ve portfolio (which offers maximum diversification), the minimum value of HHI is obtained and is equal to $\frac{1}{n}$. Smaller values of HHI are desirable.

	\begin{table*}[t!]
  \caption{Out-of-sample performance of strategy 1 on data set from Beasley OR Library. Here $M$ denotes the number of assets filtered as clustering result, i.e., the number of assets sharing the same cluster with the index }\resizebox{.52\textwidth}{!}{\begin{minipage}{\textwidth}
    \begin{tabular}{l|rrrrrrr|l|rrrrrrr}
\hline
    T1    & \multicolumn{1}{l}{$C_1$} & \multicolumn{1}{l}{$C_2$} & \multicolumn{1}{l}{$C_3$} & \multicolumn{1}{l}{$C_4$} & \multicolumn{1}{l}{$C_5$} & \multicolumn{1}{l}{$C_6$} & \multicolumn{1}{l|}{$TE_{all}$} & T5 &  \multicolumn{1}{l}{$C_1$} & \multicolumn{1}{l}{$C_2$} & \multicolumn{1}{l}{$C_3$} & \multicolumn{1}{l}{$C_4$} & \multicolumn{1}{l}{$C_5$} & \multicolumn{1}{l}{$C_6$} & \multicolumn{1}{l}{$TE_{all}$} \\
    \hline
    EMR   & -4.61E-05 & 9.65E-05 & -4.85E-04 & -3.76E-05 & 2.59E-04 & 9.16E-04 & 9.96E-05 & EMR   & -1.32E-03 & -7.80E-04 & -3.65E-05 & -1.04E-03 & -9.76E-04 & -1.24E-03 & -3.68E-04 \\
    TE    & \textbf{1.27E-05} & \textit{1.46E-05} & 1.84E-04 & 9.82E-05 & 1.96E-05 & 4.83E-05 & 7.30E-06 & TE    & \textit{5.85E-05} & \textbf{4.58E-05} & 1.53E-04 & 8.80E-05 & 6.34E-05 & 6.91E-05 & 2.96E-05 \\
    COR   & \textbf{9.92E-01} & \textit{9.91E-01} & 9.27E-01 & 9.36E-01 & 9.89E-01 & 9.79E-01 & 9.95E-01 & COR   & 9.66E-01 & \textit{9.70E-01} & 8.94E-01 & 9.43E-01 & \textbf{9.71E-01} & 9.61E-01 & 9.84E-01 \\
    HHI   & \textbf{8.81E-02} & \textit{8.82E-02} & 1.96E-01 & 1.56E-01 & 9.57E-02 & 1.21E-01 & 7.05E-02 & HHI   & \textbf{2.90E-02} & \textit{3.85E-02} & 2.60E-01 & 9.01E-02 & 4.92E-02 & 4.80E-02 & 1.37E-02 \\
    M     & 19    & 18    & 6     & 8     & 18    & 15    & 31    & M     & 62    & 54    & 5     & 15    & 57    & 46    & 225 \\
    \hline
    T2    &       &       &       &       &       &       &       & T6    &       &       &       &       &       &       &  \\
    \hline
    EMR   & 3.32E-04 & 1.39E-04 & -2.07E-04 & 7.67E-04 & 3.43E-04 & 1.78E-04 & -2.66E-05 & EMR   & 3.08E-03 & 2.15E-03 & 3.55E-03 & 2.34E-03 & 2.32E-03 & 1.97E-03 & 2.22E-03 \\
    TE    & 1.12E-04 & \textbf{7.96E-05} & \textit{1.79E-04} & 1.14E-04 & 9.76E-05 & 8.77E-05 & 5.76E-05 & TE    & 7.64E-04 & \textbf{2.85E-04} & 7.30E-04 & \textit{3.24E-04} & 5.78E-04 & 4.33E-04 & 1.03E-04 \\
    COR   & 8.61E-01 & \textbf{9.06E-01} & 8.05E-01 & 8.67E-01 & 8.86E-01 & \textit{8.95E-01} & 9.30E-01 & COR   & 6.37E-01 & \textit{7.89E-01} & 4.86E-01 & 7.71E-01 & 7.50E-01 & \textbf{8.01E-01} & 9.28E-01 \\
    HHI   & \textit{1.20E-01} & \textbf{1.12E-01} & 3.42E-01 & 1.58E-01 & 1.55E-01 & 1.26E-01 & 3.87E-02 & HHI   & 3.11E-01 & \textbf{5.67E-02} & 2.51E-01 & \textit{9.50E-02} & 2.65E-01 & 2.14E-01 & 1.80E-02 \\
    M     & 27    & 12    & 4     & 8     & 8     & 10    & 85    & M     & 4     & 40    & 7     & 24    & 6     & 7     & 457 \\
\hline
    T3    &       &       &       &       &       &       &       & T7    &       &       &       &       &       &       &  \\
    \hline
    EMR   & -8.44E-04 & -9.27E-04 & 2.47E-04 & -9.36E-04 & -1.23E-03 & -1.58E-03 & -1.72E-04 & EMR   & 2.32E-03 & 2.24E-03 & \textbf{3.02E-03} & 2.50E-04 & 6.02E-04 & -3.54E-04 & 1.79E-03 \\
    TE    & 7.94E-05 & \textbf{2.95E-05} & 8.27E-05 & 5.12E-05 & 5.25E-05 & \textit{4.16E-05} & 7.83E-06 & TE    & 5.65E-04 & \textbf{3.96E-04} & 7.80E-04 & 6.66E-04 & 5.39E-04 & 6.09E-04 & 2.14E-04 \\
    COR   & 8.38E-01 & \textbf{9.36E-01} & 8.24E-01 & 8.91E-01 & 8.84E-01 & \textit{9.11E-01} & 9.83E-01 & COR   & 7.07E-01 & \textbf{7.74E-01} & 5.20E-01 & 5.65E-01 & 7.18E-01 & 6.89E-01 & 8.93E-01 \\
    HHI   & 1.05E-01 & \textbf{4.36E-02} & 2.31E-01 & 9.42E-02 & \textit{5.85E-02} & 6.43E-02 & 2.26E-02 & HHI   & 1.09E-01 & \textbf{3.54E-02} & 1.53E-01 & 4.88E-02 & 6.48E-02 & 9.56E-02 & 1.33E-02 \\
    M     & 11    & 35    & 5     & 14    & 25    & 24    & 89    & M     & 14    & 63    & 8     & 47    & 40    & 20    & 1318 \\
    \hline
    T4    &       &       &       &       &       &       &       & T8    &       &       &       &       &       &       &  \\
    \hline
    EMR   & -4.47E-04 & -2.75E-04 & -9.72E-04 & -5.62E-04 & -4.80E-04 & -7.23E-04 & -1.26E-04 & EMR   & 1.43E-03 & 1.71E-03 & 3.69E-03 & 1.76E-03 & 1.67E-03 & 2.23E-03 & 2.52E-03 \\
    TE    & 3.63E-04 & \textbf{3.86E-05} & 1.65E-04 & \textit{5.31E-05} & 6.69E-05 & 7.03E-05 & 9.35E-06 & TE    & 6.49E-04 & \textbf{2.33E-04} & 7.31E-04 & 3.18E-04 & 5.93E-04 & 5.26E-04 & 1.14E-04 \\
    COR   & 6.30E-01 & \textbf{9.35E-01} & 7.44E-01 & \textit{9.16E-01} & 8.93E-01 & 8.95E-01 & 9.85E-01 & COR   & 6.09E-01 & \textbf{8.32E-01} & 3.21E-01 & 7.70E-01 & 7.17E-01 & 7.38E-01 & 9.26E-01 \\
    HHI   & 3.69E-01 & \textbf{7.24E-02} & 2.21E-01 & \textit{9.87E-02} & 1.06E-01 & 1.19E-01 & 2.70E-02 & HHI   & 1.49E-01 & \textbf{4.40E-02} & 2.09E-01 & 5.49E-02 & 2.29E-01 & 1.62E-01 & 1.63E-02 \\
    M     & 5     & 24    & 5     & 14    & 15    & 13    & 98    & M     & 8     & 62    & 8     & 50    & 8     & 13    & 2151 \\
    \end{tabular}%
    \label{bea}
		\end{minipage}}
	\end{table*}	

Table \ref{bea} reports the performance of TDA-based sparse index tracking portfolios $C_1$ to $C_6$ on the eight indices and the fully-replicating index tracking portfolio, $TE_{all}$. There are several findings. First, our $C_2$ portfolio achieves the lowest tracking error when contrasted against the other TDA-based portfolios for all the indices under consideration. What's more, it is striking that the tracking error stemming from $C_2$ is statistically insignificant compared to the tracking error emanating from the $TE_{all}$ portfolio for five of the eight indices. Second, the correlation between the $C_2$ portfolios and the respective market index is notably higher than that observed in other TDA-based portfolios for five of the eight indices. Even for the index where the correlation from the $C_2$ portfolios is not the highest, it remains remarkably close to the highest correlation value. Third, although the portfolios generate a negative excess mean return for some of the indices, it is interesting that even in these instances, the $C_2$ portfolios consistently outperform their counterparts, namely the $C_5$ and $C_6$ portfolios, which are constructed using conventional similarity matrices. Fourth, it is remarkable that the $C_2$ portfolios are well-diversified, as indicated by the Herfindahl-Hirschman Index (HHI), with $C_2$ portfolios achieving the lowest values among the portfolios. It is essential to underscore that the lower values of HHI for $C_2$ attributes to the specific assets chosen based on the clustering with a similarity measure based on the kernel $K_2$ and is not merely a by-product of the higher value of $M$ (number of assets selected) as evidenced by the HHI values for T5. These findings indicate that the based sparse index tracking portfolio, $C_2$, consistently outperforms the other index tracking portfolios under consideration. 


\begin{rem}
We also compare the performance of our best-performing TDA-based sparse portfolio, $C_2$, to the index tracking portfolio proposed in \cite{beasley2003evolutionary}, which we denote by $C_{beas}$. It is worth noting that our TDA-based sparse portfolio, unlike $C_{beas}$, doesn't require the number of assets to be included in the portfolio. Since the composition of the optimal index tracking portfolios in \cite{beasley2003evolutionary} is not publicly available, we directly use the tracking error values reported in that paper. Further, the tracking error used in \cite{beasley2003evolutionary} is 
$$ TE_{beas}= \frac{1}{T} \sqrt{\sum_{j=1}^{T}(R_j(\bm w)-R_{0,j})^2 } . \label{te}$$
Therefore, for a fair comparison, we also compute the tracking error, $TE_{beas}$, for our portfolio, $C_2$. Table \ref{beascomp} presents the out-of-sample $TE_{beas}$ for $C_2$ and $C_{beas}$ for data sets T1-T5, the indices used in \cite{beasley2003evolutionary}. Interestingly, our $C_2$ portfolio outperforms $C_{beas}$ for all except T4 indices in terms of the tracking error $T_{beas}$.

\begin{table}[t!]
\centering
  \caption{Out-of-sample value of tracking error $TE_{beas}$ for $C_2$ portfolio and $C_{beas}$ portfolio from $Beasley_{IT}$ model. Best value is highlighted in \textbf{Bold}}
    \begin{tabular}{|l|rr|}
   \hline
  Data &{$C_2$} & { $C_{beas}$} \\
  \hline
  T1  &\textbf{2.96E-04} & 1.27E-03 \\
T2    &\textbf{8.77E-04} & 2.05E-03 \\
 T3   &\textbf{7.40E-04} & 9.58E-04 \\
T4   & {1.58E-03} & \textbf{1.03E-03} \\
  T5& \textbf{ 6.35E-04} & 8.21E-04 \\
  \hline
    \end{tabular}%
  \label{beascomp}%
\end{table}%
\end{rem}

The empirical study demonstrates the efficacy of the TDA-based kernel $K_2$ in facilitating the superior asset selection for the index-tracking model. Further, the kernel $K_2$, when integrated with affinity propagation clustering, results in an entirely data-driven approach, thus eliminating the need to pre-specify the number of assets to be included in the tracking portfolio. Therefore, our proposed TDA based strategy holds promise in enhancing asset allocation decisions, particularly in the context of index tracking.


\subsection{Strategy 2: Sparse Mean-variance and Global Minimum Variance Portfolios}\label{mean_variance}

Next, we assess the performance of our TDA-based data-driven mean-variance and global minimum variance optimal portfolios. More precisely, for a fixed window, we follow Strategy 2, i.e., given the similarity matrices $K_1$ to $K_6$, we perform affinity propagation clustering and select the exemplars, then construct the optimal portfolios solving problems (\ref{mvp}) and (\ref{gvp}) on the filtered stocks. We denote the optimal mean-variance and the global minimum variance portfolios corresponding to a similarity matrix $K_i$ by $MV_i$ and $GMV_i$, respectively, where $i=1,2,\ldots,6$. We also create the following portfolios from the literature as benchmarks for comparative analysis.
\begin{itemize}
\item[1.] $\text{MV and GMV with all assets}$: These portfolios correspond to the optimal mean-variance and global minimum variance portfolios created by solving  (\ref{mvp}) and (\ref{gvp}) with all of the assets that make up the benchmark index (without clustering). We denote these portfolios by $MV_A$ and $GMV_A$, respectively.

\item[2.] $\text{Naive portfolio:}$ This portfolio corresponds to the equal-weighted portfolio.
\end{itemize}

Before analyzing the performance of optimal portfolios, we first observe that the median number of assets (exemplars) for similarity matrices $K1-K6$ are 60, 64, 74, 86, 73, and 76, thus confirming that the resulting portfolios are indeed sparse.


Panel A in Table \ref{strategy2_outsample} shows the out-of-sample performance of the TDA-based sparse mean-variance portfolios. We observe that the portfolio $MV2$ achieves the maximum values for mean, SR, and CEQ, and the minimum value of TR, thus outperforming the other portfolios, including TDA-based portfolios and the benchmark portfolios, namely, $MVA$, the na\"{\i}ve strategy and the benchmark index. For instance, $MV2$ achieves the daily Sharpe ratio of .0733 (.3359 per month / 1.1636 per annum) compared to 0.688 and 0.521 achieved by $MVA$ and $SP500$. Further, the least value of the turnover ratio for $MV2$ indicates less re-balancing across the windows and the formation of stable portfolios. In contrast to the performance of TDA-based index-tracking portfolios, $MV4$ fails to consistently beat other TDA-based portfolios as well as the benchmark portfolios. Nonetheless, these findings confirm our conclusion that the TDA-based selection strategy, in particular, based on kernel $K_2$, outperforms the benchmark models, thus providing an excellent asset selection criteria.\footnote{We also observe that the $MVA$ portfolio has a higher out-of-sample mean, SR, and CEQ compared to those from the na\"{\i}ve $1/N$ portfolio. We also perform hypothesis tests to assess whether the values from $MVA$ and na\"{\i}ve strategy are statistically significant. We observed that the mean and CEQ from $MVA$ is significantly greater than that from na\"{\i}ve strategy at 99\% confidence level while the Sharpe ratios from the two are not significantly different, a similar conclusion was drawn in \cite{demiguel2009optimal}.} Panel B in Table \ref{strategy2_outsample} shows the out-of-sample performance of the TDA-based sparse global minimum variance portfolios. A similar conclusion can be drawn that in a $K_2$ based portfolio, $GV2$ outperforms other TDA-based portfolios, with the exception that $GV5$ has a lower turnover ratio than $GV2$, but the difference is not very large. Further, $GV2$ beats the benchmark portfolios $GVA$, the na\"{\i}ve, and the index in terms of mean and Sharpe ratio. These findings were further confirmed using the statistical tests for mean, SR, and CEQ; for instance, we observed that $MV2$ has a significantly higher mean, SR, and CEQ than $MVA$ portfolios at a 95\% confidence level.

\begin{table*}[t!]
\footnotesize 
  \caption{In this table, we report the out-of-sample mean return, Sharpe ratio (SR), certainty-equivalent (CEQ) return, and portfolio turnover of the optimal mean-variance ($MV$) and global minimum variance ($GMV$) portfolios. The portfolios $MV1-MV6 (GMV1-GMV6)$ correspond to the sparse optimal mean-variance (global minimum variance) portfolios obtained based on Strategy 2 corresponding to six similarity matrices. The portfolios $MVA$ and $GMVA$ correspond to the optimal mean-variance and global minimum variance portfolios obtained on all assets, and the naive corresponds to an equal-weighted portfolio. We calculate these measures using realized simple returns of the corresponding portfolios over the out-of-sample period. We also report these measures for the index for comparison. The best and second-best values are highlighted in bold and italics, respectively.}
\centering  \resizebox{.9\textwidth}{!}{\begin{minipage}{\textwidth}
    \begin{tabular}{l|rrrrrr|rrr}
\hline  
\multicolumn{9}{l}{A. Mean-Variance Portfolios}\\
     & \multicolumn{1}{l}{MV1} & \multicolumn{1}{l}{MV2} & \multicolumn{1}{l}{MV3} & \multicolumn{1}{l}{MV4} & \multicolumn{1}{l}{MV5} & \multicolumn{1}{l}{MV6} & \multicolumn{1}{|l}{MVA} & \multicolumn{1}{l}{Naïve} & \multicolumn{1}{l}{Index} \\
\hline
    Mean  & 9.69E-04 & \textbf{1.82E-03} & 1.22E-03 & 7.67E-04 & 7.27E-04 & 8.83E-04 & \textit{1.70E-03} & 5.89E-04 & 4.70E-04 \\
    SR    & 4.67E-02 & \textbf{7.33E-02} & 5.58E-02 & 3.28E-02 & 3.55E-02 & 4.19E-02 & \textit{6.88E-02} & 6.22E-02 & 5.21E-02 \\
        CEQ   & 7.54E-04 & \textbf{1.51E-03} & 9.79E-04 & 4.94E-04 & 5.17E-04 & 6.61E-04 & \textit{1.40E-03} & 5.44E-04 & 4.29E-04\\
\vspace{0.5cm}
    TR & 1.74E+00 & \textbf{1.37E+00} & 1.78E+00 & 1.67E+00 & 1.45E+00 & 1.39E+00&&& \\
\hline
\multicolumn{9}{l}{B. Global Minimum Variance Portfolios}\\
     & \multicolumn{1}{l}{GV1} & \multicolumn{1}{l}{GV2} & \multicolumn{1}{l}{GV3} & \multicolumn{1}{l}{GV4} & \multicolumn{1}{l}{GV5} & \multicolumn{1}{l}{GV6} & \multicolumn{1}{|l}{GVA} & \multicolumn{1}{l}{Naïve} & \multicolumn{1}{l}{Index} \\
\hline 
Mean  & 3.50E-04 & \textbf{4.42E-04} & 3.91E-04 & 3.44E-04 & 3.95E-04 & 3.35E-04 & \textit{3.96E-04} & 5.89E-04 & 4.70E-04 \\
    SR    & 5.04E-02 & \textbf{6.45E-02} & 5.75E-02 & 4.93E-02 & 5.73E-02 & 4.84E-02 & \textit{6.38E-02} & 6.22E-02 & 5.21E-02 \\
        CEQ   & 3.26E-04 & \textit{4.18E-04} & 3.68E-04 & 3.20E-04 & 3.72E-04 & 3.11E-04 & 3.77E-04 & \textbf{5.44E-04} & 4.29E-04\\
    TR & 1.46E+00 & 1.24E+00 & 1.70E+00 & 1.67E+00 & \textbf{1.22E+00} & 1.26E+00 &&&\\
\hline
    \end{tabular}%
  \label{strategy2_outsample}%
  \end{minipage}}
 \end{table*}%

To check our strategy's robustness, we study the in-sample and out-of-sample statistics by calculating the mean, SR, and CEQ values for each window (in-sample and out-of-sample) and then taking the average across the windows. Tables \ref{strategy2_inoutsample} presents the in-sample and out-of-sample statistics and their differences. We observe a performance degradation in the out-of-sample period, which is intuitive since the portfolios are optimized over the in-sample period and are tested in the out-of-sample period where no optimization occurs. Further, for $MV2$ and $GV2$, the difference between the in-sample and out-of-sample average values is less than the $MVA$ and $GVA$ portfolios, which suggests that TDA-based filtering strategy is more robust to the changing market dynamics.


\begin{table*}[t!]
\centering
\footnotesize  
\caption{Average (calculated across the windows) in-sample and out-of-sample values of Mean, SR and CEQ for the portfolios MV1-MV6 and GV1-GV6. The difference in the in-sample and out-of-sample values acts as an indicator for consistency and robustness of the portfolios. Minimum value of the difference is highlighted in \textbf{bold}}
  \resizebox{.9\textwidth}{!}{\begin{minipage}{\textwidth}
    \begin{tabular}{ll|rrrrrr|r}
\hline
\multicolumn{9}{l}{A. Mean-Variance Portfolios}\\
          && \multicolumn{1}{l}{MV1} & \multicolumn{1}{l}{MV2} & \multicolumn{1}{l}{MV3} & \multicolumn{1}{l}{MV4} & \multicolumn{1}{l}{MV5} & \multicolumn{1}{l}{MV6} & \multicolumn{1}{|l}{MVA} \\
\hline
    Mean & In & 3.521E-03 & 4.612E-03 & 3.825E-03 & 3.878E-03 & 3.742E-03 & 3.750E-03 & \textbf{5.530E-03} \\
    &Out & 9.689E-04 & \textbf{1.825E-03} & 1.217E-03 & 7.668E-04 & 7.269E-04 & 8.835E-04 & 1.704E-03 \\
\vspace{0.5cm}
    &Difference  & \textbf{2.552E-03} & 2.787E-03 & 2.608E-03 & 3.111E-03 & 3.015E-03 & 2.867E-03 & 3.826E-03 \\
    SR &In & 1.846E-01 & 1.806E-01 & 1.921E-01 & 1.711E-01 & 1.853E-01 & 1.880E-01 & \textbf{2.186E-01} \\
    &Out & 7.870E-02 & 8.542E-02 & 7.898E-02 & 4.842E-02 & 5.695E-02 & 7.132E-02 & \textbf{9.374E-02} \\
\vspace{0.5cm}
    &Difference  & 1.059E-01 & \textbf{9.519E-02} & 1.131E-01 & 1.227E-01 & 1.283E-01 & 1.167E-01 & 1.248E-01 \\
    CEQ &In & 3.310E-03 & 4.242E-03 & 3.592E-03 & 3.561E-03 & 3.498E-03 & 3.516E-03 & \textbf{5.158E-03} \\
    &Out & 7.517E-04 & \textbf{1.513E-03} & 9.741E-04 & 4.901E-04 & 5.137E-04 & 6.597E-04 & 1.394E-03 \\
\vspace{0.5cm}
    &Difference  & \textbf{2.558E-03} & 2.729E-03 & 2.618E-03 & 3.071E-03 & 2.984E-03 & 2.856E-03 & 3.765E-03 \\
\hline
\multicolumn{9}{l}{B. Global Minimum Variance Portfolios}\\

          && \multicolumn{1}{l}{GV1} & \multicolumn{1}{l}{GV2} & \multicolumn{1}{l}{GV3} & \multicolumn{1}{l}{GV4} & \multicolumn{1}{l}{GV5} & \multicolumn{1}{l}{GV6} & \multicolumn{1}{|l}{GVA} \\
          \hline
    Mean &In & 5.137E-04 & 5.227E-04 & 5.067E-04 & \textbf{5.289E-04} & 5.276E-04 & 4.861E-04 & 5.017E-04 \\
    &Out & 3.504E-04 & \textbf{4.419E-04} & 3.911E-04 & 3.443E-04 & 3.955E-04 & 3.355E-04 & 3.961E-04 \\
\vspace{0.5cm}
    &Difference  & 1.633E-04 & \textbf{8.084E-05} & 1.156E-04 & 1.846E-04 & 1.321E-04 & 1.506E-04 & 1.055E-04 \\
    SR &In & 1.009E-01 & \textbf{1.044E-01} & 1.040E-01 & 9.789E-02 & 1.041E-01 & 9.965E-02 & 1.274E-01 \\
    &Out & 7.462E-02 & \textbf{9.952E-02} & 8.179E-02 & 7.416E-02 & 8.972E-02 & 8.321E-02 & 9.334E-02 \\
\vspace{0.5cm}
    &Difference  & 2.628E-02 & \textbf{4.830E-03} & 2.219E-02 & 2.373E-02 & 1.437E-02 & 1.643E-02 & 3.409E-02 \\
    CEQ &In & 4.961E-04 & 5.052E-04 & 4.902E-04 & 5.092E-04 & \textbf{5.102E-04} & 4.689E-04 & 4.907E-04 \\
    &Out & 3.260E-04 & \textbf{4.181E-04} & 3.676E-04 & 3.195E-04 & 3.714E-04 & 3.111E-04 & 3.765E-04 \\
\vspace{0.5cm}
    &Difference  & 1.702E-04 & \textbf{8.713E-05} & 1.226E-04 & 1.897E-04 & 1.388E-04 & 1.578E-04 & 1.142E-04 \\
\hline
    \end{tabular}%
     \label{strategy2_inoutsample}%
    \end{minipage}}
 \end{table*}%

\textbf{COVID-19 Robustness Check} Next, we analyze the portfolio performance in the more volatile period from August 2019 to August 2020, which includes the COVID-19 pandemic. Table \ref{strategy2_robustness} presents the out-of-sample performance. Panel A clearly shows that the sparse TDA-based portfolio $MV2$ performs really well even in this highly volatile period. More specifically, the TDA-based portfolio $MV_2$ outperforms the $MVA$, the naive portfolio, and the benchmark index. This is demonstrated with significant gaps; for instance, the Sharpe ratio of 1.38E-01 achieved by $MV_2$ is better and statistically significantly than 1.23E-02 and 1.97E-02 achieved by $MVA$ and the naive portfolio, respectively. Similar to the TDA-based mean-variance portfolios, there is clear pattern that emerges in the performance of TDA-based global minimum variance portfolios. That is, $GV_2$ outperform $GVA$ and na\"ive portfolio, thus confirming our findings that TDA-based portfolios, in particular, based on the kernel $K_2$ are not only sparse but also perform superior to traditional models even in turbulent market conditions, safeguarding against significant losses. These results underline the out-of-sample efficacy of TDA-based models in managing risk and delivering returns.

\begin{table*}[ht!]
\footnotesize 
\centering
  \caption{Out-of-sample performance analysis of portfolios $MV1-MV6, MVA, GV1-GV6, GVA$, na\"{\i}ve strategy and the index on COVID-19 data set. The best and second-best values are highlighted in bold and italics, respectively.}
  \resizebox{.8\textwidth}{!}{\begin{minipage}{\textwidth}
    \begin{tabular}{l|rrrrrr|rrr}
\hline
\multicolumn{9}{l}{A. Mean-Variance Portfolios}\\
     & \multicolumn{1}{l}{MV1} & \multicolumn{1}{l}{MV2} & \multicolumn{1}{l}{MV3} & \multicolumn{1}{l}{MV4} & \multicolumn{1}{l}{MV5} & \multicolumn{1}{l}{MV6} & \multicolumn{1}{|l}{MVA} & \multicolumn{1}{l}{Naïve} & \multicolumn{1}{l}{Index} \\
\hline
Mean& 5.39E-03 & \textbf{6.76E-03 }& \textit{6.00E-03} & 4.93E-05 & 4.59E-03 & 4.04E-03 & {1.15E-03} & 5.34E-04 & 5.96E-04 \\

SR & 8.17E-02 & \textbf{1.38E-01} & \textit{1.21E-01} & 7.76E-04 & 7.84E-02 & 6.56E-02 & 1.23E-02 & 1.97E-02 & 2.43E-02 \\

CEQ & 3.22E-03 & \textbf{1.59E-02} & \textit{7.77E-03} & -1.97E-03 & 2.88E-03 & 2.14E-03 & 6.36E-04& 1.66E-04 & 2.96E-04 \\
\vspace{0.5cm}
TR & 1.98 & \textbf{1.14} & 1.91 & 1.56 & \textit{1.48}& 1.49 &  &       &  \\
\hline
\multicolumn{9}{l}{B. Global Minimum Variance Portfolios}\\
     & \multicolumn{1}{l}{GV1} & \multicolumn{1}{l}{GV2} & \multicolumn{1}{l}{GV3} & \multicolumn{1}{l}{GV4} & \multicolumn{1}{l}{GV5} & \multicolumn{1}{l}{GV6} & \multicolumn{1}{|l}{GVA} & \multicolumn{1}{l}{Naïve} & \multicolumn{1}{l}{Index} \\
\hline 
    Mean & -9.35E-04 & \textbf{4.44E-04} & -2.22E-04 & \textit{2.71E-04} & -1.16E-03 & -1.98E-04 & \textit{5.75E-04} & 5.34E-04 & 5.96E-04 \\

SR& -4.30E-02 & \textbf{1.96E-02} & -7.49E-03 & \textit{1.29E-02 }& -5.05E-02 & -8.29E-03 & \textit{2.91E-02} & 1.97E-02 & 2.43E-02 \\

CEQ & -1.17E-03 & \textbf{1.87E-04} & -6.60E-04 & \textit{4.94E-05 }& -1.42E-03 & -4.82E-04 & \textit{3.80E-04} & 1.66E-04 & 2.96E-04 \\
  TR &\textit{ 1.40} & \textbf{1.17 }& 1.71 & 1.47 & 1.40& 1.40 &  &       &  \\
\hline
    \end{tabular}%
  \label{strategy2_robustness}%
\end{minipage}}
 \end{table*}%

\newpage

\subsection{Sensitivity to Clustering Algorithm}\label{appendix_comparison}

 As a robustness check, we replace the APC with two alternative algorithms: $K-$medoids and Hierarchical clustering to demonstrate the effectiveness of TDA based distances, regardless of the employed clustering algorithm.\footnote{ We also tried Density-Based Spatial Clustering of Applications with Noise (DBSCAN); however, given the nature of our data, most points were classified as noise rather than clusters. Therefore, we dropped it from further analysis.} To do this, we study the performance of portfolios from Strategy 1 and Strategy 2 over the period September 2020 to August 2022. We use the rolling window approach with a window size of 6 months, shifting by one month, resulting into 18 windows. For $K-$medoids, we obtained the optimal number of clusters by maximizing the average silhouette score across $K=[10, 50]$. For consistency, we used the same number of clusters, obtained in $K$-medoids, in Hierarchical clustering. In Hierarchical clustering, we obtained centroids (selected as the point with minimum total distance to other points in the cluster) as the cluster representative for Strategy 2. Table \ref{st1_comp} presents the performance of the index tracking portfolios (Strategy 1), while Table \ref{st2_comp} present the performance of mean-variance portfolios and global minimum variance portfolios (Strategy 2).

\begin{table}[t!] 
  \centering
  \caption{Out-of-sample performance of clustering based portfolios $C_1-C_6$ during September 2020 to August 2022 for different clustering algorithms.}
    \begin{tabular}{l|rrrrrr}
    \hline
  
     & \multicolumn{1}{l}{$C_1$} & \multicolumn{1}{l}{$C_2$} & \multicolumn{1}{l}{$C_3$} & \multicolumn{1}{l}{$C_4$} & \multicolumn{1}{l}{$C_5$} & \multicolumn{1}{l}{$C_6$} \\
     \hline
       & \multicolumn{6}{c}{Affinity propagation clustering}\\
       \hline
TE           & 5.88E-03 & \textit{3.59E-03} & 7.82E-03 & \textbf{2.43E-03} & 5.38E-03 & 4.98E-03 \\
EMR          & -3.76E-04 & \textbf{1.10E-04} & -8.33E-05 & \textit{9.17E-05} & 3.27E-06 & 1.15E-05 \\
COR          & 0.875 & \textit{0.952} & 0.748 & \textbf{0.979} & 0.928 & 0.937 \\
INFO         & -6.40E-02 & \textit{3.07E-02} & -1.06E-02 & \textbf{3.77E-02} & 6.08E-04 & 2.31E-03 \\
TR           & \textbf{0.344} & 1.10 & 1.36 & 1.46 & 1.25 & \textit{0.911} \\
\hline
  & \multicolumn{6}{c}{K-Medoids clustering}\\
       \hline
TE           & 5.62E-03 & \textit{3.20E-03} & 4.83E-03 & \textbf{2.29E-03} & 4.47E-03 & 4.24E-03 \\
EMR          & -4.01E-04 & \textit{1.89E-04} & 2.91E-05 & 1.83E-04 & \textbf{1.93E-04} & 1.05E-04 \\
COR          & 0.885 & \textit{0.962} & 0.903 & \textbf{0.982} & 0.947 & 0.952 \\
INFO         & -7.14E-02 & \textit{5.91E-02} & 6.03E-03 & \textbf{8.01E-02} & 4.31E-02 & 2.48E-02 \\
TR           & \textbf{0.679} & 1.36 & 2.00 & 1.68 & 1.49 & \textit{1.35} \\
\hline
  & \multicolumn{6}{c}{Hierarchical clustering}\\
       \hline
TE           & 5.89E-03 &\textbf{ 4.85E-03} & 5.98E-03 & 7.33E-03 & \textit{5.32E-03} & 6.01E-03 \\
EMR          & -2.82E-04 & 2.30E-05 & \textbf{3.71E-04} & -5.21E-04 & -6.56E-05 & \textit{2.46E-04} \\
COR          & 0.869 & \textit{0.913} & 0.850 & 0.842 & \textbf{0.926} & 0.909 \\
INFO         & -4.79E-02 & 4.75E-03 & \textbf{6.20E-02} & -7.10E-02 & -1.23E-02 & \textit{4.10E-02} \\
TR           & \textbf{1.33} & \textit{2.27} & 2.45 & 3.18 & 2.32 & 2.29 \\
\hline
    \end{tabular}\label{st1_comp}
\end{table}

From Table \ref{st1_comp}, we observe that the TDA based portfolios emerge as the best-performing portfolios across all clustering methods. In particular, $C_2$ and $C_4$ are top-performing portfolios in affinity propagation clustering and $K$-medoids clustering, while $C_3$ performs well under Hierarchical clustering. For instance, $C_4$ attains minimum TE, maximum correlation and maximum information ratio while $C_2$ attains the second position in terms of these performance measures in APC and $K$-medoids clustering. The performance of these portfolios is mixed under Hierarchical clustering, with no clear standout. Nonetheless, $C_3$ outperforms other algorithms in terms of EMR and INFO. Overall, the findings from Table \ref{st1_comp} align with our previous findings that TDA based portfolios outperform other benchmarks. Additionally, while all clustering methods show high correlation (COR) with the benchmark, APC achieves this alignment without excessively high turnover, striking a balance between tracking the benchmark and maintaining low trading costs. The information ratio for APC portfolios is also moderate but stable, indicating a reasonable trade-off between excess return and risk. K-medoids and hierarchical clustering may yield higher INFO values in isolated cases, but their elevated turnover suggests that these risk-adjusted returns may not be sustainable due to the increased transaction costs. From Table \ref{st2_comp}, we observe that no consistent pattern emerges across clustering algorithms. For instance, MV2 with APC dominates other portfolios, as also observed in our main analysis, while MV6 and MV5 dominates TDA based portfolios with $K$-medoids and hierarchical clustering respectively. In contrast, there is no clear standout performing kernel in GMV portfolios regardless of the clustering algorithm. While APC shows slightly better results for GV4, $K$-medoids works well for GV5 and hierarchical clustering is better for GV1. Similar to Table \ref{st1_comp}, APC exhibits lower turnover values compared to other clustering algorithms, indicating potentially lower transaction costs. In conclusion, TDA based portfolios outperforms the benchmark across all clustering methods, particularly excelling in affinity propagation clustering.

\begin{table}
  \centering
  \caption{Out-of-sample performance of clustering based portfolios $MV1-MV6$ and $GV1-GV6$ during September 2020 to August 2022 for different clustering algorithms.}
\begin{tabular}{l|rrrrrr}
\hline  
& \multicolumn{5}{l}{Panel A. Mean-Variance Portfolios}\\
     & \multicolumn{1}{l}{MV1} & \multicolumn{1}{l}{MV2} & \multicolumn{1}{l}{MV3} & \multicolumn{1}{l}{MV4} & \multicolumn{1}{l}{MV5} & \multicolumn{1}{l}{MV6} \\
\hline
   & \multicolumn{6}{c}{Affinity propagation clustering}\\
 \hline
Mean       & 2.31E-04 & \textbf{1.84E-03} & \textit{1.25E-03} & 8.28E-04 & 9.94E-05 & 7.83E-04 \\
SR         & 1.13E-02 & \textbf{5.84E-02} & \textit{4.77E-02} & 3.95E-02 & 3.93E-03 & 2.49E-02 \\
CEQ        & 2.05E-05 & \textbf{1.34E-03} &\textit{ 9.07E-04 }& 6.08E-04 & -2.20E-04 & 2.88E-04 \\
TR         & \textit{1.529} & 1.974 & 1.930 & 1.749 & 1.573 & \textbf{1.294} \\
\hline
  & \multicolumn{6}{c}{K-Medoids clustering}\\
\hline
Mean       & -3.33E-05  & 4.10E-04 & 4.78E-04 & \textit{1.11E-03} & 8.92E-04 & \textbf{1.34E-03} \\
SR         & -1.21E-03 & 1.72E-02 & 1.74E-02 & \textit{4.42E-02} & 3.02E-02 & \textbf{5.38E-02} \\
CEQ        & -4.10E-04 & 1.25E-04 & 1.01E-04 & \textit{7.95E-04} & 4.55E-04 & \textbf{1.03E-03} \\
TR         & 2.359 & 2.800 & 3.536 & 2.337 & \textbf{2.122} & \textit{2.300} \\
\hline
  & \multicolumn{6}{c}{Hierarchical clustering}\\
\hline
Mean       & -1.91E-04 & 7.79E-05  & 1.21E-03 & \textit{1.66E-03} & \textbf{1.77E-03} & 4.79E-04 \\
SR         & -5.93E-03 & 2.63E-03 & 4.19E-02 & \textit{5.18E-02} & \textbf{6.59E-02} & 1.56E-02 \\
CEQ        & -7.11E-04 & -3.59E-04 & \textit{7.94E-04} & \textbf{1.14E-03} & \textbf{1.41E-03} & 9.60E-06 \\
TR         & \textbf{2.889} & 3.458 & 4.952 & 3.546 & 3.261 & \textit{3.084} \\
\hline
& \multicolumn{5}{l}{Panel B. Global Minimum-Variance Portfolios}\\
   & \multicolumn{1}{l}{GV1} & \multicolumn{1}{l}{GV2} & \multicolumn{1}{l}{GV3} & \multicolumn{1}{l}{GV4} & \multicolumn{1}{l}{GV5} & \multicolumn{1}{l}{GV6} \\
\hline
   & \multicolumn{6}{c}{Affinity propagation clustering}\\
 \hline
Mean       & 1.66E-04 & 1.43E-04 & 1.31E-04 & \textbf{4.12E-04} & 1.60E-04 & \textit{3.62E-04} \\
SR         & 1.96E-02 & 1.64E-02 & 1.63E-02 & \textit{4.27E-02} & 1.01E-02 & \textbf{4.51E-02} \\
CEQ        & 1.30E-04 & 1.05E-04 & 9.85E-05 & \textbf{3.66E-04} & 1.01E-04 & \textit{3.30E-04} \\
TR         & 1.506 & 1.457 & 1.822 & 1.500 & \textbf{1.213} & \textit{1.226} \\
\hline
  & \multicolumn{6}{c}{K-Medoids clustering}\\
  \hline
Mean       & 2.08E-04 & -1.39E-04 & \textit{5.39E-04} & 2.57E-04 & \textbf{6.86E-04} & 3.38E-04 \\
SR         & 2.37E-02 & -1.49E-02 & \textit{6.01E-02} & 2.73E-02 & \textbf{8.23E-02} & 3.89E-02 \\
CEQ        & 1.69E-04 & -1.83E-04 & \textit{4.99E-04} & 2.13E-04 & \textbf{6.52E-04} & 3.00E-04 \\
TR         & 2.119 & 2.280 & 3.449 & 2.130 & \textbf{1.934} & \textit{1.942} \\
\hline
  & \multicolumn{6}{c}{Hierarchical clustering}\\
\hline
Mean       & \textbf{4.16E-04} & -1.41E-04 & -1.75E-04 & \textit{3.86E-04} & 3.43E-04 & 3.47E-04 \\
SR         & \textbf{4.98E-02} & -1.52E-02 & -1.95E-02 & \textit{4.37E-02} & 3.89E-02 & 4.09E-02 \\
CEQ        & \textbf{3.81E-04} & -1.84E-04 & -2.16E-04 & \textit{3.47E-04} & 3.04E-04 & 3.11E-04 \\
TR         & 2.681 & 2.992 & 4.373 & 3.346 & \textit{2.589} & \textbf{2.493} \\
\hline
\end{tabular}\label{st2_comp}

\end{table}

\section{Conclusion}\label{conclusion} 

In this paper, we introduced new distance measures on the space of persistence diagrams and landscapes.\ Notably, these distance measures incorporate the essential time component of a time series- a facet often overlooked by traditional distance measures. Employing these novel measures as an input to the affinity propagation clustering and proposed two strategies for asset selection to promote sparsity in the portfolio. Specifically, we presented a novel, entirely data-driven approach based on topological data analysis for selecting sparse portfolios in the context of index tracking and Markowitz portfolios. Our empirical findings demonstrate that the performance of sparse portfolios selected based on proposed novel distance measures is not merely comparable, but often excels in volatile market conditions relative to the prevailing methods across diverse set of performance measures. Thus, our paper contributes to sparse portfolio selection and applications of topological data analysis in finance.

\noindent
\textbf{Declaration:} There are no financial or non-financial competing interests.

\noindent
\textbf{Funding:} No funding was received.

\bibliographystyle{plain}
\bibliography{QF_format/clustering_bib}

\begin{thebibliography}{10}

\bibitem{adrian2023macro}
Tobias Adrian, Fabio~M Natalucci, and Mahvash~S Qureshi.
\newblock Macro-financial stability in the covid-19 crisis: some reflections.
\newblock {\em Annual Review of Financial Economics}, 15(1):29--54, 2023.

\bibitem{agami2023comparison}
Sarit Agami.
\newblock Comparison of persistence diagrams.
\newblock {\em Communications in Statistics-Simulation and Computation}, 52(5):1948--1961, 2023.

\bibitem{akingbade2024topological}
Samuel~W Akingbade, Marian Gidea, Matteo Manzi, and Vahid Nateghi.
\newblock Why topological data analysis detects financial bubbles?
\newblock {\em Communications in Nonlinear Science and Numerical Simulation}, 128:107665, 2024.

\bibitem{alcock1999time}
Robert~J Alcock, Yannis Manolopoulos, et~al.
\newblock Time-series similarity queries employing a feature-based approach.
\newblock In {\em 7th Hellenic conference on informatics}, pages 27--29, 1999.

\bibitem{alexander2005indexing}
Carol Alexander and Anca Dimitriu.
\newblock Indexing and statistical arbitrage.
\newblock {\em The Journal of Portfolio Management}, 31(2):50--63, 2005.

\bibitem{anagnostopoulos2011mean}
Konstantinos~P Anagnostopoulos and Georgios Mamanis.
\newblock The mean--variance cardinality constrained portfolio optimization problem: An experimental evaluation of five multiobjective evolutionary algorithms.
\newblock {\em Expert Systems with Applications}, 38(11):14208--14217, 2011.

\bibitem{baes2021low}
Michel Baes, Calypso Herrera, Ariel Neufeld, and Pierre Ruyssen.
\newblock Low-rank plus sparse decomposition of covariance matrices using neural network parametrization.
\newblock {\em IEEE Transactions on Neural Networks and Learning Systems}, 2021.

\bibitem{ban2018machine}
Gah-Yi Ban, Noureddine El~Karoui, and Andrew~EB Lim.
\newblock Machine learning and portfolio optimization.
\newblock {\em Management Science}, 64(3):1136--1154, 2018.

\bibitem{beasley2003evolutionary}
John~E Beasley, Nigel Meade, and T-J Chang.
\newblock An evolutionary heuristic for the index tracking problem.
\newblock {\em European Journal of Operational Research}, 148(3):621--643, 2003.

\bibitem{benidis2017sparse}
Konstantinos Benidis, Yiyong Feng, and Daniel~P Palomar.
\newblock Sparse portfolios for high-dimensional financial index tracking.
\newblock {\em IEEE Transactions on signal processing}, 66(1):155--170, 2017.

\bibitem{bondell2008simultaneous}
Howard~D Bondell and Brian~J Reich.
\newblock Simultaneous regression shrinkage, variable selection, and supervised clustering of predictors with oscar.
\newblock {\em Biometrics}, 64(1):115--123, 2008.

\bibitem{brodie2009sparse}
Joshua Brodie, Ingrid Daubechies, Christine De~Mol, Domenico Giannone, and Ignace Loris.
\newblock Sparse and stable {M}arkowitz portfolios.
\newblock {\em Proceedings of the National Academy of Sciences}, 106(30):12267--12272, 2009.

\bibitem{bubenik2018persistence}
Peter Bubenik.
\newblock The persistence landscape and some of its properties.
\newblock {\em arXiv preprint arXiv:1810.04963}, 2018.

\bibitem{bubenik2015statistical}
Peter Bubenik et~al.
\newblock Statistical topological data analysis using persistence landscapes.
\newblock {\em The Journal of Machine Learning Research}, 16(1):77--102, 2015.

\bibitem{beas2009}
Nilgun~A Canakgoz and John~E Beasley.
\newblock Mixed-integer programming approaches for index tracking and enhanced indexation.
\newblock {\em European Journal of Operational Research}, 196(1):384--399, 2009.

\bibitem{carlsson2009topology}
Gunnar Carlsson.
\newblock {Topology and data}.
\newblock {\em Bulletin of the American Mathematical Society}, 46(2):255--308, 2009.

\bibitem{chazal2014persistence}
Fr{\'e}d{\'e}ric Chazal, Vin De~Silva, and Steve Oudot.
\newblock Persistence stability for geometric complexes.
\newblock {\em Geometriae Dedicata}, 173(1):193--214, 2014.

\bibitem{chen2023distributionally}
Dali Chen, Yuwei Wu, Jingquan Li, Xiaohui Ding, and Caihua Chen.
\newblock Distributionally robust mean-absolute deviation portfolio optimization using wasserstein metric.
\newblock {\em Journal of Global Optimization}, 87(2):783--805, 2023.

\bibitem{cheng2020dense}
Dongdong Cheng, Sulan Zhang, and Jinlong Huang.
\newblock Dense members of local cores-based density peaks clustering algorithm.
\newblock {\em Knowledge-Based Systems}, 193:105454, 2020.

\bibitem{cohen2007stability}
David Cohen-Steiner, Herbert Edelsbrunner, and John Harer.
\newblock Stability of persistence diagrams.
\newblock {\em Discrete \& computational geometry}, 37(1):103--120, 2007.

\bibitem{cohen2010lipschitz}
David Cohen-Steiner, Herbert Edelsbrunner, John Harer, and Yuriy Mileyko.
\newblock Lipschitz functions have l p-stable persistence.
\newblock {\em Foundations of computational mathematics}, 10(2):127--139, 2010.

\bibitem{corsaro2019adaptive}
Stefania Corsaro and Valentina De~Simone.
\newblock Adaptive l 1-regularization for short-selling control in portfolio selection.
\newblock {\em Computational Optimization and Applications}, 72(2):457--478, 2019.

\bibitem{dai2018some}
Zhifeng Dai and Fenghua Wen.
\newblock Some improved sparse and stable portfolio optimization problems.
\newblock {\em Finance Research Letters}, 27:46--52, 2018.

\bibitem{demiguel2009generalized}
Victor DeMiguel, Lorenzo Garlappi, Francisco~J Nogales, and Raman Uppal.
\newblock A generalized approach to portfolio optimization: Improving performance by constraining portfolio norms.
\newblock {\em Management Science}, 55(5):798--812, 2009.

\bibitem{demiguel2009optimal}
Victor DeMiguel, Lorenzo Garlappi, and Raman Uppal.
\newblock Optimal versus naive diversification: How inefficient is the 1/n portfolio strategy?
\newblock {\em The Review of Financial Studies}, 22(5):1915--1953, 2009.

\bibitem{dueck2007non}
Delbert Dueck and Brendan~J Frey.
\newblock Non-metric affinity propagation for unsupervised image categorization.
\newblock In {\em 2007 IEEE 11th International Conference on Computer Vision}, pages 1--8. IEEE, 2007.

\bibitem{edelsbrunner2000topological}
Herbert Edelsbrunner, David Letscher, and Afra Zomorodian.
\newblock {Topological persistence and simplification}.
\newblock In {\em Proceedings 41st Annual Symposium on Foundations of Computer Science}, pages 454--463. IEEE, 2000.

\bibitem{embrechts2003using}
Paul Embrechts, Andrea H{\"o}ing, and Alessandro Juri.
\newblock Using copulae to bound the value-at-risk for functions of dependent risks.
\newblock {\em Finance and Stochastics}, 7(2):145--167, 2003.

\bibitem{fan2001variable}
Jianqing Fan and Runze Li.
\newblock Variable selection via nonconcave penalized likelihood and its oracle properties.
\newblock {\em Journal of the American statistical Association}, 96(456):1348--1360, 2001.

\bibitem{fastrich2014cardinality}
Bj{\"o}rn Fastrich, Sandra Paterlini, and Peter Winker.
\newblock Cardinality versus q-norm constraints for index tracking.
\newblock {\em Quantitative Finance}, 14(11):2019--2032, 2014.

\bibitem{ferri2018topology}
Massimo Ferri.
\newblock Why topology for machine learning and knowledge extraction?
\newblock {\em Machine Learning and Knowledge Extraction}, 1(1):115--120, 2018.

\bibitem{giamouridis2010regular}
Daniel Giamouridis and Sandra Paterlini.
\newblock Regular (ized) hedge fund clones.
\newblock {\em Journal of Financial Research}, 33(3):223--247, 2010.

\bibitem{gidea2017topological}
Marian Gidea.
\newblock Topological data analysis of critical transitions in financial networks.
\newblock In {\em 3rd International Winter School and Conference on Network Science: NetSci-X 2017 3}, pages 47--59. Springer, 2017.

\bibitem{gidea2017topology}
Marian Gidea.
\newblock Topology data analysis of critical transitions in financial networks.
\newblock {\em Available at SSRN 2903278}, 2017.

\bibitem{gidea2018topological}
Marian Gidea and Yuri Katz.
\newblock {Topological data analysis of financial time series: Landscapes of crashes}.
\newblock {\em Physica A: Statistical Mechanics and its Applications}, 491:820--834, 2018.

\bibitem{giuzio2018tracking}
Margherita Giuzio, Kay Eichhorn-Schott, Sandra Paterlini, and Vincent Weber.
\newblock Tracking hedge funds returns using sparse clones.
\newblock {\em Annals of Operations Research}, 266:349--371, 2018.

\bibitem{goel2023sparse}
Anubha Goel, Puneet Pasricha, and Juho Kanniainen.
\newblock Sparse index tracking via topological learning.
\newblock {\em arXiv preprint arXiv:2310.09578}, 2023.

\bibitem{goel2020topological}
Anubha Goel, Puneet Pasricha, and Aparna Mehra.
\newblock Topological data analysis in investment decisions.
\newblock {\em Expert Systems with Applications}, 147:113222, 2020.

\bibitem{gunjan2023brief}
Abhishek Gunjan and Siddhartha Bhattacharyya.
\newblock A brief review of portfolio optimization techniques.
\newblock {\em Artificial Intelligence Review}, 56(5):3847--3886, 2023.

\bibitem{he2019robust}
Xiang He, Qi~Wang, and Xuelong Li.
\newblock Robust adaptive graph regularized non-negative matrix factorization.
\newblock {\em IEEE Access}, 7:83101--83110, 2019.

\bibitem{ho2015weighted}
Michael Ho, Zheng Sun, and Jack Xin.
\newblock Weighted elastic net penalized mean-variance portfolio design and computation.
\newblock {\em SIAM Journal on Financial Mathematics}, 6(1):1220--1244, 2015.

\bibitem{hui2005portfolio}
Tak-Kee Hui.
\newblock Portfolio diversification: a factor analysis approach.
\newblock {\em Applied Financial Economics}, 15(12):821--834, 2005.

\bibitem{jain1999data}
Anil~K Jain, M~Narasimha Murty, and Patrick~J Flynn.
\newblock Data clustering: a review.
\newblock {\em ACM Computing Surveys (CSUR)}, 31(3):264--323, 1999.

\bibitem{jobson1981performance}
J~Dave Jobson and Bob~M Korkie.
\newblock Performance hypothesis testing with the sharpe and treynor measures.
\newblock {\em Journal of Finance}, pages 889--908, 1981.

\bibitem{kalayci2020efficient}
Can~B Kalayci, Olcay Polat, and Mehmet~A Akbay.
\newblock An efficient hybrid metaheuristic algorithm for cardinality constrained portfolio optimization.
\newblock {\em Swarm and Evolutionary Computation}, 54:100662, 2020.

\bibitem{karan2021time}
Alperen Karan and Atabey Kaygun.
\newblock Time series classification via topological data analysis.
\newblock {\em Expert Systems with Applications}, 183:115326, 2021.

\bibitem{ko2023novel}
Seungchan Ko and Dowan Koo.
\newblock A novel approach for wafer defect pattern classification based on topological data analysis.
\newblock {\em Expert Systems with Applications}, 231:120765, 2023.

\bibitem{kremer2020sparse}
Philipp~J Kremer, Sangkyun Lee, Ma{\l}gorzata Bogdan, and Sandra Paterlini.
\newblock Sparse portfolio selection via the sorted $\ell_1$-norm.
\newblock {\em Journal of Banking \& Finance}, 110:105687, 2020.

\bibitem{kusano2017kernel}
Genki Kusano, Kenji Fukumizu, and Yasuaki Hiraoka.
\newblock Kernel method for persistence diagrams via kernel embedding and weight factor.
\newblock {\em The Journal of Machine Learning Research}, 18(1):6947--6987, 2017.

\bibitem{lee2020sparse}
Yongjae Lee, Min~Jeong Kim, Jang~Ho Kim, Ju~Ri Jang, and Woo Chang~Kim.
\newblock Sparse and robust portfolio selection via semi-definite relaxation.
\newblock {\em Journal of the Operational Research Society}, 71(5):687--699, 2020.

\bibitem{leon2017clustering}
Diego Le{\'o}n, Arbey Arag{\'o}n, Javier Sandoval, Germ{\'a}n Hern{\'a}ndez, Andr{\'e}s Ar{\'e}valo, and Jaime Ni{\~n}o.
\newblock Clustering algorithms for risk-adjusted portfolio construction.
\newblock {\em Procedia Computer Science}, 108:1334--1343, 2017.

\bibitem{li2022sparse}
Xiao~Peng Li, Zhang-Lei Shi, Chi-Sing Leung, and Hing~Cheung So.
\newblock Sparse index tracking with k-sparsity or $\varepsilon$-deviation constraint via $\ell_0$-norm minimization.
\newblock {\em IEEE Transactions on Neural Networks and Learning Systems}, 2022.

\bibitem{lum2013extracting}
Pek~Y Lum, Gurjeet Singh, Alan Lehman, Tigran Ishkanov, Mikael Vejdemo-Johansson, Muthu Alagappan, John Carlsson, and Gunnar Carlsson.
\newblock Extracting insights from the shape of complex data using topology.
\newblock {\em Scientific reports}, 3(1):1--8, 2013.

\bibitem{macqueen1967some}
James MacQueen et~al.
\newblock Some methods for classification and analysis of multivariate observations.
\newblock In {\em Proceedings of the fifth Berkeley symposium on mathematical statistics and probability}, volume~1, pages 281--297. Oakland, CA, USA, 1967.

\bibitem{majumdar2020clustering}
Sourav Majumdar and Arnab~Kumar Laha.
\newblock Clustering and classification of time series using topological data analysis with applications to finance.
\newblock {\em Expert Systems with Applications}, 162:113868, 2020.

\bibitem{tdapairtrading}
Sourav Majumdar and Arnab~Kumar Laha.
\newblock Pairs trading with topological data analysis.
\newblock {\em International Journal of Theoretical and Applied Finance}, 26(08):2450002, 2023.

\bibitem{mantegna1999hierarchical}
Rosario~N Mantegna.
\newblock Hierarchical structure in financial markets.
\newblock {\em The European Physical Journal B-Condensed Matter and Complex Systems}, 11(1):193--197, 1999.

\bibitem{memmel2003performance}
Christoph Memmel.
\newblock Performance hypothesis testing with the sharpe ratio.
\newblock {\em Available at SSRN 412588}, 2003.

\bibitem{mileyko2011probability}
Yuriy Mileyko, Sayan Mukherjee, and John Harer.
\newblock {Probability measures on the space of persistence diagrams}.
\newblock {\em Inverse Problems}, 27(12):124007, 2011.

\bibitem{moroni2021learning}
Davide Moroni and Maria~Antonietta Pascali.
\newblock Learning topology: bridging computational topology and machine learning.
\newblock {\em Pattern recognition and image analysis}, 31:443--453, 2021.

\bibitem{murzin1995scop}
Alexey~G Murzin, Steven~E Brenner, Tim Hubbard, and Cyrus Chothia.
\newblock Scop: a structural classification of proteins database for the investigation of sequences and structures.
\newblock {\em Journal of Molecular Biology}, 247(4):536--540, 1995.

\bibitem{ofori2021nonparametric}
Dorcas Ofori-Boateng, Yulia~R Gel, and Ivor Cribben.
\newblock Nonparametric anomaly detection on time series of graphs.
\newblock {\em Journal of Computational and Graphical Statistics}, 30(3):756--767, 2021.

\bibitem{onnela2003dynamics}
J-P Onnela, Anirban Chakraborti, Kimmo Kaski, Janos Kertesz, and Antti Kanto.
\newblock Dynamics of market correlations: Taxonomy and portfolio analysis.
\newblock {\em Physical Review E}, 68(5):056110, 2003.

\bibitem{panton1976comovement}
Don~B Panton, V~Parker Lessig, and O~Maurice Joy.
\newblock Comovement of international equity markets: a taxonomic approach.
\newblock {\em Journal of Financial and Quantitative Analysis}, 11(3):415--432, 1976.

\bibitem{paulsen2016introduction}
Vern~I Paulsen and Mrinal Raghupathi.
\newblock {\em An introduction to the theory of reproducing kernel Hilbert spaces}, volume 152.
\newblock Cambridge university press, 2016.

\bibitem{pereira2015persistent}
C{\'{a}}ssio M~M Pereira and Rodrigo~F de~Mello.
\newblock {Persistent homology for time series and spatial data clustering}.
\newblock {\em Expert Systems with Applications}, 42(15-16):6026--6038, 2015.

\bibitem{prigent2007portfolio}
Jean-Luc Prigent.
\newblock {\em Portfolio optimization and performance analysis}.
\newblock CRC Press, 2007.

\bibitem{pun2019linear}
Chi~Seng Pun and Hoi~Ying Wong.
\newblock A linear programming model for selection of sparse high-dimensional multiperiod portfolios.
\newblock {\em European Journal of Operational Research}, 273(2):754--771, 2019.

\bibitem{qiu2020refining}
Wanling Qiu, Simon Rudkin, and Pawe{\l} D{\l}otko.
\newblock Refining understanding of corporate failure through a topological data analysis mapping of altman’s z-score model.
\newblock {\em Expert Systems with Applications}, 156:113475, 2020.

\bibitem{redmond2011affinity}
Patrick Redmond, John~A Trono, and Dave Kronenberg.
\newblock Affinity propagation, and other data clustering techniques, 2011.

\bibitem{rudolf1999linear}
Markus Rudolf, Hans-J{\"u}rgen Wolter, and Heinz Zimmermann.
\newblock A linear model for tracking error minimization.
\newblock {\em Journal of Banking \& Finance}, 23(1):85--103, 1999.

\bibitem{ruiz2009hybrid}
Rub{\'e}n Ruiz-Torrubiano and Alberto Su{\'a}rez.
\newblock A hybrid optimization approach to index tracking.
\newblock {\em Annals of Operations Research}, 166(1):57--71, 2009.

\bibitem{sant2017indexQREF}
Leonardo~R Sant’Anna, Tiago~P Filomena, and Jo{\~a}o~F Caldeira.
\newblock Index tracking and enhanced indexing using cointegration and correlation with endogenous portfolio selection.
\newblock {\em The Quarterly Review of Economics and Finance}, 65:146--157, 2017.

\bibitem{shang2012fast}
Fanhua Shang, LC~Jiao, Jiarong Shi, Fei Wang, and Maoguo Gong.
\newblock Fast affinity propagation clustering: A multilevel approach.
\newblock {\em Pattern Recognition}, 45(1):474--486, 2012.

\bibitem{shi2000normalized}
Jianbo Shi and Jitendra Malik.
\newblock Normalized cuts and image segmentation.
\newblock {\em IEEE Transactions on Pattern Analysis and Machine Intelligence}, 22(8):888--905, 2000.

\bibitem{shi2022cardinality}
Zhang-Lei Shi, Xiao~Peng Li, Chi-Sing Leung, and Hing~Cheung So.
\newblock Cardinality constrained portfolio optimization via alternating direction method of multipliers.
\newblock {\em IEEE Transactions on Neural Networks and Learning Systems}, 35(2):2901--2909, 2022.

\bibitem{shu2020high}
Lianjie Shu, Fangquan Shi, and Guoliang Tian.
\newblock High-dimensional index tracking based on the adaptive elastic net.
\newblock {\em Quantitative Finance}, 20(9):1513--1530, 2020.

\bibitem{takens1981detecting}
F~Takens.
\newblock Detecting strange attractors in uid turbulence.
\newblock {\em Dynamical Systems and Turbulence}, 898:366, 1981.

\bibitem{woodside2011heuristic}
Maria Woodside-Oriakhi, Cormac Lucas, and John~E Beasley.
\newblock Heuristic algorithms for the cardinality constrained efficient frontier.
\newblock {\em European Journal of Operational Research}, 213(3):538--550, 2011.

\bibitem{wu2022topological}
Chengyuan Wu and Carol~Anne Hargreaves.
\newblock Topological machine learning for multivariate time series.
\newblock {\em Journal of Experimental \& Theoretical Artificial Intelligence}, 34(2):311--326, 2022.

\bibitem{wu2024sparse}
Zhongming Wu, Kexin Sun, Zhili Ge, Zhihua Allen-Zhao, and Tieyong Zeng.
\newblock Sparse portfolio optimization via $\ell_1$ over $\ell_2$ regularization.
\newblock {\em European Journal of Operational Research}, 319(3):820--833, 2024.

\bibitem{xu2024efficient}
Wei Xu, Jie Tang, Ka~Fai~Cedric Yiu, and Jian~Wen Peng.
\newblock An efficient global optimal method for cardinality constrained portfolio optimization.
\newblock {\em INFORMS Journal on Computing}, 36(2):690--704, 2024.

\bibitem{yang2008self}
Chun Yang, Xiangrong Zhang, Licheng Jiao, and Gaimei Wang.
\newblock Self-tuning semi-supervised spectral clustering.
\newblock In {\em 2008 International Conference on Computational Intelligence and Security}, volume~1, pages 1--5. IEEE, 2008.

\bibitem{zheng2020diversity}
Yu~Zheng, Timothy~M Hospedales, and Yongxin Yang.
\newblock Diversity and sparsity: A new perspective on index tracking.
\newblock In {\em ICASSP 2020-2020 IEEE International Conference on Acoustics, Speech and Signal Processing (ICASSP)}, pages 1768--1772. IEEE, 2020.

\bibitem{zomorodian2005computing}
Afra Zomorodian and Gunnar Carlsson.
\newblock {Computing persistent homology}.
\newblock {\em Discrete {\&} Computational Geometry}, 33(2):249--274, 2005.

\end{thebibliography}

\singlespacing

\appendix

\renewcommand{\thesubsection}{\Alph{section}.\arabic{subsection}}
\setcounter{table}{0}
\setcounter{figure}{0}
\renewcommand{\thetable}{A.\arabic{table}}
\renewcommand{\thefigure}{A.\arabic{figure}}

\newtheorem{thm}{Theorem}[section]
\newtheorem{theorem_app}[thm]{Theorem}
\section{Affinity Propagation Clustering}

The Affinity Propagation (AP) algorithm follows an iterative process that recursively searches for clusters by exchanging real-valued messages between data points, eventually identifying high-quality exemplars that define the resulting clusters. The AP algorithm starts with a similarity measure \( s(i, k) \) between sample points \( x_i \) and \( x_k \), and begins by treating each data point as a potential candidate for cluster representative, and finds a collection of optimal class representative points based on the preferences, \( s(k, k) \) with a larger value of \( s(k, k) \) denoting the high likelihood of selecting the sample point \( x_k \) as class representatives. Initially, AP considers each sample point with equal probability, i.e., giving \( s(k, k) \), $\forall k$, the same preference value \( P \).\footnote{We chose \( P \) as the median of the similarity matrix, a common choice in the literature, as it yields a moderate number of clusters. Alternative, selecting the minimum of similarity matrix results in a fewer clusters.} The numbers of exemplars are influenced by the preferences as well as the message passing procedure.

To obtain clusters, the AP algorithm primarily exchanges two types of messages: responsibility $r(i,k)$ and availability $a(i,k)$, between sample points, where $r(i,k)$ is sent from data point $x_i$ to potential exemplar $x_k$ depicting the extent to which $x_i$ favors $x_k$ as exemplar, and $a(i,k)$ indicates the suitability of selecting $x_k$ as a exemplar for $x_i$. The algorithm iteratively exchange these messages by updating $r$ and $a$ as follows, eventually determining $k$ as the final cluster center based on the decision matrix \( E \),
\[
r(i, k) \leftarrow s(i, k) - \max_{k' \neq k} \{ a(i, k') + s(i, k') \}
\]
\[
a(i, k) \leftarrow 
\begin{cases} 
\min \left(0, r(k, k) + \sum_{i' \neq \{i, k\}} \max(0, r(i', k)) \right) & \text{if } i \neq k \\
\sum_{i' \neq k} \max(0, r(i', k)) & \text{if } i = k 
\end{cases}
\]
where, $a(i, k)$ is set to be zero initially. A damping coefficient, $\lambda \in [0,1]$, is also used to avoid numerical oscillations and ensure stability and speedy convergence such that
\[
r^{(t+1)}(i, k) \leftarrow (1 - \lambda) r^{(t+1)}(i, k) + \lambda r^{(t)}(i, k)
\]
\[
a^{(t+1)}(i, k) \leftarrow (1 - \lambda) a^{(t+1)}(i, k) + \lambda a^{(t)}(i, k).
\]
That is, in each iteration, each message takes a weighted combination of its value from previous iteration and the current value, and update all responsibilities given availabilities, availabilities given responsibilities and finally combining them to monitor the exemplar decisions and concluding the AP clustering when these decisions did not change for certain number of iterations or a maximum number of iterations have been executed.

\subsection{Case study: Time series clustering}

In this section, we demonstrate the superior performance of the newly introduced distance measures by applying clustering on a standard data set from the machine learning literature. More precisely, we show that the distance measures $AWD$ and $DWD$ that quantify the dissimilarity between persistence diagrams $\Dcal_{x}$ and $\Dcal_{y}$ outperform the conventional Wasserstein distance ($WD$). Furthermore, we show that the proposed distance measure on landscapes $ALD$ and $DLD$ outperform the traditional distance between landscapes ($LD$), defined as $||\lambda_x-\lambda_y||_p$ where $\lambda_x$ and $\lambda_y$ are landscapes corresponding to the PD $\Dcal_{x}$ and $\Dcal_y$. For comparison purposes, we also compare the traditional distances, denoted by $d_1$ and $d_2$, defined in terms of Spearman's rank correlation $\rho_1$ and Pearson's correlation coefficient $\rho_2$, as follows
$d_i(x,y) =\sqrt{2(1-\rho_i(x,y))}~~i\in\{1,2\}.$ Finally, for completeness, we also consider the Euclidean distance $ED$. We evaluate the clustering performance in terms of the accuracy ($ACC$) \cite{he2019robust}, which ranges from 0 to 1, with a higher value indicating better clustering results.

We collect the data set from the UCI machine learning repository that contains 600 examples of control charts with 60 instances each, synthetically generated by the process in  \cite{alcock1999time}. The data consists of six different classes of control charts: Normal and  cyclic, increasing and decreasing trend, upward  and downward shift.

\begin{table}[ht!]
\centering
\caption{ Accuracy of clustering results with different distance metrics. Each value $v$ is to be read as $v \times 10^{-2}$}
\scalebox{0.8}{\begin{tabular}{l|r|r}
\hline
          & \multicolumn{2}{l}{ACC} \\
\hline
        Distance metric & K-med &  APC with Gaussian kernel\\
\hline
WD & 48.3 & 55\\
LD& 36.67 & 28.3\\
AWD& \textbf{60} & \textbf{56.67}\\
ALD& 35 & 45\\
DWD& \textbf{60} & \textbf{61.67}\\
DLD& 31.67 & 33.33\\
$d_1$& 33.3& 15\\
$d_2$& 31.67 & 15\\
ED& 51.67 & 15 \\
           \end{tabular}}%
  \label{clustering_test}%
\end{table}

Table \ref{clustering_test} presents the clustering results based on two standard algorithms, the affinity propagation clustering (APC) with $K$ clusters and the $K$-medoids with the distance measures mentioned above.\footnote{We use Gaussian exponential kernel as similarity measure defined using these distance measures as an input for APC. We fix embedding dimension $d=2$, and we take $\tau=1$ with $p=1$ for calculating the TDA-based distances. Moreover, we focus on the 1-dimensional features, i.e., loops. Further, the hyper-parameter $\sigma^2$ in the kernels is taken to be .01.} We set $K$ equal to six for illustration purposes. We observe that the TDA-based distance measures AWD and DWD outperform the other distance measures, particularly the existing Wasserstein distance. We also compare the accuracy values reported for the same data set in two recent papers \cite{cheng2020dense} and \cite{he2019robust}. The authors proposed a new clustering algorithm and a more accurate distance metric in both papers and compared the performance with different clustering methods. The accuracy values obtained using TDA-based distance measures DWD with affinity propagation clustering and AWD and DWD with K-medoids clustering are higher than the accuracy values reported in \cite{cheng2020dense} which is .58 and in \cite{he2019robust} which is .5886.

\section{Sensitivity to window length}

We also analyze the sensitivity of the clustering structure to the variations in the size of input window length, across different similarity matrices. We conduct this sensitivity analysis over two consecutive periods: September 2020 to August 2021 and September 2021 to August 2022. The study across two different years allows us to evaluate how changes in the input window length influence the clustering structure and assess the stability of the clustering outcomes across different timeframes. In each period, we construct similarity matrices and hence the clusters for varying input window lengths, starting with one month and expanding upto 12 months. For instance, for the first period, we obtain a similarity matrix using data from the month September 2020. We then extend the input window length incrementally: first to cover two months September 2020 to October 2020, then to three months and finally the period of 12 months, September 2020 to August 2021. We repeat this exercise for the second period, September 2021 to August 2022.

\begin{figure}[h!]
\begin{center}
		\includegraphics[scale=.55]{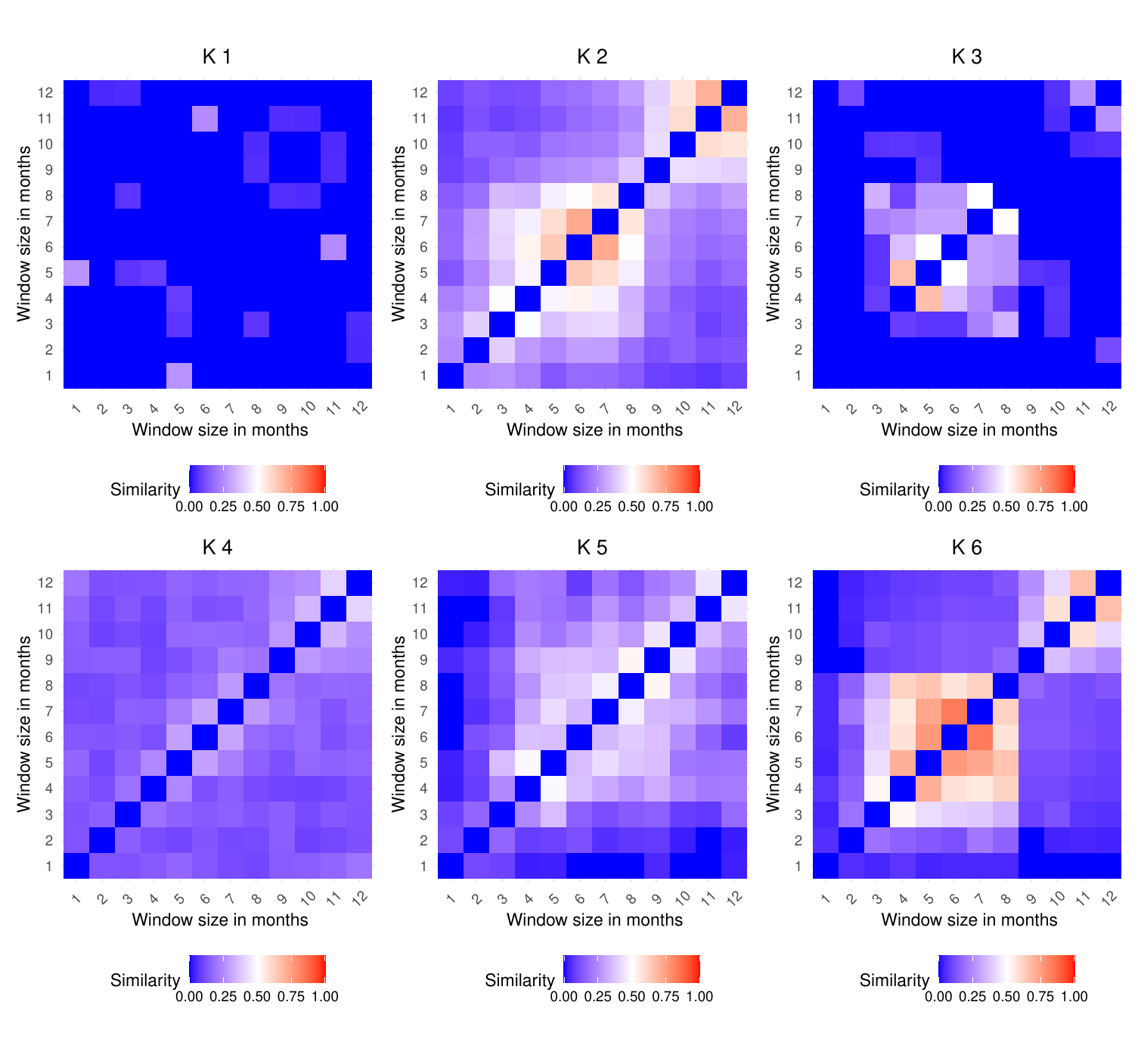}
    	
     \caption{Similarity between the cluster with benchmark across different window sizes with similarity matrices $K_1-K_6$ representing the sensitivity of the clustering with respect to the window size (Period - September 2020 to August 2021 ). Note that we have replaced the diagonal entries, which represent self-similarity and are typically 1, with 0 to enhance the contrast, leading to a more visible similarity patterns off-diagonally.}\label{ws1}
\end{center}
\end{figure}

We analyze the evolution of the clustering structure as the window length increases by calculating the adjusted rand index (ARI) between two clustering structures for two different window lengths. We conduct this analysis for all the six kernels, K1 to K6. Resulting heatmaps are presented in Figures \ref{ws1} and \ref{ws2} for first and second periods respectively. Note that we have replaced the diagonal entries, which represent self-similarity and are typically 1, with 0 to enhance the contrast, leading to a more visible similarity patterns off-diagonally.
\begin{figure}[h!]
\begin{center}
		\includegraphics[scale=.55]{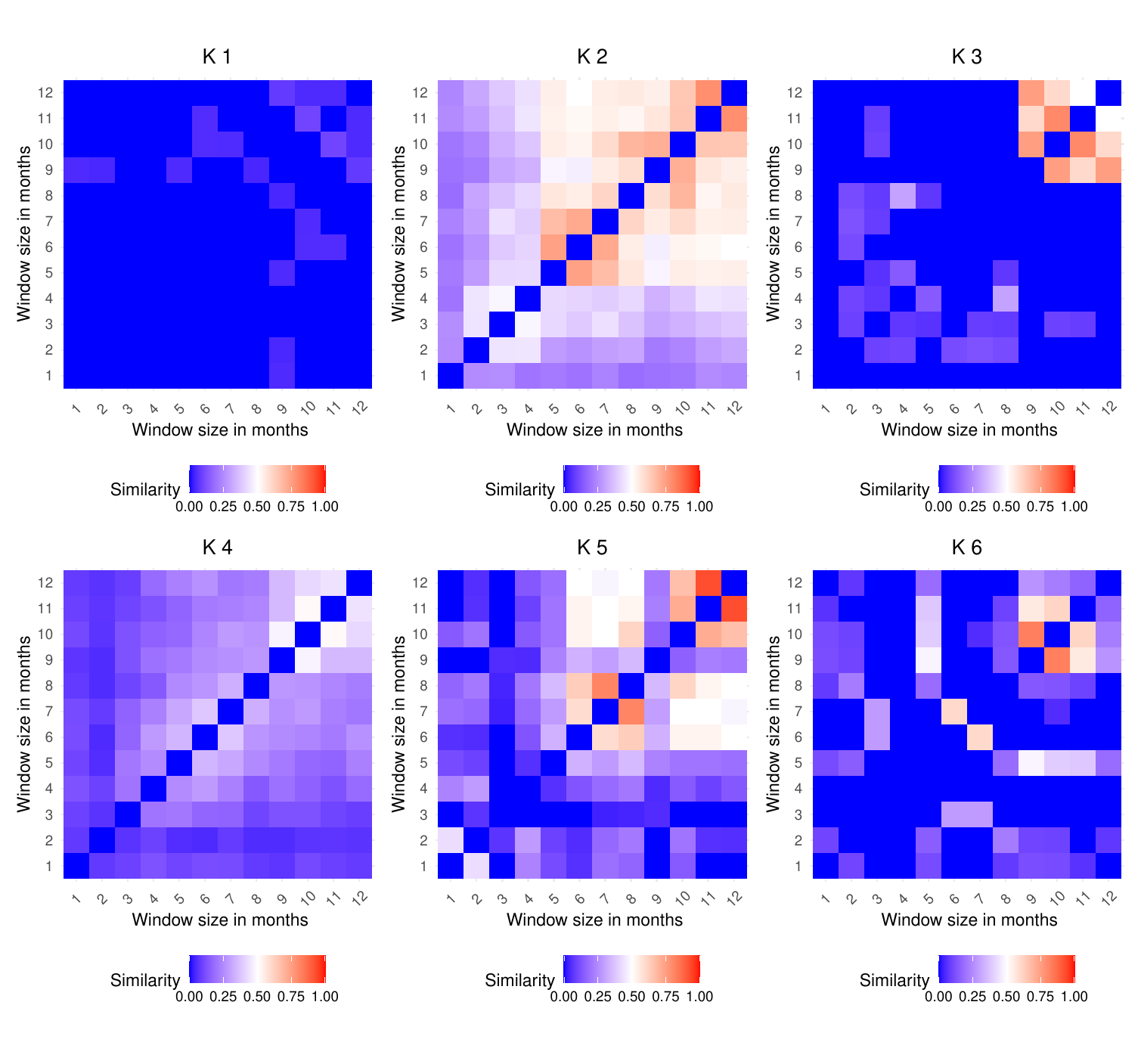}
    	
     \caption{Similarity between the cluster with benchmark across different window sizes with similarity matrices $K_1-K_6$ representing the sensitivity of the clustering with respect to the window size (Period - September 2021 to August 2022). Note that we have replaced the diagonal entries, which represent self-similarity and are typically 1, with 0 to enhance the contrast, leading to a more visible similarity patterns off-diagonally.}\label{ws2}
\end{center}
\end{figure}

Several observations emerge from these heat maps. First, the kernels $K_1$ and $K_3$ exhibit occasional patches across both the time-periods, indicating their high sensitivity to the window size and hence relative low stability over different input window lengths. Second, the kernel $K_4$ shows moderate similarity across varying window lengths, in particular, when window length is greater than 3 months, suggesting presence of robustness in kernel $K_4$. Third, the kernels $K_2$, $K_5$ and $K_6$, show high similarity in adjacent window lengths as well as when the difference in window size is large, signifying their robustness over broader range of the window size. Further, these kernels show two big patches (squares). The kernels $K_2$ and $K_6$ show similar pattern in the first period (Figure \ref{ws1}), while kernels $K_3$ and $K_5$ show similar patterns in the second period (Figure \ref{ws2}). It is remarkable that kernel $K_2$ not only captures what correlation based kernels capture, but also additional features, making it a more comprehensive measure. These findings show that the TDA based kernel $K_2$ and $K_4$ are more robust to the variation in the window length across different time-periods in contrast to correlation based distance measures that show period-dependent behavior.

\section{Analyzing the assets selected for index tracking}\label{similarity_based_selection}

It is essential to understand the characteristics of assets that are selected based on the similarity matrices $K_1-K_7$ for the index tracking problem. Specifically, we focus our study on the standard deviation, VaR, and CVaR of the difference return series, that is, the asset return series minus the index return series. Additionally, we analyze downside deviation from the index, and the correlation with the index.\footnote{VaR of the differenced series is the VaR relative to the index, which is the largest value by which the portfolio return can miss the index target in $1-\alpha$ fraction of cases. Similarly, CVaR relative to an index shows the mean deviation relative to the benchmark in the worst $\alpha$ cases.} Our analysis focuses on the top 60 stocks, identified as the most similar to the index, in each of the 119 windows for all seven similarity matrices. The heat maps in Figures (\ref{ht1}-\ref{ht5}) illustrate the average values of these measures across all windows, providing insights into the properties of the shortlisted stocks.


There are several observations. First, the assets selected using the similarity matrices $K_6$ and $K_5$ have a higher correlation than TDA-based similarity matrices, with $K_6$ achieving the highest correlation, which is not surprising given that the distance measure in this case is based on Pearson's correlation. On the other hand, among TDA-based similarity matrices, $K_2$ and $K_4$ filter assets have a higher correlation with the index than $K_1$, $K_3$. Second, in terms of standard deviation and downside deviation from the index, the similarity matrices $K_2$ and $K_7$ have the lowest values among all kernels. Third, the assets filtered using the similarity matrices $K_2$ and $K_7$ have the lowest values of the VaR and CVaR of the difference series (assets returns and index returns). Other kernels show varying behavior in different windows, but on average, assets picked using $K_3$, $K_5$, and $K_6$ have a higher tail deviation from the index. In conclusion, the assets selected using $K_2$ are most suitable for the index tracking problem, as further confirmed in Section \ref{empirical_results}. The robustness of the kernel $K_2$ is established from Figures \ref{cavht1}-\ref{cavht5}, which shows similar plots but for a randomly chosen window during the COVID-19 period. 

\begin{figure}[ht!]
		\begin{center}
			\subfigure[Correlation with index]{%
    \includegraphics[scale=0.52]{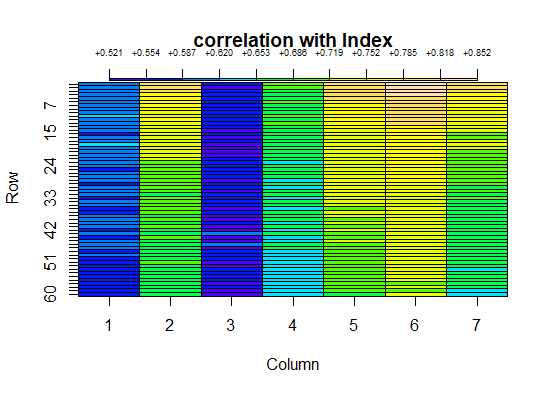}
    \label{ht1}}
\quad
			\subfigure[Standard deviation of difference series]{%
    \includegraphics[scale=0.52]{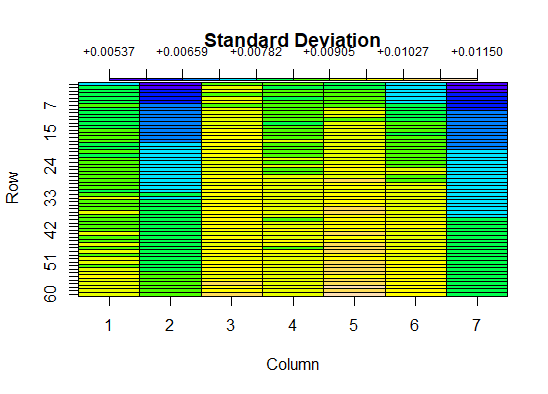}
    \label{ht2}}
\quad			
			\subfigure[Downside deviation of difference series]{%
        \includegraphics[scale=0.52]{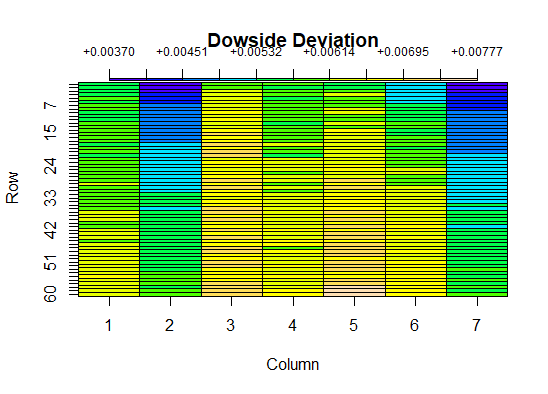}
    \label{ht3}}
\quad
			\subfigure[VaR of difference series]{%
    \includegraphics[scale=0.52]{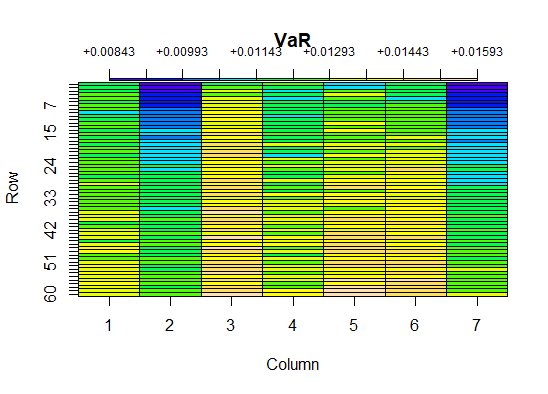}
    \label{ht4}}
\quad			
			\subfigure[CVaR of difference series]{%
\includegraphics[scale=0.52]{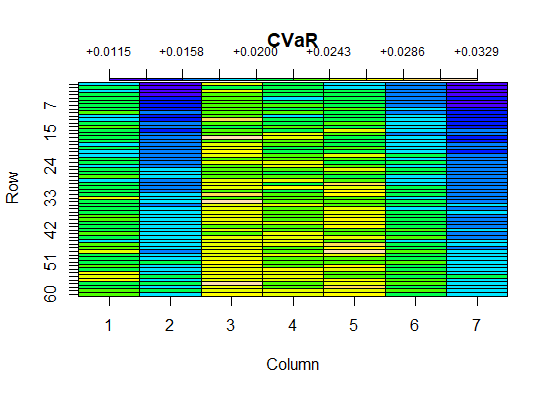}
    \label{ht5}}
   \end{center}
   \caption{Average values of five measures (correlation, standard deviation, downside deviation, VaR, CVaR) across all windows for the top 60 stocks (rows) and for the similarity matrices $K_1-K_7$ (columns)}
		\label{analyzing_assets}
\end{figure}

\begin{figure}[ht!]
		\begin{center}
			\subfigure[Correlation with index]{%
    \includegraphics[scale=0.44]{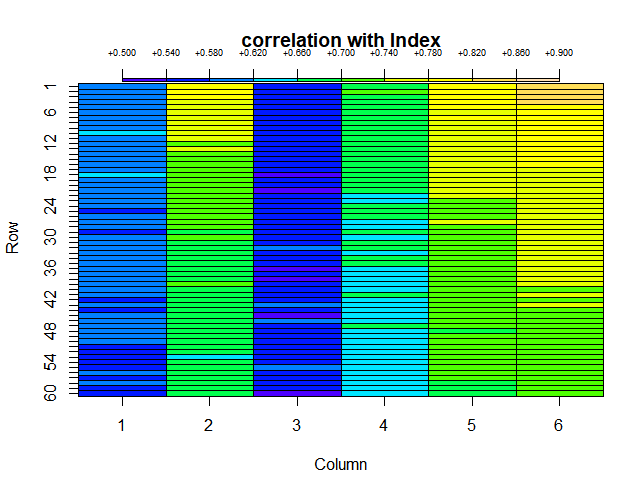}
    \label{cavht1}}
\quad
    \subfigure[Standard deviation of difference series]{
    \includegraphics[scale=0.44]{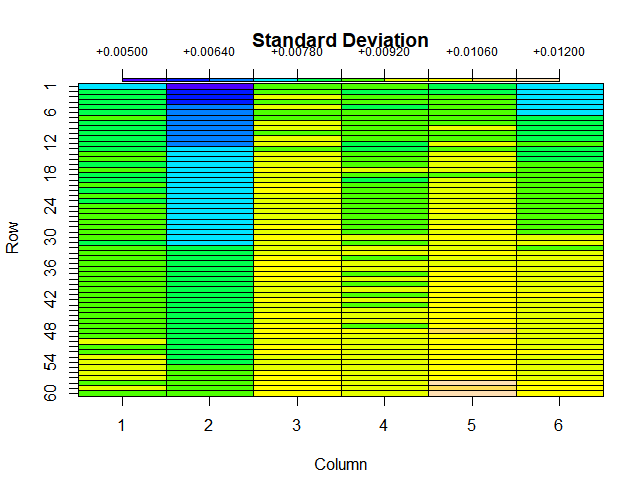}
    \label{cavht2}}
\quad
  \subfigure[Downside deviation of difference series]{
    \includegraphics[scale=0.44]{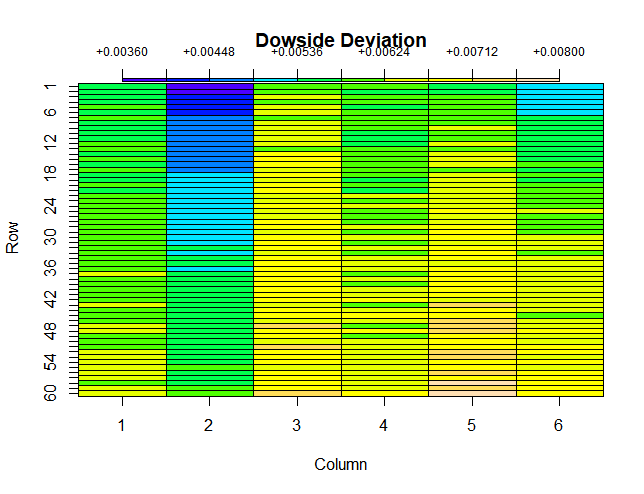}
    \label{cavht3}}
\quad
   \subfigure[VaR of difference series]{
    \includegraphics[scale=0.44]{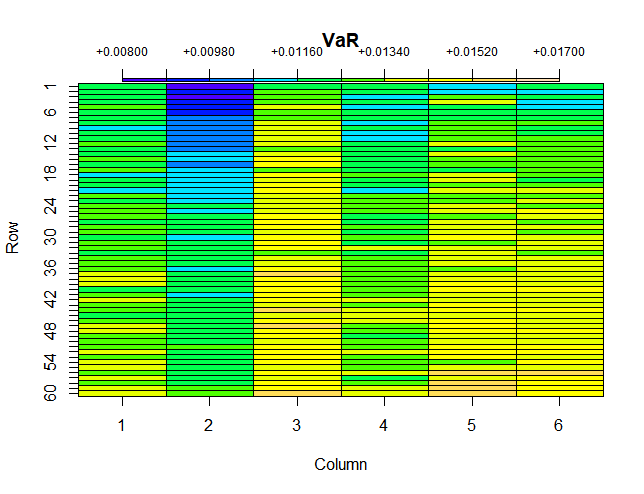}
    \label{cavht4}}
\quad
   \subfigure[CVaR of difference series]{
    \includegraphics[scale=0.44]{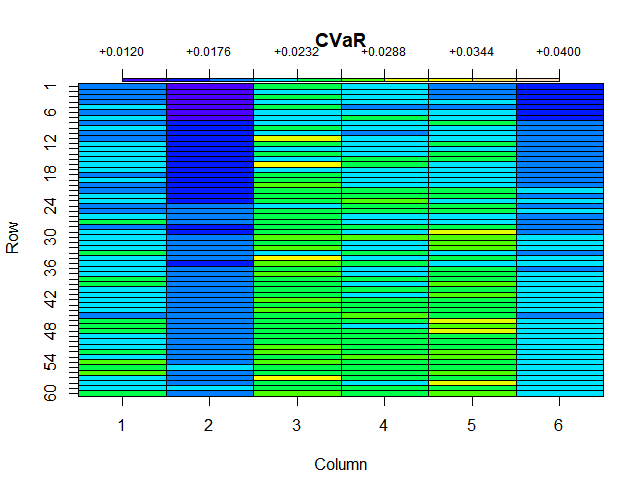}
    \label{cavht5}}
   \end{center}
   \caption{Values of five measures (correlation, standard deviation, downside deviation, VaR, CVaR) for a randomly selected window during the COVID-19 period. The values are plotted for the top 60 stocks (rows) and for the similarity matrices $K_1-K_7$ (columns)}
		\label{analyzing_assets_covid}
\end{figure}

\end{document}